\theoremstyle{plain}
\newtheorem{thm}{Theorem}[subsection]
\newtheorem{lem}[thm]{Lemma}
\newtheorem{prop}[thm]{Proposition}
\newtheorem{cor}[thm]{Corollary}
\theoremstyle{definition}
\newtheorem{defn}[thm]{Definition}
\newtheorem{exmp}[thm]{Example}
\theoremstyle{remark}
\newtheorem{rem}[thm]{Remark}
\begin{document}

%\maketitle
\begin{titlepage}
\begin{center}

{\Huge{\bfseries Topics in\\[.3cm] Stochastic Portfolio Theory}}

\vspace{25mm}

\includegraphics[width=60mm]{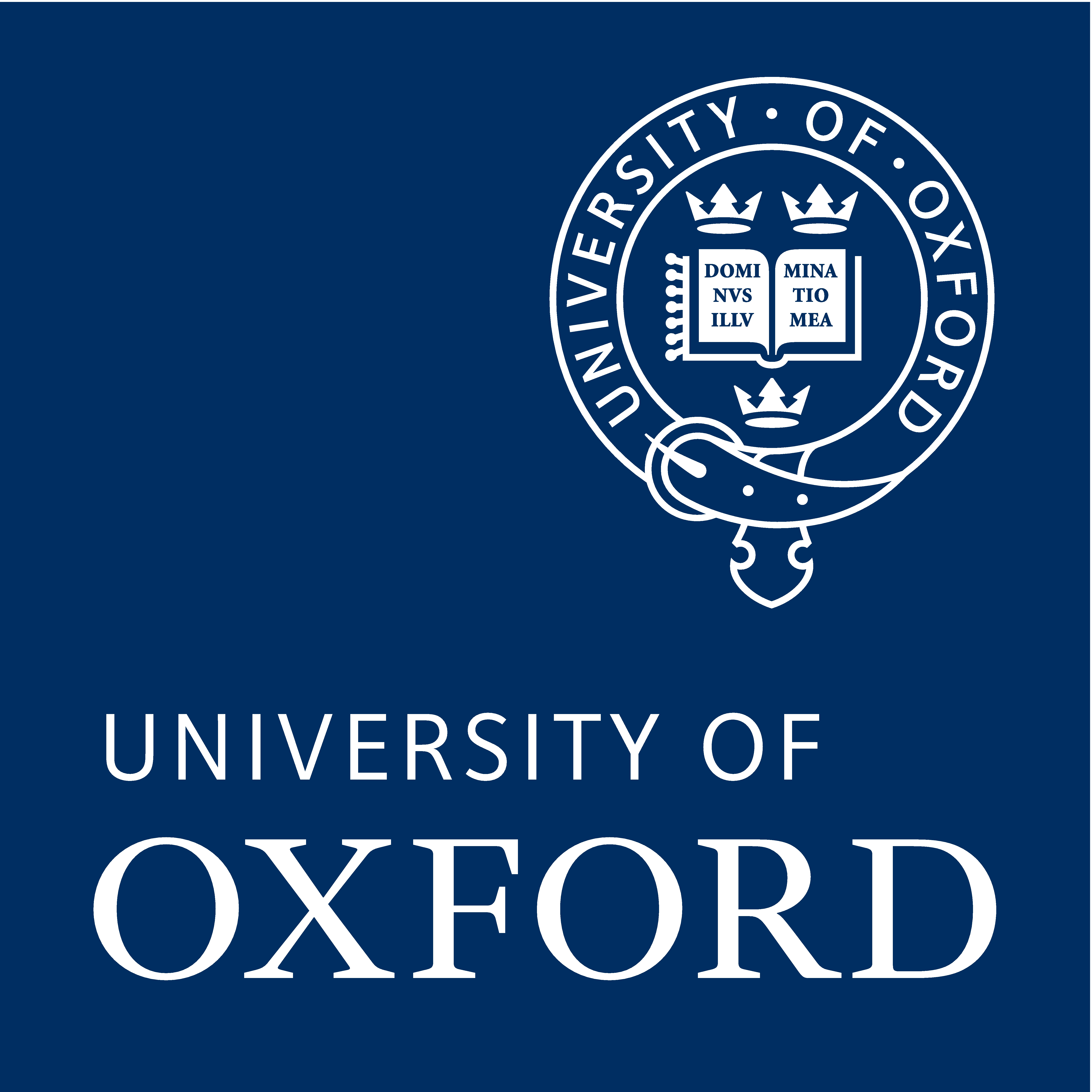}

\vspace{25mm}

{\Large Alexander Vervuurt}\\[0.3cm]
{\large 
    {The Queen's College}}\\[0.3cm]
{\large 
    {University of Oxford}}\\[0.3cm]
\vspace{25mm}

{\large {Thesis submitted for transfer from PRS to DPhil status}}\\[0.6cm]
{\large 
    {31 October 2014}}

\end{center}
\end{titlepage}

\tableofcontents

\section{Introduction}

%include, as in PalWong2013: Fernholz and Shay [FS82] were one of the firsts to provide a mathematical framework via the concept of excess growth rate in a continuous time, continuous price set-up. (regarding why rebalancing gives higher returns)

Stochastic Portfolio Theory (SPT) is a framework in which the \emph{normative} assumptions from `classical' financial mathematics are not made\footnote{See Section 0.1 of \cite{kard08} for a motivation by Kardaras.}, but in which one takes a \emph{descriptive} approach to studying properties of markets that follow from empirical observations. More concretely, one does not assume the existence of an equivalent local martingale measure (ELMM), or, equivalently (by the First Theorem of Asset Pricing as proved by Delbaen and Schachermayer \cite{ds94}), the No Free Lunch with Vanishing Risk (NFLVR) assumption. %There is no \emph{a priori} reason for imposing such conditions on a market model, and there is much evidence against the no arbitrage (NA) condition (and thus against NFLVR) from real market data.CITE(KARDARAS?) 
Instead, in SPT one places oneself in a general It\^o model and assumes only the weaker No Unbounded Profit with Bounded Risk (NUPBR) condition, which was first defined in \cite{fk05}. %and which is equivalent to the existence of a local martingale deflator (LMD) by \cite{kard12}.
The aim then is to find investment strategies which outperform the market in a pathwise fashion, and in particular ones that avoid making assumptions about the expected returns of stocks, which are notoriously difficult to estimate (see \cite{lcgr01}, for example).
SPT was initiated by Robert Fernholz (see \cite{f95}, \cite{f99}, \cite{f01} and the book \cite{f02}), and a major review of the area was made in 2009 by Fernholz and Karatzas in \cite{fk09}. In this review, the authors described the progress made thus far regarding the problem of finding so-called relative arbitrages, and listed several open questions, some of which have been solved since then, and some of which remain unsolved.

The objective taken in the framework of SPT is that of finding investment strategies with a good \emph{pathwise} and \emph{relative} performance compared to the entire market, that is, strategies which almost surely outgrow the market index (usually by a given time); these are portfolios which `beat the market'. Fernholz defines such portfolios as \emph{relative arbitrages}, and constructively proves the existence of such investment opportunities in certain types of markets. These model classes are general It\^o models with additional assumptions on the volatility structure and on the behaviour of the \emph{market weights} of the stocks that the investor is allowed to invest in, i.e.~the ratios of company capitalisations and the total market capitalisation. Several such classes, corresponding to different assumptions on market behaviour (which arise from empirical observations), have been introduced and studied in SPT; these are:
\begin{enumerate}
\item diverse models --- here, the market weights are bounded from above by a number smaller than one, meaning that no single company can capitalise the entire market;
\item `intrinsically volatile' models --- here, a certain process related to the volatility of the entire market (which depends on both the market weights and the volatilities of the stocks) is required to be bounded away from zero;
\item rank-based models --- here, the drift and volatility processes of each stock are made to depend on the stock's rank according to its capitalisation. 
\end{enumerate}

Diversity is clearly observed in real markets, and its validity is guaranteed by the fact that anti-trust regulations are typically in place. This assumption was first studied in detail in the context of SPT by Fernholz, Karatzas and Kardaras \cite{fkk05}, who defined and studied different forms of diversity, and proved that under an additional nondegeneracy condition on the stock volatilities, relative arbitrages exist in such markets --- both over sufficiently long time horizons, as well as over arbitrarily short time horizons.

The property of `sufficient intrinsic volatility' has also been argued to hold for real markets in \cite{f02}. Without any additional assumptions, \cite{fk05} showed that there exists relative arbitrage over sufficiently long time horizons in models with this property, with the size of the time horizon required to beat the market depending on the size of the lower bound for average market volatility. It remains a major open problem whether a relative arbitrage over arbitrarily short time horizons exists in such models --- though it has been shown to exist in some special cases of sufficiently volatile markets, namely volatility-stabilised markets (VSMs; see \cite{bf08}), which have been studied in detail\footnote{See, for instance, \cite{pal11}, in which the dynamics of market weights in VSMs are studied.}, generalised VSMs (see \cite{pic13}), and Markovian intrinsically volatile models (see Proposition 2 and the following Corollary of \cite[pp. 1194--1195]{fernk10}).

Rank-based models were introduced to model the observation that the distribution of capital according to rank by capitalisation has been very stable over the past decades, as illustrated in \cite{f02}. The dynamics of stocks in these models have been studied extensively, but the question of existence of (asymptotic) relative arbitrage has not been addressed yet. A very simple case of a rank-based model, the Atlas model, was introduced and studied in \cite{bfk05}, and an extension was proposed in \cite{ipbkf11}. Large market limits and mean-field versions of this model have been studied. In \cite{f01}, Fernholz first introduced a framework for studying the performance of portfolios which put weights on stocks based on their rank instead of their name, allowing him to theoretically explain certain phenomena observed in real markets.

The main strength of SPT lies in the fact that it does not require any drift estimation, making it much more robust than `classical' approaches to portfolio optimisation, such as mean-variance optimisation or utility maximisation. Crucial in the construction of relative arbitrages are so-called \emph{functionally generated portfolios}, which are portfolios which depend only on the current market weights in a simple way, and are thus very easily implementable (ignoring transaction costs, a crucial caveat).

Although the portfolio selection criterion described above is not one of optimisation, there have been attempts at finding the `best' relative arbitrage by \cite{fernk10} (which gives a characterisation of the optimal relative arbitrage in complete Markovian NUPBR markets) and \cite{fk11} (in which this result is extended to markets with `Knightian' uncertainty). Although not possible in general SPT models, in volatility-stabilised markets the log-optimal or num\'eraire portfolio can be characterised explicitly. First steps towards the optimisation of functionally generated portfolios have been made by Pal and Wong \cite{palw13}.

Besides the above, numerous other topics related to SPT have been studied over the past decade and a half. Some progress has been made regarding the hedging of claims in markets in which NUPBR holds but NFLVR is allowed to fail --- \cite{ruf11} and \cite{ruf13a} show that the cheapest way of hedging a European claim in a Markovian market is to delta hedge, and in \cite{bkx07} the authors solve the problem of valuation and optimal exercise of American call options, resolving an open problem posed in \cite{fk09}.
Furthermore, several articles present and study certain nonequivalent changes of measures with the goal of constructing a market with certain properties: Osterrieder and Rheinl\"ander \cite{or06} create a diverse model this way, and prove the existence of a real arbitrage in this model under a nondegeneracy condition; in \cite{ct13}, Chau and Tankov proceed similarly, but instead change measure to incorporate an investor's belief of a certain event not happening, leading to arbitrage opportunities, of which the authors characterise the one which is optimal in terms of having the largest lower bound on terminal wealth; and in \cite{rr13}, Ruf and Runggaldier describe a systematic way of constructing market models which satisfy NUPBR but in which NFLVR fails.

We discuss these topics in the order in which they are mentioned above. We start our critical literature review with several necessary definitions in Section \ref{secsetup}, followed by a section discussing the relations between the different types of arbitrage --- see Section \ref{secarb}. Section \ref{secdiverse} discusses the literature regarding diversity, Section \ref{secintvol} is about intrinsically volatile models, and Section \ref{secrank} reviews the current state of the field studying rank-based models and coupled diffusions. The remaining sections treat the other topics: Section \ref{secoptarb} treats the developments in portfolio optimisation in SPT, Section \ref{sechedge} discusses the hedging of both European as well as American options in NUPBR markets, and Section \ref{secmeasure} discusses the absolutely continuous changes of measure that have been studied in the articles mentioned earlier. What follows is a section describing research results we have had so far, see Section \ref{secme}, and we finish off with a list of ideas for possible research directions in Section \ref{secdirections}.

\section{Set-up} \label{secsetup}
\subsection{Definitions}
We proceed as in \cite{fk09}, and place ourselves in a general continuous-time It\^o model without frictions (i.e.~there are no transaction costs, trading restrictions, or any other imperfections\footnote{These assumptions of frictionlessness restrict the implementability of this theory, and are important areas of future research --- see Section \ref{realimpl}.}); let the price processes $X_i(\cdot)$ of stocks $i=1,\ldots,n$ under the physical measure $\mathbb{P}$ be given by 
\begin{align} \label{model}
\mathrm{d}X_i(t) &= X_i(t)\Big(b_i(t)\mathrm{d}t + \sum_{\nu=1}^d \sigma_{i\nu}(t) \mathrm{d}W_\nu(t)  \Big),\qquad i=1,\ldots,n\\
X_i(0)&=x_i>0.\nonumber
\end{align}
Here, $W(\cdot)=(W_1(\cdot),\ldots,W_d(\cdot))$ is a $d$-dimensional $\mathbb{P}$-Brownian motion, and we assume $d\geq n$. We furthermore assume our filtration $\mathbb{F}$ to contain the filtration $\mathbb{F}^W$ generated by $W(\cdot)$, and the drift rate processes $b_i(\cdot)$ and matrix-valued volatility process $\sigma(\cdot)=(\sigma_{i\nu}(\cdot))_{i=1,\ldots,n, \nu=1,\ldots,d}$ to be $\mathbb{F}$-progressively measurable and to satisfy the integrability condition $$\sum_{i=1}^n \int_0^T \left( |b_i(t)| + \sum_{\nu=1}^d(\sigma_{i\nu}(t))^2 \right)\mathrm{d}t<\infty\quad \forall T\in(0,\infty).$$ We define the \emph{covariance process} $a(t)=\sigma(t)\sigma'(t)$, with the apostrophe denoting a transpose. Note that $a(\cdot)$ is a positive semi-definite matrix-valued process.
Finally, we assume the existence of a \emph{riskless asset} $X_0(t)\equiv 1,\ \forall t\geq0$; namely, without loss of generality we assume a zero interest rate, by discounting the stock prices by the bond price.

Now, let us consider the log-price processes; by It\^o's formula, we have
\begin{align} \label{logX}
\mathrm{d}\log X_i(t)&=\left(b_i(t)-\frac{1}{2}a_{ii}(t)\right)\mathrm{d}t + \sum_{\nu=1}^d \sigma_{i\nu} \mathrm{d}W_\nu(t) \nonumber \\&=\gamma_i(t)\mathrm{d}t + \sum_{\nu=1}^d \sigma_{i\nu} \mathrm{d}W_\nu(t),
\end{align}
where we have defined the \emph{growth rates} $\gamma_i(t):=b_i(t) - \frac{1}{2}a_{ii}(t)$. This name is justified by the fact that $$\lim_{T\to\infty} \frac{1}{T} \left( \log X_i(T) - \int_0^T \gamma_i(t) \mathrm{d}t  \right)=0\qquad \mathbb{P}\text{-a.s.};$$ see, for instance, Corollary 2.2 of \cite{f99}.

We proceed by defining which investment rules are allowed in our framework.
\begin{defn} \label{portfolio}
Define a \emph{portfolio} as an $\mathbb{F}$-progressively measurable vector process $\pi(\cdot)$, uniformly bounded in $(t,\omega)$, where $\pi_i(t)$ represents the proportion of wealth invested in asset $i$ at time $t$, and satisfying $\sum_{i=1}^n \pi_i(t)=1\ \forall t\geq0$.
We say that $\pi(\cdot)$ is a \emph{long-only} portfolio if $\pi_i(t)\geq0\ \forall i=1,\ldots,n$.
For future reference, we also define the set
\begin{equation} \label{delta+}
\Delta^n_+:=\{ x\in\mathbb{R}^n :  x_i>0\ \forall i=1,\ldots,n \}.
\end{equation}
We denote the \emph{wealth process} of an investor investing according to portfolio $\pi(\cdot)$, with initial wealth $w>0$, by $V^{w,\pi}(\cdot).$
\end{defn}
Note that portfolios are self-financing by definition.
We also define a more general class of investment rules, which we shall call trading strategies.
\begin{defn} \label{tradingstrat} % CHANGE TO $\sum_{i=1}^n \pi_i(t) \neq 1$ ???
A \emph{trading strategy} is an $\mathbb{F}$-progressively measurable process $h(\cdot)$ that takes values in $\mathbb{R}^n$ and satisfies the integrability condition $$\sum_{i=1}^n \int_0^T \Big( |h_i(t)b_i(t)| + h^2_i(t)a_{ii}(t) \Big) \mathrm{d}t< \infty\qquad \mathbb{P}\text{-a.s.}$$
For any $t$, $h_i(t)$ is the amount of money invested in stock $i$. Again, we let $V^{w,h}(\cdot)$ denote the wealth process of an investor following the trading strategy $h(\cdot)$ and starting with initial wealth $w\geq0$. We write $V^{h}(\cdot):=V^{1,h}(\cdot)$. We require $h(\cdot)$ to be $x$-admissible for some $x\geq0$, written as $h(\cdot)\in\mathcal{A}_x$, meaning that $V^{0,h}(t)\geq -x$ $\forall\, t\in[0,T]$ a.s. We shall write $\mathcal{A}:=\mathcal{A}_0$.
\end{defn}
Note that each portfolio generates a trading strategy by setting $h_i(t)=\pi_i(t)V^{w,\pi}(t)\ \forall t\in[0,T].$ We assume the admissibility condition to exclude doubling strategies. On the contrary, one can define a trading strategy $h(\cdot)\in\mathcal{A}_x$ by specifying it as the \emph{proportions} invested in stocks at each time, $\pi_i(t)=h_i(t)/V^{w,h}(t)$, provided that $w>x$ and similarly to a portfolio but with the exception that in general now $\sum_{i=1}^n \pi_i(t)\neq1$; that is, there is a non-zero holding of cash $\pi_0(t)$.

The wealth process associated to a portfolio $\pi(\cdot)$ and initial wealth $w\in\mathbb{R}_+$ can be seen to evolve as
\begin{equation}
\frac{\mathrm{d}V^{w,\pi}(t)}{V^{w,\pi}(t)}=\sum_{i=1}^n \pi_i(t) \frac{\mathrm{d}X_i(t)}{X_i(t)} = b_\pi(t)\mathrm{d}t + \sum_{\nu=1}^d \sigma_{\pi\nu}(t)\mathrm{d}W_\nu(t),
\end{equation}
with the portfolio's rate of return $b_\pi(t):= \sum_{i=1}^n \pi_i(t)b_i(t)$ and its volatility coefficients $\sigma_{\pi\nu}(t):=\sum_{i=1}^n \pi_i(t)\sigma_{i\nu}(t)$ (very slightly abusing notation). Hence we have, by It\^o's formula, that
\begin{align} \label{logV}
\mathrm{d}\log V^{w,\pi}(t)&=\left( b_\pi(t)-\frac{1}{2} \sum_{\nu=1}^d(\sigma_{\pi\nu}(t))^2 \right)\mathrm{d}t + \sum_{\nu=1}^d \sigma_{\pi \nu}(t)\mathrm{d}W_\nu(t)\nonumber\\
&=\gamma_\pi(t)\mathrm{d}t + \sum_{\nu=1}^d \sigma_{\pi \nu}(t)\mathrm{d}W_\nu(t),
\end{align}
where $\gamma_\pi(t):=b_\pi(t)-\frac{1}{2} \sum_{\nu=1}^d(\sigma_{\pi\nu}(t))^2$ is the \emph{growth rate} of the portfolio $\pi$. Note the disappearance of the drift processes from this expression (in \eqref{eqactuseful}); since we may write
\begin{align*}
\gamma_\pi(t)&=\sum_{i=1}^n\pi_i(t)b_i(t)-\frac{1}{2}\sum_{i,j=1}^n\pi_i(t)a_{ij}(t)\pi_j(t)=  \sum_{i=1}^n\pi_i(t)\gamma_i(t) + \gamma^*_\pi(t),
\end{align*}
where the \emph{excess growth rate} is defined as
\begin{equation}\label{defexcess}
\gamma^*_\pi(t):=\frac{1}{2}\left(\sum_{i=1}^n \pi_i(t)a_{ii}(t) - \sum_{i,j=1}^n \pi_i(t) a_{ij}(t) \pi_j(t) \right),
\end{equation}
it follows directly from equations \eqref{logX} and \eqref{logV} that
\begin{equation} \label{eqactuseful}
\mathrm{d}\log V^\pi(t)= \gamma^*_\pi(t)\mathrm{d}t + \sum_{i=1}^n \pi_i(t) \mathrm{d}\log X_i(t),
\end{equation}
which also motivates the nomenclature for $\gamma^*_\pi(\cdot)$.

We define a particular portfolio, the \emph{market portfolio} $\mu(\cdot)$, by
\begin{equation} \label{defXX}
\mu_i(t):=\frac{X_i(t)}{X(t)},\qquad X(t):= \sum_{i=1}^n X_i(t).
\end{equation}
We assume there is only one share per company (or, equivalently, that $X_i(\cdot)$ is the capitalisation process of company $i$), so $\mu_i(t)$ is the \emph{relative market weight} of company $i$ at time $t$. The wealth process associated to the market portfolio is
\begin{equation}
\frac{\mathrm{d}V^{w,\mu}(t)}{V^{w,\mu}(t)}=\sum_{i=1}^n \mu_i(t) \frac{\mathrm{d}X_i(t)}{X_i(t)} =\sum_{i=1}^n  \frac{X_i(t)}{X(t)}\frac{\mathrm{d}X_i(t)}{X_i(t)} = \frac{\mathrm{d} X(t)}{X(t)},
\end{equation}
and hence 
\begin{equation} \label{vx}
V^{w,\mu}(t)=\frac{w}{X(0)}X(t).
\end{equation}
The wealth resulting from the market portfolio is therefore simply equal to a constant times the total market size: $\mu(\cdot)$ is a buy-and-hold strategy. In SPT, one measures the performance of portfolios with respect to the market portfolio (i.e.~one uses the market portfolio as a `benchmark' --- this is similar to the approach taken in the Benchmark Approach to finance, developed by Platen and Heath \cite{ph06}). The market portfolio is therefore of great importance.

Equation \eqref{logV} gives that
\begin{equation}
\mathrm{d}\log V^{w,\mu}(t)=\gamma_\mu(t)\mathrm{d}t + \sum_{\nu=1}^d \sigma_{\mu \nu}(t)\mathrm{d}W_\nu(t),
\end{equation}
which, together with equations \eqref{logX} and \eqref{vx}, gives that 
\begin{equation} \label{dynlogmu}
\mathrm{d}\log \mu_i(t)=\left(\gamma_i(t)-\gamma_\mu(t)  \right)\mathrm{d}t +\sum_{\nu=1}^d \left(\sigma_{i \nu}(t)-\sigma_{\mu \nu}(t)\right)\mathrm{d}W_\nu(t).
\end{equation}
Equivalently, the relative market weights evolve as
\begin{align}\label{dynmu}
\frac{\mathrm{d}\mu_i(t)}{\mu_i(t)}&=\Big(\gamma_i(t)-\gamma_\mu(t) +\frac{1}{2}\sum_{\nu=1}^d \left(\sigma_{i \nu}(t)-\sigma_{\mu \nu}(t)\Big)^2 \right)\mathrm{d}t +\sum_{\nu=1}^d \Big(\sigma_{i \nu}(t)-\sigma_{\mu \nu}(t)\Big)\mathrm{d}W_\nu(t)\nonumber\\
&= \Big(\gamma_i(t)-\gamma_\mu(t) +\frac{1}{2}\tau_{ii}^\mu(t) \Big)\mathrm{d}t +\sum_{\nu=1}^d \Big(\sigma_{i \nu}(t)-\sigma_{\mu \nu}(t)\Big)\mathrm{d}W_\nu(t).
\end{align}
Here, we have defined the matrix-valued \emph{covariance process of the stocks relative to the portfolio $\pi(\cdot)$} as
\begin{align}\label{deftau}
\tau^\pi_{ij}(t):&=\sum_{\nu=1}^d (\sigma_{i \nu}(t)-\sigma_{\pi \nu}(t))(\sigma_{j \nu}(t)-\sigma_{\pi \nu}(t))=(\pi(t)-e_i)'a(t)(\pi(t)-e_j)\nonumber\\
&=a_{ij}(t)-a_{\pi i}(t)-a_{\pi j}(t)+a_{\pi\pi}(t),
\end{align}
where $e_i$ is the $i$-th unit vector in $\mathbb{R}^n$, and
\begin{equation}
a_{\pi i}(t):= \sum_{j=1}^n \pi_j(t)a_{ij}(t),\qquad a_{\pi \pi}(t):= \sum_{i,j=1}^n \pi_i(t)\pi_j(t)a_{ij}(t).
\end{equation}
Note that we have the following relation:
\begin{align} \label{zeroo}
\sum_{j=1}^n \pi_j(t)\tau^\pi_{ij}(t)&=\sum_{j=1}^n \pi_j(t)a_{ij}(t)-a_{\pi i}(t)-\sum_{j=1}^n \pi_j(t)a_{\pi j}(t)+a_{\pi\pi}(t)\nonumber\\
&=0,\qquad  i=1,\ldots,n
\end{align}
since the first two and last two terms cancel each other. Finally, note also that
\begin{equation} \label{dmumu}
\tau^\mu_{ij}(t) = \frac{\mathrm{d}\left<\mu_i,\mu_j\right>(t)}{\mu_i(t)\mu_j(t)\mathrm{d}t},\qquad  1\leq i,j\leq n.
\end{equation}

We now give the definition of a relative arbitrage:
\begin{defn} \label{defRA}
\textbf{(Relative arbitrage)} Let $h(\cdot)$ and $k(\cdot)$ be trading strategies. Then $h(\cdot)$ is called a \emph{relative arbitrage} (RA) over $[0,T]$ with respect to $k(\cdot)$ if their associated wealth processes satisfy $$V^h(T) \geq V^k(T)\ \text{ a.s.,} \qquad \mathbb{P}(V^h(T) > V^k(T))>0.$$
\end{defn}
Usually, we will only consider and construct relative arbitrages using portfolios that do not invest in the riskless asset at all. However, it is also possible to create a RA using a trading strategy that has a non-trivial position in the riskless asset, as we show in the following example, which uses results from Ruf \cite{ruf13a} on hedging European claims in Markovian markets where NA is allowed to fail (see Section \ref{hedgeEU}).

\begin{exmp} \label{rufex}%%%%%%%%%%%%%%%%%%%%%%%%%%%%%%%%%%%%%
Define an auxiliary process $R(\cdot)$ as a Bessel process with drift $-c$, i.e.
$$\textrm{d}R(t)=\Big(\frac{1}{R(t)}-c  \Big) \mathrm{d}t + \mathrm{d}W(t)$$
for $t\in[0,T]$, $c\geq0$ constant and $W(\cdot)$ a BM. We have that the Bessel process $R(\cdot)$ is strictly positive. Define a stock price process by
$$\textrm{d}S(t)=\frac{1}{R(t)} \mathrm{d}t + \mathrm{d}W(t),\qquad S(0)=R(0)>0$$
for $t\in[0,T]$, so $S(t) = R(t)+ct>0\ \forall t\in[0,T]$. The market price of risk is $\theta(t,s)=1/(s-ct)$ for $(t,s)\in[0,T]\times\mathbb{R}_+$ with $s>ct$. 
Thus, by Corrolary 5.2 of \cite{ruf13a}, the reciprocal $1/Z^\theta(\cdot)$ of the local martingale deflator (see Definition \ref{defZ}) hits zero exactly when $S(t)$ hits $ct$.
For a general payoff function $p$, and $(t,s)\in[0,T]\times\mathbb{R}_+$ with $s>ct$, Theorem 5.1 of \cite{ruf13a} implies that a claim paying $p(S(T))$ at time $t=T$ has value function
\begin{align} \label{eq}
h^p(t,s) :&= \mathbb{E}^{t,s}[\tilde{Z}^{\theta,t,s}(T)p(S(T))]\\
&= \mathbb{E}^\mathbb{Q}[ p(S(T))\textbf{1}_{\{ \min_{t\leq u \leq T}\{S(u)-cu\}>0  \}}  \big| \mathcal{F}(t)] \, \Big|_{S(t)=s}\nonumber\\
&= \int_{\frac{cT-s}{\sqrt{T-t}}}^\infty \frac{e^{-z^2/2}}{\sqrt{2\pi}} p(z\sqrt{T-t}+s)\textrm{d}z\\
&\qquad-e^{2c(s-ct)}\int_{\frac{cT-2ct+s}{\sqrt{T-t}}}^\infty \frac{e^{-z^2/2}}{\sqrt{2\pi}} p(z\sqrt{T-t}-s+2ct)\textrm{d}z.\nonumber
\end{align}

Now define another stock price process by
$$\textrm{d}\tilde{S}(t)=-\tilde{S}^2(t)\mathrm{d}W(t),$$
so $\mathbb{P}$ is already a martingale measure for $\tilde{S}(\cdot)$. We have $\tilde{S}(\cdot)=1/S(\cdot)$, with $c=0$, and also $\tilde{\theta}(\cdot)\equiv 0$, so $Z^{\tilde{\theta}}(\cdot)\equiv1$.
Applying It\^{o}' formula, note that
$$\mathrm{d}\log{\tilde{S}(t)}=-\tilde{S}(t)\mathrm{d}W(t)-\frac{1}{2}\tilde{S}^2(t)\mathrm{d}t=\mathrm{d}\log{Z^\theta(t)};$$
hence $\tilde{S}(t)=\tilde{S}(0)Z^\theta(t)$ and $$\tilde{Z}^{\theta,t,s}(T)=\frac{\tilde{S}(T)}{\tilde{S}(t)}\, \Bigg|_{\tilde{S}(t)=1/s}.$$
Thus, using Theorem 4.1 of \cite{ruf13a} and \eqref{eq} with $c=0$, we may compute the hedging price of one unit of this stock as
\begin{align*}
\nu^1(t,s):&=\mathbb{E}^{t,s}[\tilde{Z}^{\tilde{\theta},t,s}(T)\tilde{S}(T)] = \mathbb{E}^{t,s}[\tilde{S}(T)]\\ &= \mathbb{E}^{t,s}[\tilde{Z}^{\theta,t,1/s}(T)\tilde{S}(t)] = s\mathbb{E}^{t,s}[\tilde{Z}^{\theta,t,1/s}(T)]\\
&= s\cdot\left( \int_{\frac{-1/s}{\sqrt{T-t}}}^\infty \frac{e^{-z^2/2}}{\sqrt{2\pi}} \textrm{d}z-\int_{\frac{1/s}{\sqrt{T-t}}}^\infty \frac{e^{-z^2/2}}{\sqrt{2\pi}} \textrm{d}z\right) \\
&= 2s\Phi\left(\frac{1}{s\sqrt{T-t}}\right)-s<s.
\end{align*}
In other words, this stock has a ``bubble''. By Theorem 4.1 of \cite{ruf13a}, the corresponding optimal strategy (expressed in the \emph{number} of stocks the investor holds) is the derivative of the hedging price with respect to $s$, i.e.
$$\eta^1(t,s)=2\Phi\left(\frac{1}{s\sqrt{T-t}}\right)-1-\frac{2}{s\sqrt{T-t}}\phi\left(\frac{1}{s\sqrt{T-t}}\right)<1$$
for $(t,s)\in[0,T)\times\mathbb{R}_+.$

Now $\eta^1(\cdot,\cdot)$ is a relative arbitrage with respect to $\eta^2(t,s):=1$ (i.e.~just holding the stock). Namely, define $\bar{\nu}:=\nu^1(0,\tilde{S}_0)$; since $${V}^{\bar{\nu},\eta^2}(T)=\bar{\nu}\tilde{S}(T)<\tilde{S}(T)={V}^{\bar{\nu},\eta^1}(T)\ \text{ a.s.},$$
we see that $\eta^1(\cdot,\cdot)$ is a relative arbitrage with respect to $\eta^2(\cdot,\cdot)$. However, it is \emph{not} a `real' arbitrage, since for $\hat{\eta}(\cdot,\cdot):=\eta^1(\cdot,\cdot)-\eta^2(\cdot,\cdot)$ we have ${V}^{\bar{\nu},\hat{\eta}}(0)=0$ and ${V}^{\bar{\nu},\hat{\eta}}(T)=(1-\bar{\nu})\tilde{S}(T)>0$, but since $\eta^1(\cdot,\cdot)<1$ for $t\in[0,T)$, we get that $\hat{\eta}(\cdot,\cdot)<0$ for $t\in[0,T)$ and thus the wealth process is unbounded below; i.e.~$\hat{\eta}$ is not admissible.

The holding in the riskless asset $\phi(\cdot)$ corresponding to strategy $\eta^1(\cdot,\cdot)$ can be computed using the self-financing equation $\mathrm{d}{V}=\phi \mathrm{d}B + \eta^1 \mathrm{d}\tilde{S}= \eta^1 \mathrm{d}\tilde{S}$ and ${V}=\phi B + \eta \tilde{S}$, which gives that $$\phi(t)={V}(t)-\eta^1(t,\tilde{S}(t))\tilde{S}(t)=\int_0^t \eta^1(u,\tilde{S}(u))\textrm{d}\tilde{S}(u)-\eta^1(t,\tilde{S}(t))\tilde{S}(t),$$
which can, given the history up to time $t$, be computed. Note that $\phi(\cdot)$ is not Markovian, and is in general non-zero. \qed

\end{exmp}%%%%%%%%%%%%%%%%%%%%%%%%%%%%%%%%%%%%%%%%

%%%%%%%%%%%%%%%%%%%%%%%%%%%%%%%%%%%%%%%%%%%%%%%%%%%%%%%%%%%%%%%%%%%%%%%%%%%%%%%%%%%%%%%%%%%%%%%

\subsection{Derivation of some useful properties} \label{deruseful}
We now give the proofs from \cite{fk09} of two lemmas which will be essential in constructing relative arbitrages later. Let us start by defining the \emph{relative returns process} of stock $i$ with respect to portfolio $\pi(\cdot)$ as
\begin{equation}
R^\pi_i(t):=\log\bigg(\frac{X_i(t)}{V^{w,\pi}(t)} \bigg)\bigg|_{w=X_i(0)}.
\end{equation}
We shall use this process to show that the variance of a stock with respect to a portfolio is always positive, which will be useful in Lemma \ref{lem2}.
\begin{lem} \label{lem1}
We have that $\tau^\pi_{ii}(t)=\frac{\mathrm{d}}{\mathrm{d}t}\left< R^\pi_i \right>(t)\geq 0$.
\end{lem}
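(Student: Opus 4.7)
The plan is to express $R^\pi_i$ as an It\^o process, read off its quadratic variation, and check that this matches $\tau^\pi_{ii}$; nonnegativity will then be immediate because $\tau^\pi_{ii}$ is manifestly a sum of squares.

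First I would use the definition $R^\pi_i(t) = \log X_i(t) - \log V^{w,\pi}(t)\big|_{w = X_i(0)}$ together with equations \eqref{logX} and \eqref{logV} to compute
\[
\mathrm{d}R^\pi_i(t) = \big(\gamma_i(t) - \gamma_\pi(t)\big)\mathrm{d}t + \sum_{\nu=1}^d \big(\sigma_{i\nu}(t) - \sigma_{\pi\nu}(t)\big)\mathrm{d}W_\nu(t).
\]
The initial wealth enters only as an additive constant $\log X_i(0)$, so it drops out when taking stochastic differentials. Since $W(\cdot)$ is a standard $d$-dimensional Brownian motion, the quadratic variation of $R^\pi_i$ is computed from the diffusion coefficients alone:
\[
\frac{\mathrm{d}}{\mathrm{d}t}\langle R^\pi_i\rangle(t) = \sum_{\nu=1}^d \big(\sigma_{i\nu}(t) - \sigma_{\pi\nu}(t)\big)^2.
\]

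Comparing with the definition \eqref{deftau} of $\tau^\pi_{ij}(t)$, setting $i=j$ gives exactly
\[
\tau^\pi_{ii}(t) = \sum_{\nu=1}^d \big(\sigma_{i\nu}(t) - \sigma_{\pi\nu}(t)\big)^2,
\]
which establishes the equality. Nonnegativity follows trivially since the right-hand side is a sum of squares of real-valued processes; equivalently, this can be read off from $\tau^\pi_{ii}(t) = (\pi(t) - e_i)' a(t) (\pi(t) - e_i) \geq 0$ using that $a(\cdot)$ is positive semi-definite. I do not anticipate any genuine obstacle: the result is essentially a bookkeeping exercise in It\^o calculus, and the only point requiring minor care is that the conditioning $w = X_i(0)$ eliminates any contribution from the initial wealth in the stochastic differential.
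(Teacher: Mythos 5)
Your proposal is correct and follows essentially the same route as the paper: compute $\mathrm{d}R^\pi_i$ from \eqref{logX} and \eqref{logV}, read off the quadratic variation from the diffusion coefficients, and identify it with $\tau^\pi_{ii}$ via \eqref{deftau}, with nonnegativity immediate from the sum-of-squares form. No gaps.
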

\begin{proof}
Using equations \eqref{logX} and \eqref{logV}, we get that
$$\textrm{d} R^\pi_i(t)=\left(\gamma_i(t)-\gamma_\pi(t)  \right)\mathrm{d}t +\sum_{\nu=1}^d \left(\sigma_{i \nu}(t)-\sigma_{\pi \nu}(t)\right)\mathrm{d}W_\nu(t).$$
From this and defining equation \eqref{deftau}, we see that $\tau^\pi_{ij}(t)=\frac{\textrm{d}}{\mathrm{d}t}\left< R^\pi_i,R^\pi_j\right>(t),$ and thus $$\tau^\pi_{ii}(t)=\frac{\textrm{d}}{\mathrm{d}t}\left< R^\pi_i \right>(t)\geq 0.$$
\end{proof}
We use this to prove the following, which says that we may simply replace the covariance matrix in \eqref{defexcess} by the matrix of covariances \emph{relative to any portfolio}:
\begin{lem} \label{lem2}
We have the \emph{num\'eraire-invariance} property
\begin{equation}\label{numinv}
\gamma^*_\pi(t)=\frac{1}{2}\Big( \sum_{i=1}^n\pi_i(t)\tau^\rho_{ii}(t)-\sum_{i,j=1}^n \pi_i(t) \pi_j(t) \tau^\rho_{ij}(t) \Big) 
\end{equation}
for any two portfolios $\pi(\cdot)$ and $\rho(\cdot)$. In particular, we have that 
\begin{equation}\label{gamma+}
\gamma^*_\pi(t)= \frac{1}{2}\sum_{i=1}^n \pi_i(t) \tau^\pi_{ii}(t),
\end{equation}
which is non-negative for any long-only portfolio $\pi(\cdot)$.
\end{lem}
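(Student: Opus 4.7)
The proof is essentially a direct algebraic verification using the definition of $\tau^\rho_{ij}$ given in \eqref{deftau}, namely $\tau^\rho_{ij}(t) = a_{ij}(t) - a_{\rho i}(t) - a_{\rho j}(t) + a_{\rho\rho}(t)$, together with the fact that $\sum_i \pi_i(t) = 1$. The plan is to expand both sums on the right-hand side of \eqref{numinv} and show that the terms depending on $\rho$ cancel, leaving precisely $2\gamma^*_\pi(t)$.

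First, I would compute
\[
\sum_{i=1}^n \pi_i(t)\tau^\rho_{ii}(t) = \sum_{i=1}^n \pi_i(t)a_{ii}(t) - 2a_{\pi\rho}(t) + a_{\rho\rho}(t),
\]
using $\sum_i \pi_i a_{\rho i} = a_{\pi\rho}$ and $\sum_i \pi_i = 1$. Similarly, expanding the bilinear sum gives
\[
\sum_{i,j=1}^n \pi_i(t)\pi_j(t)\tau^\rho_{ij}(t) = a_{\pi\pi}(t) - 2a_{\pi\rho}(t) + a_{\rho\rho}(t).
\]
Subtracting these two identities, the $-2a_{\pi\rho}(t)$ and $+a_{\rho\rho}(t)$ terms cancel, leaving $\sum_i \pi_i(t)a_{ii}(t) - a_{\pi\pi}(t)$, which is exactly $2\gamma^*_\pi(t)$ by the definition \eqref{defexcess}. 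Multiplying by $1/2$ yields \eqref{numinv}. This establishes the num\'eraire-invariance assertion.

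For the special case \eqref{gamma+}, I would specialise \eqref{numinv} to $\rho = \pi$ and then invoke the identity \eqref{zeroo}, which gives $\sum_{j} \pi_j(t)\tau^\pi_{ij}(t) = 0$ for each $i$, and hence $\sum_{i,j} \pi_i(t)\pi_j(t)\tau^\pi_{ij}(t) = 0$. Only the first sum survives, producing $\gamma^*_\pi(t) = \tfrac{1}{2}\sum_{i} \pi_i(t)\tau^\pi_{ii}(t)$. Finally, for a long-only portfolio each $\pi_i(t)\geq 0$, and by Lemma~\ref{lem1} each $\tau^\pi_{ii}(t)\geq 0$; so the sum is a non-negative combination of non-negative terms and is therefore non-negative.

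There is no real obstacle here: the only thing to keep track of is the bookkeeping of indices and the repeated use of $\sum_i \pi_i = 1$ and the symmetry $a_{ij} = a_{ji}$. The ``content'' of the lemma lies entirely in recognising that the cross terms involving $\rho$ drop out, so that the excess growth rate can be computed via covariances relative to any reference portfolio — a fact which will be crucial later when choosing $\rho = \mu$ to express $\gamma^*_\pi$ in terms of observable market-weight covariations via \eqref{dmumu}.
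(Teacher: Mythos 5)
Your proof is correct and follows essentially the same route as the paper: expand both sums via the definition \eqref{deftau}, observe that the $\rho$-dependent terms cancel to leave \eqref{defexcess}, then specialise to $\rho=\pi$ using \eqref{zeroo} and conclude non-negativity from Lemma \ref{lem1}. No gaps.
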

\begin{proof}
By definition of $\tau^\rho_{ij}(t)$, equation \eqref{deftau}, we have that
\begin{equation} \label{sumone}
\sum_{i=1}^n\pi_i(t)\tau^\rho_{ii}(t)=\sum_{i=1}^n\pi_i(t)a_{ii}(t)-2\sum_{i=1}^n\pi_i(t)a_{\rho i}(t)+a_{\rho\rho}(t)
\end{equation}
and
\begin{equation} \label{sumtwo}
\sum_{i,j=1}^n \pi_i(t) \pi_j(t) \tau^\rho_{ij}(t)=\sum_{i=1}^n\pi_i(t)\pi_j(t)a_{ij}(t)-2\sum_{i=1}^n\pi_i(t)a_{\rho i}(t)+a_{\rho\rho}(t).
\end{equation}
Putting equations \eqref{sumone} and \eqref{sumtwo} together, we see that
\begin{align*}
\frac{1}{2}\Big( \sum_{i=1}^n\pi_i(t)\tau^\rho_{ii}(t)-\sum_{i,j=1}^n \pi_i(t) \pi_j(t) \tau^\rho_{ij}(t) \Big)&=\frac{1}{2}\Big( \sum_{i=1}^n\pi_i(t)a_{ii}(t)-\sum_{i,j=1}^n \pi_i(t) \pi_j(t) a_{ij}(t) \Big)\\&=\gamma^*_\pi(t)
\end{align*}
by definition \eqref{defexcess}, proving the first statement.
Now, choosing $\rho(\cdot)=\pi(\cdot)$, and recalling relation \eqref{zeroo}, we see that we may write the excess growth rate as
\begin{equation} \label{gammaWA}
\gamma^*_\pi(t)=\frac{1}{2} \sum_{i=1}^n\pi_i(t)\tau^\pi_{ii}(t);
\end{equation}
that is, as the weighted average of the stocks' variances relative to $\pi(\cdot)$.
Finally, using equation \eqref{gammaWA}, Definition \ref{portfolio} and Lemma \ref{lem1}, we conclude that for all long-only portfolios we have $\gamma^*_\pi(t)\geq0.$ 
\end{proof}

%For relative return process $R^\pi_i(t):= \log\left(\frac{X_i(t)}{V^{w,\pi}(t)} \right)\Big|_{w=X_i(0)}$ we have $$\mathrm{d}\log \left(\frac{V^\pi(t)}{V^\rho(t)} \right) = \gamma^*_\pi(t)\mathrm{d}t + \sum_{i=1}^n \pi_i(t) d R^\rho_i(t).$$

Note that for $\pi(\cdot)=\mu(\cdot)$, we get from equation \eqref{gammaWA} that the excess growth rate of the market portfolio is
\begin{equation} \label{intvol}
\gamma^*_\mu(t)=\frac{1}{2}\sum_{i=1}^n \mu_i(t) \tau^\mu_{ii}(t),
\end{equation}
namely the weighted average of the stocks' variances relative to the market. This is interpreted as a measure of the `intrinsic volatility' of the market.

As this will be useful later, let us introduce some notation:
\begin{defn} \label{defrevorder}
We shall use the reverse-order-statistics notation, defined by
\begin{align}\label{deforder}
\theta_{(1)}(t)&:=\max_{1\leq i \leq n}\{\theta_i(t)\}\nonumber \\
\theta_{(i)}(t)&:=\max \big( \{\theta_1(t),\ldots,\theta_n(t)  \}\setminus\{ \theta_{(1)}(t),\ldots,\theta_{(i-1)}(t) \} \big), \quad i=2,\ldots,n
\end{align}
for any $\mathbb{R}^n$-valued process $\theta(\cdot)$. Thus we have
\begin{equation} 
 \theta_{(1)}(t)\geq \theta_{(2)}(t)\geq \ldots \geq \theta_{(n)}(t).
\end{equation}
\end{defn}

\subsection{Functionally generated portfolios} \label{fgportfolios}
The biggest advantage of SPT over classical approaches to constructing well-performing portfolios is that in general it does not require estimation of the drifts or volatilities of the stocks. The machinery of SPT, i.e., the way in which virtually all relative arbitrages are constructed, involves what Fernholz (see Definition 3.1 in \cite{f95}) has called \emph{functionally generated portfolios} (FGPs):\footnote{We denote the derivative with respect to the $i$th coordinate by $D_i$.}
\begin{defn} \label{defFGP}
Let $U\subset \Delta^n_+$ be a given open set. Call $\mathbf{G}\in\mathcal{C}^2(U,(0,\infty))$ a \emph{generating function} for the portfolio $\pi(\cdot)$ if $\mathbf{G}$ is such that $x\mapsto x_iD_i\log\mathbf{G}(x)$ is bounded on $U$, and if there exists a measurable, adapted process $\mathfrak{g}(\cdot)$ such that 
\begin{equation} \label{defeqfgp}
\mathrm{d}\log \left(\frac{V^\pi(t)}{V^\mu(t)}\right)=\mathrm{d}\log\mathbf{G}(\mu(t))+ \mathfrak{g}(t)\mathrm{d}t, \qquad \forall t\geq0,\quad \text{a.s.}
\end{equation}
\end{defn}
We can interpret the above equation as follows: the process measuring the performance of the portfolio $\pi(\cdot)$ relative to the market (the LHS of \eqref{defeqfgp}) can be decomposed into a stochastic part of infinite variation, written as a deterministic function of the market weights process, plus a finite variation part $\mathfrak{g}(t)\mathrm{d}t$.
In fact, Theorem 3.1 of \cite{f95} shows that Definition \ref{defFGP} is equivalent to the following:
\begin{prop} \label{propfern}
Let a function $\mathbf{G}$ as in Definition \ref{defFGP} generate the portfolio $\pi(\cdot)$. Then we have the following expression, for $i=1,\ldots,n$:
\begin{equation} \label{deffgp}
\pi_i(t)=\Big( D_i\log\mathbf{G}(\mu(t)) + 1 - \sum_{j=1}^n \mu_j(t)D_j\log\mathbf{G}(\mu(t)) \Big)\cdot \mu_i(t).
\end{equation}
\end{prop}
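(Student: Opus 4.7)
The plan is to prove formula \eqref{deffgp} by matching the martingale parts of the two sides of the defining identity \eqref{defeqfgp}, after expressing both as stochastic integrals against $\mathrm{d}\log\mu_i(t)$ plus finite-variation drifts.

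First, I would rewrite the left-hand side of \eqref{defeqfgp}. Applying \eqref{eqactuseful} to $\pi(\cdot)$ and $\mu(\cdot)$ separately and subtracting gives
\begin{equation*}
\mathrm{d}\log\!\bigl(V^\pi(t)/V^\mu(t)\bigr)=\bigl(\gamma^*_\pi(t)-\gamma^*_\mu(t)\bigr)\mathrm{d}t+\sum_{i=1}^n\bigl(\pi_i(t)-\mu_i(t)\bigr)\mathrm{d}\log X_i(t).
\end{equation*}
Since $\log\mu_i=\log X_i-\log V^\mu$ (up to an additive constant) and $\sum_i\pi_i(t)=\sum_i\mu_i(t)=1$, the $\mathrm{d}\log V^\mu$ contributions cancel, so $\mathrm{d}\log X_i$ can be replaced by $\mathrm{d}\log\mu_i$ in the last sum without picking up any extra term.

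Second, for the right-hand side, I would apply It\^o's formula to $\log\mathbf{G}(\mu(t))$ and then convert $\mathrm{d}\mu_i(t)=\mu_i(t)\mathrm{d}\log\mu_i(t)+\tfrac{1}{2}\mu_i(t)\tau^\mu_{ii}(t)\mathrm{d}t$ using \eqref{dmumu}, yielding
\begin{equation*}
\mathrm{d}\log\mathbf{G}(\mu(t))=\sum_{i=1}^n \mu_i(t)\,D_i\log\mathbf{G}(\mu(t))\,\mathrm{d}\log\mu_i(t)+\Theta(t)\mathrm{d}t,
\end{equation*}
where $\Theta(t)$ collects every finite-variation contribution (the first-order It\^o correction and the Hessian terms integrated against $\mathrm{d}\langle\mu_i,\mu_j\rangle$); the explicit form of $\Theta(t)$ is immaterial for identifying $\pi_i$, but it would eventually produce the formula for $\mathfrak{g}(t)$.

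By \eqref{defeqfgp} the drift parts of the two expansions differ only by $\mathfrak{g}(t)\mathrm{d}t$, so the stochastic integrals must agree. Since the martingale part of $\mathrm{d}\log\mu_i(t)$ is $\sum_\nu(\sigma_{i\nu}(t)-\sigma_{\mu\nu}(t))\mathrm{d}W_\nu(t)$, matching integrands against each $\mathrm{d}W_\nu$ gives
\begin{equation*}
\sum_{i=1}^n c_i(t)\bigl(\sigma_{i\nu}(t)-\sigma_{\mu\nu}(t)\bigr)=0,\qquad \nu=1,\ldots,d,
\end{equation*}
with $c_i(t):=\pi_i(t)-\mu_i(t)-\mu_i(t)D_i\log\mathbf{G}(\mu(t))$. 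The one real subtlety lies here: the identity $\sum_i\mu_i(\sigma_{i\nu}-\sigma_{\mu\nu})=0$ shows that $\mu$ is always in the kernel of the relative volatility matrix, so under the (implicit) assumption that this matrix has maximal rank $n-1$, the process $c(\cdot)$ must be a scalar multiple $\lambda(\cdot)\mu(\cdot)$. Summing $c_i=\lambda\mu_i$ over $i$ and invoking $\sum_i\pi_i=1$ pins down $\lambda=-\sum_j\mu_j D_j\log\mathbf{G}(\mu)$, and substituting back rearranges immediately to \eqref{deffgp}.
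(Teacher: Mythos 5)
Your argument is sound, and it is worth noting up front that the paper itself does not prove this direction at all: it quotes Proposition \ref{propfern} as Theorem 3.1 of \cite{f95} and only proves the converse implication (formula \eqref{deffgp} implies the decomposition \eqref{defeqfgp}) as Lemma \ref{master}. Your route --- expand both sides of \eqref{defeqfgp} as stochastic integrals against $\mathrm{d}\log\mu_i$ and match the local-martingale parts --- is the natural way to recover the weights, and your algebra is correct: the cancellation of the $\mathrm{d}\log V^\mu$ contributions uses $\sum_i\pi_i=\sum_i\mu_i=1$, the identification $c(\cdot)=\lambda(\cdot)\mu(\cdot)$ is forced once the left kernel of $(\sigma_{i\nu}-\sigma_{\mu\nu})_{i,\nu}$ is one-dimensional, and summing over $i$ pins down $\lambda=-\sum_j\mu_j D_j\log\mathbf{G}(\mu)$. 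The one point you flag as a subtlety is a genuine hypothesis rather than a cosmetic one: in the paper's general It\^o model the matrix $(I-\mathbf{1}\mu'(t))\sigma(t)$ need not have rank $n-1$, and when it does not, the decomposition \eqref{defeqfgp} does not determine $\pi(\cdot)$ uniquely --- one can add to $\pi(\cdot)$ any zero-sum vector in its left kernel without changing either side of \eqref{defeqfgp} --- so the implication can fail as literally written. Under the nondegeneracy condition \eqref{ND}, $\sigma'(t)$ is injective, the left kernel is exactly the span of $\mu(t)$, and your argument closes; this is the setting of \cite{f95}. You should state that hypothesis explicitly (or remark that \eqref{deffgp} then singles out the canonical representative among the portfolios generated by $\mathbf{G}$), but otherwise the proof is complete and supplies a step the paper leaves to the literature.
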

Note that this defines a portfolio indeed, in particular, $\sum_{i=1}^n\pi(t) =1$.
We present the proof of the reverse direction to Proposition \ref{propfern}, as given in \cite{fk09}.
\begin{lem} \label{master}
For a portfolio $\pi(\cdot)$ satisfying \eqref{deffgp}, we have that $\pi(\cdot)$ is generated by $\mathbf{G}$, i.e.
\begin{equation} \label{eqmaster} 
\log\left(\frac{V^\pi(T)}{V^\mu(T)} \right) =\log \left(\frac{\textnormal{\textbf{G}}(\mu(T))}{\textnormal{\textbf{G}}(\mu(0))} \right) + \int_0^T \mathfrak{g}(t)\mathrm{d}t\ \textnormal{ a.s.},
\end{equation}
where 
\begin{equation} \label{defdrift}
\mathfrak{g}(t):= \frac{-1}{2\textnormal{\textbf{G}}(\mu(t))}\sum_{i,j=1}^n D_{ij}^2 \textnormal{\textbf{G}}(\mu(t))\mu_i(t)\mu_j(t)\tau^\mu_{ij}(t)
\end{equation} 
is called the \emph{drift process}.
\end{lem}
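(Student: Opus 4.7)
The plan is to apply It\^o's formula to $\log \mathbf{G}(\mu(t))$, observe that the Hessian-of-$\mathbf{G}$ piece is exactly $-\mathfrak{g}(t)\mathrm{d}t$, and then match the leftover terms against $\mathrm{d}\log(V^\pi(t)/V^\mu(t))$ computed via \eqref{eqactuseful} and \eqref{deffgp}.

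First I would expand
$$\mathrm{d}\log\mathbf{G}(\mu(t))=\sum_i D_i\log\mathbf{G}(\mu(t))\,\mathrm{d}\mu_i(t)+\tfrac12\sum_{i,j}D^2_{ij}\log\mathbf{G}(\mu(t))\,\mathrm{d}\langle\mu_i,\mu_j\rangle(t),$$
using $D^2_{ij}\log\mathbf{G}=D^2_{ij}\mathbf{G}/\mathbf{G}-D_i\log\mathbf{G}\cdot D_j\log\mathbf{G}$ and relation \eqref{dmumu}, $\mathrm{d}\langle\mu_i,\mu_j\rangle=\mu_i\mu_j\tau^\mu_{ij}\mathrm{d}t$. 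By definition \eqref{defdrift} the term coming from $D^2_{ij}\mathbf{G}/\mathbf{G}$ is precisely $-\mathfrak{g}(t)\mathrm{d}t$, so \eqref{eqmaster} (in differential form) reduces to showing
$$\mathrm{d}\log\frac{V^\pi(t)}{V^\mu(t)}=\sum_i D_i\log\mathbf{G}(\mu(t))\,\mathrm{d}\mu_i(t)-\tfrac12\sum_{i,j}D_i\log\mathbf{G}(\mu(t))\,D_j\log\mathbf{G}(\mu(t))\,\mu_i(t)\mu_j(t)\tau^\mu_{ij}(t)\,\mathrm{d}t.$$

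For the left-hand side I would apply \eqref{eqactuseful} to both $\pi(\cdot)$ and $\mu(\cdot)$; since $\log\mu_i=\log X_i-\log V^\mu+\mathrm{const}$ and $\sum_i(\pi_i-\mu_i)=0$, the $\mathrm{d}\log V^\mu$ pieces cancel and one obtains
$$\mathrm{d}\log\frac{V^\pi(t)}{V^\mu(t)}=\big(\gamma^*_\pi(t)-\gamma^*_\mu(t)\big)\mathrm{d}t+\sum_{i}\big(\pi_i(t)-\mu_i(t)\big)\mathrm{d}\log\mu_i(t).$$
Writing $p_i:=D_i\log\mathbf{G}(\mu)$ and $C:=\sum_j\mu_jp_j$, formula \eqref{deffgp} gives the clean identity $\pi_i-\mu_i=\mu_i(p_i-C)$. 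Converting $\mathrm{d}\log\mu_i=\mathrm{d}\mu_i/\mu_i-\tfrac12\tau^\mu_{ii}\mathrm{d}t$ and invoking Lemma \ref{lem2} (with $\rho=\mu$) to evaluate $\gamma^*_\pi$ in terms of $\tau^\mu_{ij}$, the terms involving $\sum_i\mu_ip_i\tau^\mu_{ii}$ and $C\gamma^*_\mu$ cancel in pairs, leaving exactly the required right-hand side. Integrating from $0$ to $T$ yields \eqref{eqmaster}.

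The main technical obstacle is the bookkeeping in the last step: the two sides look genuinely different, and the reason they agree is the identity \eqref{zeroo}, $\sum_j\mu_j\tau^\mu_{ij}=0$, which forces the constant $C$ to drop out of all quadratic expressions built from $\pi-\mu=\mu\cdot(p-C)$ against $\tau^\mu$. Without this \emph{num\'eraire-invariance} type identity the cancellations simply would not close; everything else is routine It\^o calculus and rearrangement.
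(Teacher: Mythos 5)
Your proof is correct and follows essentially the same route as the paper's: It\^o's formula applied to $\log\mathbf{G}(\mu(\cdot))$, identification of the Hessian term with $-\mathfrak{g}(t)\,\mathrm{d}t$, and matching of the remaining first-order terms against $\mathrm{d}\log(V^\pi(t)/V^\mu(t))$ using \eqref{eqactuseful}, the num\'eraire-invariance property \eqref{numinv} and the identity \eqref{zeroo}. The only cosmetic difference is that you track the difference $\pi_i-\mu_i=\mu_i(p_i-C)$ directly, whereas the paper first establishes the intermediate formula \eqref{usefe} in terms of $\pi_i/\mu_i$ and $\pi_i\pi_j\tau^\mu_{ij}$ and then substitutes \eqref{deffgp}.
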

\begin{proof}
\textbf{Step I} First, let us prove a useful expression for the term on the LHS of \eqref{eqmaster}, namely equation \eqref{usefe}. In general, we have from equation \eqref{eqactuseful} that 
\begin{equation}
\mathrm{d}\log\left(\frac{V^\pi(T)}{V^\rho(T)} \right) = \gamma^*_\pi(t)\mathrm{d}t + \sum_{i=1}^n \pi_i(t) \mathrm{d}\log\left(\frac{ X_i(t)}{ V^\rho(T)}\right).
\end{equation}
Setting $\rho(\cdot)=\mu(\cdot)$ and recalling equation \eqref{vx}, this becomes
\begin{equation}
\mathrm{d}\log\left(\frac{V^\pi(T)}{V^\mu(T)} \right) = \gamma^*_\pi(t)\mathrm{d}t + \sum_{i=1}^n \pi_i(t) \mathrm{d}\log\mu_i(t).
\end{equation}
Now, recall expressions \eqref{dynlogmu} and \eqref{dynmu} for the dynamics of $\log \mu_i(\cdot)$ and $\mu_i(\cdot)$, respectively, and apply the num\'eraire-invariance property \eqref{numinv} to get
\begin{align}\label{usefe}
\mathrm{d}\log\left(\frac{V^\pi(T)}{V^\mu(T)} \right) = \sum_{i=1}^n \frac{\pi_i(t)}{\mu_i(t)}\mathrm{d}\mu_i(t)-\frac{1}{2} \sum_{i,j=1}^n {\pi_i(t)\pi_j(t)}\tau^{\mu}_{ij}(t)\, \mathrm{d}t.
\end{align} 
%(MORE?)

\textbf{Step II} In order for us to relate $V^\pi(T)/V^\mu(T)$ to $\mathbf{G}(\mu(0))$ and $\mathbf{G}(\mu(T))$, we need to derive a useful expression for the dynamics of $\log \mathbf{G}(\mu(\cdot))$. Note the relation $$D^2_{ij}\log \mathbf{G}(\mu(t))=\frac{D^2_{ij}\mathbf{G}(\mu(t))}{\mathbf{G}(\mu(t))}-D_i \mathbf{G}(\mu(t))\cdot D_j\mathbf{G}(\mu(t))$$ and introduce the notation $g_i(t):=D_i\log \mathbf{G}(\mu(t))$, $N(t):=1-\sum_{j=1}^n \mu_j(t)g_j(t)$; then we have, using relation \eqref{dmumu}, that
\begin{align}\label{dlogG}
\mathrm{d}\log \mathbf{G}(\mu(t))&= \sum_{i=1}^n g_i(t)\mathrm{d}\mu_i(t) + \frac{1}{2}\sum_{i,j=1}^n D^2_{ij}\log \mathbf{G}(\mu(t))\, \mathrm{d}\left<\mu_i,\mu_j\right>(t)\\
&=  \sum_{i=1}^n g_i(t)\mathrm{d}\mu_i(t) + \frac{1}{2}\sum_{i,j=1}^n \left( \frac{D^2_{ij}\mathbf{G}(\mu(t))}{\mathbf{G}(\mu(t))}-g_i(t)g_j(t) \right) \mu_i(t)\mu_j(t)\tau^\mu_{ij}(t)\, \mathrm{d}t.\nonumber
\end{align} 

\textbf{Step III} Finally, note that with our temporary notation, the defining equation \eqref{deffgp} becomes $\pi_i(t)=(g_i(t)+N(t))\mu_i(t)$; we compute 
\begin{equation}
\sum_{i=1}^n \frac{\pi_i(t)}{\mu_i(t)}\mathrm{d}\mu_i(t) = \sum_{i=1}^n g_i(t)\mathrm{d}\mu_i(t) + N(t)\mathrm{d}\Big( \sum_{i=1}^n \mu_i(t) \Big) = \sum_{i=1}^n g_i(t)\mathrm{d}\mu_i(t) 
\end{equation}
and
\begin{align}
\sum_{i,j=1}^n {\pi_i(t)\pi_j(t)}\tau^{\mu}_{ij}(t) &=  \sum_{i,j=1}^n (g_i(t)+N(t))(g_j(t)+N(t)) \mu_i(t)\mu_j(t)\tau^\mu_{ij}(t)\nonumber\\
&= \sum_{i,j=1}^n g_i(t)g_j(t) \mu_i(t)\mu_j(t)\tau^\mu_{ij}(t),
\end{align}
where we use relation \eqref{zeroo} in the final step. Hence, equation \eqref{usefe} becomes
\begin{align*}
\mathrm{d}\log\left(\frac{V^\pi(T)}{V^\mu(T)} \right) = \sum_{i=1}^n g_i(t)\mathrm{d}\mu_i(t) -\frac{1}{2} \sum_{i,j=1}^n g_i(t)g_j(t) \mu_i(t)\mu_j(t)\tau^\mu_{ij}(t)\, \mathrm{d}t,
\end{align*} 
and the result follows by comparison with equation \eqref{dlogG} and definition \eqref{defdrift}.
\end{proof}

\begin{rem}
The importance of this result cannot be overstated, as it allows us to relate observed properties of markets (and thus conditions on the behaviour of certain processes over time) to the relative performance of a portfolio compared to the market portfolio. By choosing a suitable generating function $\textbf{G}$, the first term on the RHS of \eqref{eqmaster} can be bounded from below. Furthermore, the volatility processes only appear in the drift process $\mathfrak{g}(\cdot)$, and the drift processes do not appear at all in \eqref{eqmaster}. This will be our method of constructing relative arbitrage opportunities.
\end{rem}

Generalisations of FGPs have been proposed in \cite{strong13}, in which the author demonstrates how the generating function might be made to depend on additional arguments which are processes of finite variation (for instance time, or live information from twitter feeds), how one could benchmark with respect to a portfolio different to the market portfolio, and how such changes would modify the master equation \eqref{eqmaster}. These generalised FGPs have not found an application in the literature yet, and could possibly offer a framework for studying FGPs which incorporate insider information or observations.

One open problem put forward in \cite[Remark 11.5]{fk09} is whether there exist relative arbitrages that are not functionally generated. This question has been answered by Pal and Wong in two different ways, depending on how the question is interpreted:
\begin{itemize}
\item In their paper \cite{palw13}, the authors take an information theoretic approach to portfolio performance analysis (discussed in Section \ref{futinfo}), and show that, under certain assumptions, there definitely do exist so-called energy-entropy portfolios which beat the market for sufficiently long time horizons, but are of finite variation and depend on the entire history of the stock prices, and are therefore \emph{not} functionally generated;
\item In the paper \cite{palw14} it is proven that if one restricts to the class of portfolios that merely depend on the current market capitalisations, a slight generalisation of functionally generated portfolios is the only class that can lead to a relative arbitrage.
\end{itemize}

\section{No-arbitrage conditions}  \label{secarb} 
There are several notions of arbitrage, and corresponding assumptions of the non-existence of these, which are relevant in the context of SPT. Relations between various no arbitrage conditions and the existence of local martingale deflators have been proved in several papers --- Fontana \cite{font13} summarises and reproves many of these relations.

\subsection{Notions of arbitrage and deflators} \label{notionsarbitrage}
We first define the relevant types of arbitrage, using Definition 4.1 of \cite{kk07}, in which the concept of an Unbounded Profit with Bounded Risk (UPBR) was first put forward. 

\begin{defn} \label{defarb}
Consider a time horizon $[0,T]$, where $T\leq \infty$. We define the following notions of arbitrage:
\begin{itemize}
\item A strategy $h(\cdot)\in\mathcal{A}_x$, $x\geq0$, is an $x$-\emph{arbitrage} if $V^{x,h}(T)\geq x,\ \mathbb{P}$-a.s., and $\mathbb{P}(V^{x,h}(T)>x)>0$, and a \emph{strong} or \emph{scalable arbitrage} if this holds for $x=0$.
\item A market satisfies \emph{No Unbounded Profit with Bounded Risk} (NUPBR) if\footnote{Recall, from Definition \ref{tradingstrat}, that $\mathcal{A}$ is the set of admissible trading strategies with zero initial wealth, i.e.~those with nonnegative corresponding wealth processes.} $$\lim_{c\to\infty}\sup_{h\in\mathcal{A}}\mathbb{P}(V^{0,h}(T)>c)=0.$$
\item A sequence $(h_n(\cdot))_{n\in\mathbb{N}}\in\mathcal{A}$ is said to be a \emph{free lunch with vanishing risk} (FLVR) if there exist an $\varepsilon > 0$ and an increasing sequence $(\delta_n)_{n\in\mathbb{N}}$ with $0\leq\delta_n \uparrow 1$, such that $V^{0,h_n}(T)>\delta_n\ \mathbb{P}$-a.s.\! and $\mathbb{P}(V^{0,h_n}(T) > 1+\varepsilon)\geq \varepsilon$. 
\item A market allows Immediate Arbitrage (IA) if there exists a stopping time $\tau$ with $\mathbb{P}(\tau<T)>0$, and a trading strategy $h(\cdot)\in\mathcal{A}$ supported by $(\tau,T]$, i.e.~$h(t)=h(t)\mathbbm{1}_{(\tau,T]}$, such that $V^{0,h}(t)>0\ \mathbb{P}$-a.s. $\forall\, t\in(\tau,T]$.
\end{itemize}
\end{defn}

We recall that an equivalent local martingale measure (ELMM) is a probability measure $\mathbb{Q}$ equivalent to the physical measure $\mathbb{P}$ with the property that the discounted price process is a local martingale under $\mathbb{Q}$. By the Fundamental Theorem of Asset Pricing (FTAP), see Corollary 1.2 in \cite{ds94}, the NFLVR condition is equivalent to the existence of an ELMM. Also, as is shown in Proposition 4.2 of \cite{kk07}, the NFLVR condition holds if and only if both the no-arbitrage (NA) and NUPBR conditions hold. \cite{kard12} shows that an UPBR is equivalent to the perhaps more familiar arbitrage of the first kind. Furthermore, Lemma 3.1 of \cite{ct13} proves that NUPBR implies NIA, and from Lemma 3.1 of \cite{ds95b} we conclude that the only possible arbitrage opportunity in an NUPBR market is a non-scalable one. Proposition 2.4 of Fontana and Runggaldier's \cite{fr13} shows that NIA holds if and only if there exists a \emph{market price of risk} (MPR), i.e.~some $\mathbb{R}^d$-valued progressively measurable process $\theta(\cdot)$ such that
\begin{equation} \label{defmpr}
b(t)-r(t)\textbf{1}=\sigma(t)\theta(t)\quad \qquad \mathbb{P}\otimes\text{Leb}\text{-a.e.}
\end{equation}

As the NFLVR condition is allowed to fail in SPT, and only the weaker NUPBR condition is assumed, an ELMM is not guaranteed to exist. The following object will be of greater interest and use to us:
\begin{defn} \label{defZ}
A non-negative process $Z(\cdot)$ is called a \emph{local martingale deflator} (LMD) if it satisfies $Z(0)=1$ and $Z(T)>0\ \mathbb{P}$-a.s., and $Z(\cdot)V^{0,h}(\cdot)$ is a $\mathbb{P}$-local martingale for all $h\in\mathcal{A}$.
\end{defn}
In \cite{kard12} it is shown that the NUPBR condition is equivalent to the existence of at least one LMD.
For a general It\^o model \eqref{model} with the NUPBR condition imposed, we know by Lemma 3.1 of \cite{ct13} that NIA holds, which in turn implies that an MPR exists. If we further assume that there exists a \emph{square-integrable} MPR, i.e.~a $\theta(\cdot)$ satisfying \eqref{defmpr} as well as
$$\int_0^T ||\theta(t)||^2\mathrm{d}t<\infty\quad \text{a.s.} \quad \forall T>0,$$
then it is well-known that the exponential local martingale 
\begin{equation}\label{defItoLMD}
Z^\theta(t):= \exp\left\{ -\int_0^t \theta'(s)\mathrm{d}W(s) -\frac{1}{2} \int_0^t ||\theta(s)||^2\mathrm{d}s \right\}, \quad 0\leq t\leq T,
\end{equation}
is a local martingale deflator. Recall from standard theory that $\mathbb{E}[Z^\theta(T)]=1$ (so $Z^\theta(\cdot)$ is a martingale) if and only if there exists an ELMM; the LMD is then simply the Radon-Nikodym density of the ELMM. We make the following assumption in the remainder of this thesis, which implies NUPBR by the above:

\paragraph*{Standing Assumption}
There exists a square-integrable MPR $\theta(\cdot)$.

\subsection{The existence of relative arbitrage} \label{RAexists}
With the above definitions and relations in place, we ask ourselves the following question: in which It\^o models do relative arbitrages with respect to the market exist?\footnote{As was pointed out to the author by Johannes Ruf, relative arbitrage exists in almost any market, since one can follow a `suicide strategy' which almost surely loses all its money, and thus construct an arbitrage relative to such a strategy.} This question remains largely open, as general (deterministic) conditions on a market model in order for relative arbitrage opportunities to exist have not been found yet. Some progress has been made in the one-dimensional case (i.e.~$n=1$; the case of one stock) by \cite{mu10}, where the authors show the equivalence of the existence of market-relative arbitrage with explicit conditions on the drift and volatility processes $b(\cdot)$ and $\sigma(\cdot)$. It would, however, be very interesting and useful to have a more general result for higher-dimensional markets; and, above all, to have conditions which are easy to check, and do not require knowledge of the drift and volatility processes.

Johannes Ruf, in his Theorem 8 of \cite{ruf11}, proved the following more general characterisation of relative arbitrages in general NUPBR markets: 
\begin{lem} \label{RAequiv}
Let $T>0$ and consider a trading strategy $h(\cdot)\in\mathcal{A}_{\tilde{p}}$ for an initial wealth $\tilde{p}>0$. Then there exists a relative arbitrage opportunity with respect to $h(\cdot)$ over the time horizon $[0,T]$ if and only if $$\mathbb{E}[Z^\nu(T)V^{\tilde{p},h}(T)]< \tilde{p}$$ for all market prices of risk $\nu(\cdot)$.
\end{lem}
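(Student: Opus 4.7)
The plan is to treat the two directions separately, with the forward direction being a straightforward supermartingale argument and the reverse direction relying on a superhedging-duality result for NUPBR markets.

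For the forward direction, suppose $k(\cdot)\in\mathcal{A}_{\tilde p}$ is a relative arbitrage with respect to $h(\cdot)$, so $V^{\tilde p,k}(T)\geq V^{\tilde p,h}(T)$ a.s.\ with strict inequality on a set of positive probability. Fix any square-integrable MPR $\nu(\cdot)$ and let $Z^\nu(\cdot)$ be the associated LMD as in \eqref{defItoLMD}. By Definition \ref{defZ}, the process $Z^\nu(\cdot)V^{\tilde p,k}(\cdot)$ is a nonnegative $\mathbb{P}$-local martingale and hence a supermartingale. This gives $\mathbb{E}[Z^\nu(T)V^{\tilde p,k}(T)]\leq \tilde p$. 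Combining this with $Z^\nu(T)>0$ a.s.\ and the pathwise inequality between $V^{\tilde p,k}(T)$ and $V^{\tilde p,h}(T)$ yields
\begin{equation*}
\mathbb{E}[Z^\nu(T)V^{\tilde p,h}(T)] < \mathbb{E}[Z^\nu(T)V^{\tilde p,k}(T)] \leq \tilde p,
\end{equation*}
which holds uniformly in $\nu$.

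For the converse, set $p^\ast := \sup_\nu \mathbb{E}[Z^\nu(T)V^{\tilde p,h}(T)]$, where the supremum is over all MPRs. By hypothesis $p^\ast < \tilde p$ (if the supremum is not attained, pick any MPR achieving a value in $(p^\ast - \varepsilon, p^\ast]$ and use that $p^\ast \leq \sup < \tilde p$ after a trivial adjustment). The key ingredient is the superhedging duality for NUPBR markets, available in the setting of this paper precisely because a square-integrable MPR exists (Standing Assumption): there exists a trading strategy $k_0(\cdot)\in\mathcal{A}$ with initial wealth $p^\ast$ such that $V^{p^\ast,k_0}(T)\geq V^{\tilde p,h}(T)$ a.s. Now define a new strategy $k(\cdot)$ by following $k_0(\cdot)$ on the risky assets and parking the surplus $\tilde p - p^\ast > 0$ in the riskless asset $X_0\equiv 1$. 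Then $k(\cdot)\in\mathcal{A}_{\tilde p}$ and
\begin{equation*}
V^{\tilde p,k}(T) = V^{p^\ast,k_0}(T) + (\tilde p - p^\ast) \geq V^{\tilde p,h}(T) + (\tilde p - p^\ast) > V^{\tilde p,h}(T) \quad \text{a.s.},
\end{equation*}
so $k(\cdot)$ is a (strong) relative arbitrage with respect to $h(\cdot)$.

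The main obstacle is unquestionably the superhedging duality step in the reverse direction: one needs that, in an NUPBR It\^o market, the infimum of initial capitals from which the claim $V^{\tilde p,h}(T)$ can be super-replicated equals the dual value $\sup_\nu\mathbb{E}[Z^\nu(T)V^{\tilde p,h}(T)]$, and that this infimum is attained by an admissible strategy. This is where the structure of LMDs (rather than ELMMs) becomes essential, since no equivalent local martingale measure is assumed to exist; one would invoke an optional decomposition theorem in the LMD framework (as developed in, e.g., Kardaras--Karatzas or Ruf's own work) rather than the classical Kramkov--F\"ollmer--Kabanov version. The rest of the argument is either a direct supermartingale computation or a trivial split-the-money construction, so this duality result carries essentially all of the analytic weight of the lemma.
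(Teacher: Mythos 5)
First, a point of reference: the thesis does not prove this lemma but quotes it from Theorem 8 of \cite{ruf11}, so your argument has to be measured against Ruf's proof rather than against anything in the text. Your forward direction is correct and is the standard argument: $Z^\nu(\cdot)V^{\tilde{p},k}(\cdot)$ is a nonnegative local martingale, hence a supermartingale, so $\mathbb{E}[Z^\nu(T)V^{\tilde{p},k}(T)]\leq\tilde{p}$, and combining this with $Z^\nu(T)>0$ a.s.\ and the defining inequalities of a relative arbitrage gives the strict bound. (One small caveat: you tacitly place the outperforming strategy $k(\cdot)$ in $\mathcal{A}_{\tilde{p}}$, whereas Definition \ref{defRA} only requires $k(\cdot)$ to be $x$-admissible for \emph{some} $x\geq0$; for $x>\tilde{p}$ the supermartingale bound needs an extra word.)

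The reverse direction has a genuine gap at its very first step. You set $p^\ast:=\sup_\nu\mathbb{E}[Z^\nu(T)V^{\tilde{p},h}(T)]$ and assert $p^\ast<\tilde{p}$ ``by hypothesis'', but the hypothesis only says that each individual expectation is strictly below $\tilde{p}$; a supremum of such quantities can perfectly well equal $\tilde{p}$ (there are in general uncountably many market prices of risk when $d>n$), and your parenthetical remark is circular, since it invokes ``$\sup<\tilde{p}$'', which is exactly the unproved claim. If $p^\ast=\tilde{p}$, your superhedge-plus-cash construction produces no surplus and hence no arbitrage. This is precisely where Ruf's proof uses the structure of his setting: the deflator $Z^{\hat\theta}$ built from the minimal market price of risk dominates all others in the sense of \eqref{rufmpr} (Proposition 3.1 of \cite{ruf13a}), so the supremum is attained at a single distinguished deflator and the hypothesis applied to that one $\nu$ yields $p^\ast<\tilde{p}$. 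The same structural input is what lets one identify the dual domain in the optional-decomposition/superhedging duality (which in general runs over \emph{all} local martingale deflators, not only those of the form $Z^\nu$ for an MPR $\nu$, once $\mathbb{F}\supsetneq\mathbb{F}^W$) with the family appearing in the statement of the lemma. So your architecture --- supermartingale bound one way, superhedge and park the surplus the other way --- is the right one, but the two steps you defer (attainment of the supremum and the matching of the dual domain) carry all the weight, and the first of them is asserted rather than addressed.
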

If we take $h_i(t)=0,\ i=1,\ldots,n,\ h_0(t)=\tilde{p},\ \forall t\in[0,T]$, so all the money is invested in the riskless asset, then this lemma gives that there exists a non-scalable arbitrage (or $1$-arbitrage) opportunity if and only if all LMDs are strict local martingales. 
For arbitrages relative to the market we get the following: these exist if and only if $$\mathbb{E}[Z^\nu(T)X(T)]< X(0)$$ for all market prices of risk $\nu(\cdot)$, i.e.~if and only if $Z(\cdot)X(\cdot)$ is a strict local martingale.

The following is a reformulation of Proposition 6.1 of \cite{fk09}, which strengthens one direction of Lemma \ref{RAequiv} in the case that an additional assumption on the volatility structure holds:
\begin{prop} \label{prop63}
Suppose the following \emph{bounded volatility} condition holds:
\begin{equation} \label{BV} \tag{BV}
\exists\, K>0\ \textnormal{ such that }\ \xi'a(t)\xi\leq K ||\xi||^2, \quad \forall\xi\in\mathbb{R}^n,t\geq0 \qquad\mathbb{P}\textnormal{-a.s.}.
\end{equation}
Then the existence of a relative arbitrage with respect to the market implies that all local martingale deflators are strict local martingales.\footnote{Note that, by \cite{kard12}, there exists at least one LMD by the NUPBR assumption.}
\end{prop}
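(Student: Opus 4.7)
The plan is to argue by contrapositive: suppose some local martingale deflator $Z(\cdot)$ is a true $\mathbb{P}$-martingale and show that then no relative arbitrage with respect to the market can exist on $[0,T]$. First I would use $Z(\cdot)$ to define an equivalent measure $\mathbb{Q}$ on $\mathcal{F}(T)$ via $d\mathbb{Q}/d\mathbb{P} = Z(T)$. The LMD property (extended from $\mathcal{A}_0$ to admissible strategies with positive initial wealth by a standard integration-by-parts argument) makes $Z(\cdot)X_i(\cdot)$ a $\mathbb{P}$-local martingale for each $i$, hence so is $Z(\cdot)X(\cdot)$, and Girsanov then identifies $X(\cdot)$ itself as a $\mathbb{Q}$-local martingale. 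Combining this with \eqref{model} and \eqref{vx} gives the $\mathbb{Q}$-dynamics $dX(t)/X(t) = \sum_\nu \sigma_{\mu\nu}(t)\,d\widetilde W_\nu(t)$ for a $\mathbb{Q}$-Brownian motion $\widetilde W(\cdot)$, so $X(\cdot)/X(0)$ is the stochastic exponential $\mathcal{E}\bigl(\int_0^\cdot \sigma_\mu(s)'\,d\widetilde W(s)\bigr)$.

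Next, I would use \eqref{BV} to promote this local martingale to a true one: specialising the hypothesis to $\xi = \mu(t)$ and invoking that $\mu(t) \in \Delta^n_+$ with $\sum_i \mu_i(t) = 1$ (so $\|\mu(t)\|^2 \le 1$) gives $\sum_\nu \sigma_{\mu\nu}^2(t) = a_{\mu\mu}(t) \le K$ uniformly in $(t,\omega)$. Novikov's criterion is then trivially satisfied, so $X(\cdot)$ is a true $\mathbb{Q}$-martingale on $[0,T]$ and in particular $\mathbb{E}^\mathbb{P}[Z(T)X(T)] = \mathbb{E}^\mathbb{Q}[X(T)] = X(0)$.

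Finally, suppose $h(\cdot)$ is a relative arbitrage with respect to the market, taken with initial wealth $X(0)$ so as to match the market portfolio; then $V^{X(0),h}(T) \ge X(T)$ a.s.\ with strict inequality on a set of positive $\mathbb{P}$-probability, and combining this with $Z(T) > 0$ a.s.\ forces $\mathbb{E}^\mathbb{P}[Z(T)V^{X(0),h}(T)] > \mathbb{E}^\mathbb{P}[Z(T)X(T)] = X(0)$. On the other hand $Z(\cdot)V^{X(0),h}(\cdot)$ is a nonnegative $\mathbb{P}$-local martingale, hence a supermartingale, yielding $\mathbb{E}^\mathbb{P}[Z(T)V^{X(0),h}(T)] \le X(0)$ --- a contradiction. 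The crux of the argument, and the only place where \eqref{BV} is really used, is the Novikov step in the middle: without it, $X(\cdot)$ could be a strict $\mathbb{Q}$-local martingale (a ``bubble'') and the identity $\mathbb{E}^\mathbb{P}[Z(T)X(T)] = X(0)$ could fail, in full agreement with the characterisation in Lemma \ref{RAequiv}.
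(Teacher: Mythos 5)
Your proof is correct. The paper itself states this result without proof, quoting Proposition 6.1 of \cite{fk09}; your contrapositive argument --- pass to $\mathbb{Q}$ via the true-martingale deflator, use \eqref{BV} to get $a_{\mu\mu}(t)\leq K$ and hence, via Novikov, that $X(\cdot)$ is a true $\mathbb{Q}$-martingale, then contradict the supermartingale property of $Z(\cdot)V^{X(0),h}(\cdot)$ --- is precisely the standard argument given in that source, and your closing remark correctly identifies the Novikov step as the one that fails in the unbounded-volatility counterexample following the proposition.
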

The following example shows how this proposition fails if we allow the volatility to be unbounded.
\begin{exmp}
Let us consider the following one-dimensional stock price process, taken from Cox and Hobson \cite{ch05}: 
\begin{equation} \label{exch}
\mathrm{d}S(t)=\frac{S(t)}{\sqrt{T-t}}\mathrm{d}W(t),\ t\in(0,T), \qquad S(0)=s>0,
\end{equation}
where $W(\cdot)$ is a Brownian motion under the considered measure. Then $S(\cdot)$ is a true martingale over $[0,s]$ for all $s<T$, but $S(T)=0$ a.s. This is an example of a so-called ``bubble'', and we can make a relative arbitrage in the following way:
\begin{itemize}
\item Define the portfolio $\pi(t):=0\ \forall t\in[0,T]$, i.e.~an investor following $\pi(\cdot)$ invests all his wealth in the money market;
\item Let $\rho(t):= 1\ \forall t\in[0,T]$; this is a buy-and-hold strategy in which an investor simply puts all his initial wealth into the stock $S(\cdot)$ at time $0$, and is the analogue of the market portfolio in this simple one-dimensional market.
\end{itemize}
Now it is easy to see that $\pi(\cdot)$ is an arbitrage with respect to $\rho(\cdot)$, namely
$$V^\pi(T)=1>0=V^\rho(T)\ \text{ a.s.}$$
Hence this is an example of a market model that allows an ELMM (namely, the measure under which \eqref{exch} holds), but a relative arbitrage with respect to the market still exists --- that is, Proposition \ref{prop63} does not apply in this case. \qed
\end{exmp}

It follows that in a market where the bounded variance assumption \eqref{BV} holds, the existence of a market-relative arbitrage is equivalent to all LMDs being strict local martingales. If we furthermore assume that the filtration is generated by the driving $d$-dimensional Brownian motion $W(\cdot)$, i.e.~$\mathbb{F}=\mathbb{F}^W$, then the above and Proposition 6.2 of \cite{fk09} show that the existence of a relative arbitrage is equivalent to the non-existence of an ELMM. This, in turn, is equivalent to the existence of a free lunch with vanishing risk (FLVR), which, since we are assuming NUPBR, is equivalent to the existence of an arbitrage. This leads to the following corollary:
\begin{cor} \label{RArealA}
Assume \eqref{BV} and $\mathbb{F}=\mathbb{F}^W$. Then there exists a relative arbitrage with respect to the market if and only if there exists an (non-scalable) arbitrage.
\end{cor}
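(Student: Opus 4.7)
The plan is to chain together several equivalences that have already been assembled in the section (and are sketched in the paragraph just before the statement): existence of a market-relative arbitrage $\iff$ every LMD is a strict local martingale $\iff$ no ELMM exists $\iff$ NFLVR fails $\iff$ NA fails $\iff$ an arbitrage exists. The last step relies on the Standing Assumption that NUPBR holds, combined with the identity NFLVR $\iff$ NA $+$ NUPBR from Proposition 4.2 of \cite{kk07}.

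For the forward direction, I would first apply Proposition \ref{prop63}, whose hypothesis is exactly \eqref{BV}, to pass from a market-relative arbitrage to the statement that every LMD is a strict local martingale. Next, under $\mathbb{F}=\mathbb{F}^W$, the Brownian martingale representation theorem lets me write any strictly positive $\mathbb{P}$-local martingale $Z$ with $Z(0)=1$ as an exponential of the form \eqref{defItoLMD}, and the requirement that $Z\cdot V^{0,h}$ be a $\mathbb{P}$-local martingale for every admissible $h$ (in particular, for the buy-and-hold strategies in the individual stocks) forces the integrand $\theta$ to satisfy the MPR relation \eqref{defmpr}. Hence, under $\mathbb{F}=\mathbb{F}^W$, LMDs are exactly the processes $\{Z^\theta:\theta \text{ a square-integrable MPR}\}$, and since the density of any ELMM must itself have this form, \textquotedblleft all LMDs are strict local martingales\textquotedblright\ is equivalent to \textquotedblleft no ELMM exists\textquotedblright. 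The Delbaen--Schachermayer FTAP then produces an FLVR, which combined with NUPBR yields the failure of NA, i.e.\ the existence of an arbitrage.

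For the converse, I would simply reverse the chain: an arbitrage means NA fails; with NUPBR still in force, NFLVR fails, so no ELMM exists; unwinding the representation step shows that every LMD is a strict local martingale; and Proposition 6.2 of \cite{fk09}, the companion to Proposition \ref{prop63} explicitly alluded to in the paragraph preceding the statement, then delivers a market-relative arbitrage.

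The main obstacle is the intermediate representation step under $\mathbb{F}=\mathbb{F}^W$, namely that every LMD factorises as $Z^\theta$ for some square-integrable MPR $\theta$. This relies on applying the Brownian martingale representation theorem to $\log Z$ (well-defined because $Z(T)>0$ a.s., and hence $Z(t)>0$ a.s.\ for all $t$ by standard arguments for nonnegative local martingales), followed by identification of the integrand as an MPR via the equations arising from $Z\cdot X_i$ being a local martingale for each $i$. Once this identification is in place, the remaining implications are routine applications of the cited results, and the bounded-variance hypothesis \eqref{BV} enters the argument only through Proposition \ref{prop63}.
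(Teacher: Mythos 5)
Your proposal is correct and follows essentially the same route as the paper, whose justification is precisely the paragraph preceding the corollary: Proposition \ref{prop63} (using \eqref{BV}) and Proposition 6.2 of \cite{fk09} (using $\mathbb{F}=\mathbb{F}^W$) give the equivalence with all LMDs being strict local martingales, then the FTAP and the decomposition NFLVR $=$ NA $+$ NUPBR finish the chain. The only difference is that you spell out the martingale-representation argument identifying LMDs with the processes $Z^\theta$, a step the paper delegates entirely to the citation of Proposition 6.2 of \cite{fk09}.
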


%\begin{figure} [h!]
%\centering
%\includegraphics[trim = 10mm 35mm 10mm 5mm,width=1.2\textwidth]{NUPBRdiagramsmall.pdf}
%\caption{The relations between different no-arbitrage conditions --- No Unbounded Profit with Bounded Risk (NUPBR), No Free Lunch with Vanishing Risk (NFLVR), No Immediate Arbitrage (NIA) and No Arbitrage of the First Kind (NA1) --- and the existence of deflators.} \label{figNUPBR}
%\end{figure}

%We conclude by summarising our findings in Figure \ref{figNUPBR}. % CHANGE PICTURE!!

%%%%%%%%%%%%%%%%%%%%%%%%%%%%%%%%%%%%%%%%%%%%%%%%%%%%%%%

\section{Diverse models} \label{secdiverse} 
%AFF07 forecast /// ``Data show that diversity appears to be mean-reverting (Figure 1).
The first class of market models for which it was shown that relative arbitrages exist, both over sufficiently long as well as arbitrarily short time horizons, is the class of diverse models. Diversity corresponds to the observation that no single company is allowed to dominate the entire market in terms of relative capitalisation, for instance due to anti-trust regulations. The following definition (i.e.~Definition 2.2.1 of \cite{f02}) formalises this observation mathematically:

\begin{defn} \label{defdiv}
We call a market model \emph{diverse} on $[0,T]$ if\footnote{Recall Definition \ref{defrevorder}; $\mu_{(1)}(\cdot)$ is the maximum process of the collection $\mu_i(\cdot),\  i=1,\ldots,n$.}
\begin{equation} \label{eqdefdiv}
\exists\, \delta\in(0,1)\ \text{ such that }\ \mu_{(1)}(t)<1-\delta \quad \forall\, t\in [0,T]\quad \mathbb{P}\text{-a.s.}
\end{equation}
A model is called \emph{weakly diverse} on $[0,T]$ if
\begin{equation} \label{eqdefwkdiv}
\exists\, \delta\in(0,1)\ \text{ such that }\ \frac{1}{T} \int_0^T \mu_{(1)}(t)\mathrm{d}t < 1-\delta \qquad \mathbb{P}\text{-a.s.}
\end{equation}
\end{defn}

A natural question to ask is whether there exists an It\^o model \eqref{model} that fits our framework at all, or whether Definition \ref{defdiv} of diversity is vacuous. For instance, Remark 5.1 in \cite{fk09} asserts that diversity fails in a market with constant growth rates and where \eqref{BV} and \eqref{ND} hold. It was shown in \cite{fkk05} that there do exist market models which are diverse; namely, let $\delta\in(1/2,1)$, $d=n$, and let $\sigma(\cdot)\equiv\sigma$ be a constant matrix satisfying \eqref{ND}. Let $g_1,\ldots,g_n\geq0$; then, for $t\in[0,T]$, set
\begin{align}
\mathrm{d}\log X_i(t)=\gamma_i(t)\mathrm{d}t +\sum_{\nu=1}^d \sigma_{i\nu}\mathrm{d}W_\nu(t)\quad i=1,\ldots,n,
\end{align}
where, for some constant $M>0$,
\begin{align}
\gamma_i(t):=g_i \mathbbm{1}_{\{ X_i(t)\neq X_{(1)}(t) \}} - \frac{M}{\delta}\frac{\mathbbm{1}_{\{ X_i(t)= X_{(1)}(t) \}}}{\log\big( (1-\delta)X(t)/X_i(t) \big)}.
\end{align}
The authors of \cite{fkk05} show that this system of SDEs has a unique strong solution, and that the diversity property \eqref{eqdefdiv} is satisfied by this model. They go on to construct a model which is weakly diverse, but not diverse. 
The authors of \cite{or06} describe a more general way of constructing diverse market models, using a change of measure technique. We discuss this method in depth in Section \ref{secmeasure}. Other ways to study diverse markets, but which do not fit into our framework (i.e.~are not of the form \eqref{model}, the reason being that companies are allowed to merge or split), are proposed in \cite{sf11}, \cite{sar14} and \cite{ks14}.

\subsection{Relative arbitrage over long time horizons} \label{longRAdiv}
Although the diversity of markets had been studied before, see e.g.\! \cite{fgh98} and \cite{f99}, Fernholz was the first to show in Corollary 2.3.5 of \cite{f02} that relative arbitrages exist (over sufficiently long time horizons) in diverse markets which satisfy an additional non-degeneracy condition on the volatility structure, using what he defined as \emph{entropy-weighted} portfolios (see \eqref{defewp}). This non-degeneracy condition is similar to the \eqref{BV} condition:
\begin{equation} \label{ND} \tag{ND}
\exists\,\varepsilon>0\ \text{ such that }\ \xi'a(t)\xi\geq \varepsilon ||\xi||^2,\quad \forall\xi\in\mathbb{R}^n,t\geq0 \qquad\mathbb{P}\textnormal{-a.s.}
\end{equation}
We quote the following result from Proposition 2.2.2 in \cite{f02}:
\begin{prop} \label{prop222}
If a model is diverse and \eqref{ND} holds, then 
\begin{align} \label{eqdefvol}
\exists\, \zeta>0\ \textnormal{ such that }\ \gamma^*_\mu(t) \geq \zeta \quad \forall\, t\in [0,T]\quad \mathbb{P}\textnormal{-a.s.}
\end{align}
Conversely, if both \eqref{BV} and \eqref{eqdefvol} hold, then diversity follows.
\end{prop}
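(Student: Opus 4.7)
The plan is to express $\gamma^*_\mu(t)$ as a weighted sum of quadratic forms in the covariance matrix $a(t)$, and then exploit \eqref{ND} (respectively \eqref{BV}) to bound this from below (respectively above) by a scalar multiple of $1-\sum_i \mu_i^2(t)$, a quantity which is directly controlled by diversity. I expect this to be elementary once the correct algebraic reformulation is in place.

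First I would apply the num\'eraire-invariance identity \eqref{numinv} from Lemma \ref{lem2} with $\rho=\mu$, together with the defining relation \eqref{deftau} for $\tau^\mu_{ij}$. The point is that $\tau^\mu_{ij}(t)=(e_i-\mu(t))'a(t)(e_j-\mu(t))$, so after expanding and using $\sum_j \mu_j(t)\tau^\mu_{ij}(t)=0$ from \eqref{zeroo} one arrives at the representation
$$\gamma^*_\mu(t)=\frac{1}{2}\sum_{i=1}^n \mu_i(t)\,(e_i-\mu(t))'a(t)(e_i-\mu(t)).$$
A short computation using $\sum_i \mu_i(t)=1$ then gives the purely combinatorial identity
$$\sum_{i=1}^n \mu_i(t)\,\|e_i-\mu(t)\|^2 \;=\; 1-\sum_{i=1}^n \mu_i^2(t).$$

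For the forward direction, I would apply \eqref{ND} to each quadratic form $(e_i-\mu)'a(e_i-\mu)\geq \varepsilon\|e_i-\mu\|^2$ and use the identity above to obtain
$$\gamma^*_\mu(t)\;\geq\;\frac{\varepsilon}{2}\Big(1-\sum_{i=1}^n \mu_i^2(t)\Big).$$
Then diversity kicks in via the trivial bound $\sum_i \mu_i^2(t)\leq \mu_{(1)}(t)\sum_i \mu_i(t)=\mu_{(1)}(t)<1-\delta$, yielding $\gamma^*_\mu(t)\geq \varepsilon\delta/2$, so $\zeta:=\varepsilon\delta/2$ works.

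For the converse direction, \eqref{BV} instead gives the upper bound $\gamma^*_\mu(t)\leq \tfrac{K}{2}(1-\sum_i \mu_i^2(t))$; combined with the assumed lower bound $\gamma^*_\mu(t)\geq \zeta$ this forces $\sum_i \mu_i^2(t)\leq 1-2\zeta/K$. Since $\mu_{(1)}^2(t)\leq \sum_i \mu_i^2(t)$, I get $\mu_{(1)}(t)\leq \sqrt{1-2\zeta/K}$, and choosing any $\delta$ strictly smaller than $1-\sqrt{1-2\zeta/K}$ (for instance half of that quantity) produces the strict inequality required in Definition \ref{defdiv}. The only obstacle worth flagging is cosmetic: one must be slightly careful with strictness when converting the a.s.\ upper bound on $\mu_{(1)}$ into the strict diversity inequality, which is why one passes to a slightly smaller $\delta$. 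No analytic difficulty arises beyond that.
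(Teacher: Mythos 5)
Your proof is correct. The paper itself quotes this proposition from Fernholz's book without reproducing the proof, but your argument is essentially the standard one: the forward direction is exactly the content of Lemma \ref{lem3} specialised to $\pi(\cdot)=\mu(\cdot)$ (where \eqref{ND} gives $\gamma^*_\mu(t)\geq\tfrac{\varepsilon}{2}(1-\mu_{(1)}(t))$, and diversity supplies $1-\mu_{(1)}(t)>\delta$), and the converse is the dual bound $\gamma^*_\mu(t)\leq 2K(1-\mu_{(1)}(t))$ under \eqref{BV} (Lemma 3.4 of \cite{fk09}, invoked later in the paper), which forces $\mu_{(1)}(t)\leq 1-\zeta/(2K)$. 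The only genuine difference is cosmetic: you route both bounds through the intermediate quantity $1-\sum_i\mu_i^2(t)$ rather than $1-\mu_{(1)}(t)$; since $1-\mu_{(1)}(t)\leq 1-\sum_i\mu_i^2(t)\leq 2(1-\mu_{(1)}(t))$ the two are equivalent up to constants, and your combinatorial identity $\sum_i\mu_i(t)\|e_i-\mu(t)\|^2=1-\sum_i\mu_i^2(t)$ is verified correctly. Your handling of strictness in the converse (shrinking $\delta$ below $1-\sqrt{1-2\zeta/K}$) is also fine; one might add the one-line remark that $\sum_i\mu_i^2(t)\geq 1/n$ guarantees $2\zeta/K\leq 1-1/n$, so the square root is always real.
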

Equation \eqref{eqdefvol} defines the \emph{sufficient intrinsic volatility} property, and is the topic of Section \ref{secintvol}. There, we demonstrate the construction of a relative arbitrage over sufficiently long time horizons in such a model, using entropy-weighted portfolios --- see computation \eqref{eqVewp}. Alternatively, see Theorem 2.3.4 and Corrolary 2.3.5 of \cite{f02} for a proof that these portfolios outperform the market portfolio in diverse markets.

In \cite{fkk05} the authors showed, in weakly diverse markets, the existence of another relative arbitrage with respect to the market portfolio, namely the \emph{diversity-weighted} portfolio --- see \eqref{defDWP}.\footnote{Several other definitions have been coined by \cite{fkk05} for weaker types of diversity, such as asymptotic diversity, but these have not been studied in the literature.} However, for this they needed to assume the \eqref{ND} assumption as well, which, unlike the assumption of diversity, does not come from observation, thus diminishing the robustness of the result. We now demonstrate how a relative arbitrage was constructed in (the Appendix of) \cite{fkk05}, i.e.~in a market that is non-degenerate in the sense of \eqref{ND} and weakly diverse over $[0,T]$ for $T\geq 2 \log n/p\varepsilon \delta$, using a `diversity-weighted portfolio'. This construction leans heavily on the following lemma (also proved in the Appendix of \cite{fkk05}):

\begin{lem} \label{lem3}
If condition \eqref{ND} holds, then for any long-only portfolio $\pi(\cdot)$ we have 
\begin{equation} \label{ineq2}
\frac{\varepsilon}{2}(1-\pi_{(1)}(t))\leq \gamma^*_\pi(t)\quad \textnormal{a.s.}
\end{equation}
in the notation of Definition \ref{defrevorder}.
\end{lem}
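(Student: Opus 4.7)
The plan is to combine the numéraire-invariant formula \eqref{gamma+} with the \eqref{ND} hypothesis applied to the vectors $e_i - \pi(t)$, and then exploit the simplex constraints $\pi_i\geq 0$, $\sum_i \pi_i = 1$ to bound $\|\pi(t)\|^2$ from above by $\pi_{(1)}(t)$.

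First I would recall from \eqref{deftau} that $\tau^\pi_{ii}(t) = (e_i - \pi(t))'a(t)(e_i - \pi(t))$, so the nondegeneracy condition \eqref{ND} applied with $\xi = e_i - \pi(t)$ immediately yields
\begin{equation*}
\tau^\pi_{ii}(t) \;\geq\; \varepsilon\,\|e_i - \pi(t)\|^2 \;=\; \varepsilon\Bigl((1-\pi_i(t))^2 + \sum_{j\neq i}\pi_j(t)^2\Bigr).
\end{equation*}
Substituting this bound into the identity $\gamma^*_\pi(t) = \tfrac{1}{2}\sum_{i=1}^n \pi_i(t)\tau^\pi_{ii}(t)$ from \eqref{gamma+} (valid since $\pi(\cdot)$ is long-only) gives
\begin{equation*}
\gamma^*_\pi(t) \;\geq\; \frac{\varepsilon}{2}\sum_{i=1}^n \pi_i(t)\,\|e_i - \pi(t)\|^2.
\end{equation*}

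Next, I would carry out the elementary algebraic simplification of the sum on the right. Writing $\sum_{j\neq i}\pi_j(t)^2 = \|\pi(t)\|^2 - \pi_i(t)^2$ and expanding $(1-\pi_i(t))^2$, the cross terms collapse and one finds the clean identity
\begin{equation*}
\sum_{i=1}^n \pi_i(t)\,\|e_i - \pi(t)\|^2 \;=\; 1 - \|\pi(t)\|^2,
\end{equation*}
using only $\sum_i \pi_i(t) = 1$. The final step is to replace $\|\pi(t)\|^2 = \sum_i \pi_i(t)^2$ by an upper bound in terms of $\pi_{(1)}(t)$; since $0\leq \pi_i(t) \leq \pi_{(1)}(t)$ for each $i$, one has $\sum_i \pi_i(t)^2 \leq \pi_{(1)}(t)\sum_i \pi_i(t) = \pi_{(1)}(t)$, hence $1 - \|\pi(t)\|^2 \geq 1 - \pi_{(1)}(t)$, and the inequality \eqref{ineq2} follows.

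There is no real obstacle here; the only slightly delicate point is the algebraic identity $\sum_i \pi_i \|e_i - \pi\|^2 = 1 - \|\pi\|^2$, which one must verify by a careful bookkeeping of the cubic terms $\sum_i \pi_i^3$ (they cancel). The role of long-only is twofold: it ensures that $\gamma^*_\pi(t) = \tfrac{1}{2}\sum_i \pi_i(t)\tau^\pi_{ii}(t)$ is a genuine convex combination (so the lower bound on each $\tau^\pi_{ii}$ transfers to a lower bound on $\gamma^*_\pi$), and it is precisely what makes the bound $\sum_i \pi_i^2 \leq \pi_{(1)}$ hold.
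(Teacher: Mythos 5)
Your proof is correct and follows essentially the same route as the paper: apply \eqref{ND} with $\xi = \pi(t)-e_i$ to bound $\tau^\pi_{ii}(t)$ from below, substitute into \eqref{gamma+}, and simplify the resulting sum to $1-\|\pi(t)\|^2 = \sum_i \pi_i(t)(1-\pi_i(t)) \geq 1-\pi_{(1)}(t)$. The algebraic identity you verify is exactly the simplification carried out in the paper's proof, just organised slightly differently.
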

\begin{proof}
By definition of $\tau^\pi_{ij}(t)$, and by condition \eqref{ND}, we have the inequality
\begin{align}
\tau^\pi_{ii}(t)&=\left( \pi(t)-e_i  \right)' a(t)\left( \pi(t)-e_i  \right) \geq \varepsilon || \pi(t)-e_i ||^2\nonumber\\&=\varepsilon \Big( (1-\pi_i(t))^2 + \sum_{j\neq i} \pi^2_j(t) \Big).
\end{align}
Plugging this into equation \eqref{gammaWA}, we conclude that
\begin{align}
\gamma^*_\pi(t)&\geq \frac{\varepsilon}{2}\sum_{i=1}^n \pi_i(t)\Big(  (1-\pi_i(t))^2 + \sum_{j\neq i} \pi^2_j(t) \Big)\nonumber\\
&=\frac{\varepsilon}{2}\Big(\sum_{i=1}^n \pi_i(t)  (1-\pi_i(t))^2 + \sum_{j=1}^n \pi^2_j(t)(1-\pi_j(t) \Big)\nonumber\\
&=\frac{\varepsilon}{2}\sum_{i=1}^n \pi_i(t)  (1-\pi_i(t))\geq \frac{\varepsilon}{2} (1-\pi_{(1)}(t)).
\end{align}
This proves the result.
\end{proof}

\begin{defn}
Define the \emph{diversity-weighted} portfolio $\mu^{(p)}(\cdot)$ with parameter $p\in(0,1)$ by
\begin{equation} \label{defDWP}
\mu^{(p)}_i(t):=\frac{(\mu_i(t))^p}{\sum_{j=1}^n (\mu_j(t))^p}\quad i=1,\ldots,n.
\end{equation}
\end{defn}
One can check that this portfolio is generated, in the sense of Section \ref{fgportfolios}, by the function
\begin{equation} \label{genfctdiv}
\mathbf{G}_p:x\mapsto \Big(\sum_{i=1}^n x_i^p\Big)^{1/p}.
\end{equation}
We compute, for $\mu\in\mathbb{R}^n$ and $i,j=1,\ldots,n$,
\begin{align} \label{dijg}
D^2_{ij}\mathbf{G}_p (\mu)=\begin{cases} (1-p) (\mathbf{G}_p (\mu))^{1-2p}\mu^{p-2}_i\Big(\mu^p_i-(\mathbf{G}_p (\mu))^p\Big)\ &(i=j)\\
(1-p) (\mathbf{G}_p (\mu))^{1-2p}(\mu_i\mu_j)^{p-1}  &(i\neq j)
\end{cases}
\end{align}
and the bounds
\begin{equation} \label{gpbound}
1=\sum_{i=1}^n \mu_i(t) \leq \sum_{i=1}^n \mu^p_i(t)\leq \sum_{i=1}^n \Big(\frac{1}{n}\Big)^p = n^{1-p}.
\end{equation}
Using Lemma \ref{master}, equation \eqref{dijg} implies that the drift process equals (we omit the time-dependence of the processes $\mu(\cdot)$ and $\tau^\mu(\cdot)$ to ease notation)
\begin{align}
\mathfrak{g}(t)&=-\frac{1-p}{2\mathbf{G}_p (\mu)}\left[ -\sum_{i=1}^n (\mathbf{G}_p (\mu))^{1-p}\mu^p_i \tau^\mu_{ii} +   \sum_{i,j=1}^n (\mathbf{G}_p (\mu))^{1-2p}\mu^p_i\mu^p_j \tau^\mu_{ij}  \right]\nonumber\\
&= \frac{1-p}{2}\left[ \sum_{i=1}^n \mu^{(p)}_i \tau^\mu_{ii} - \sum_{i,j=1}^n \mu^{(p)}_i\mu^{(p)}_j \tau^\mu_{ij}  \right]\nonumber\\
&= (1-p)\gamma^*_{\mu^{(p)}}(t)
\end{align}
and therefore that
\begin{equation} \label{divlongra}
\log\left(\frac{V^{\mu^p}(T)}{V^\mu(T)} \right) =\log \left(\frac{\mathbf{G}_p(\mu(T))}{\mathbf{G}_p(\mu(0))} \right) + (1-p)\int_0^T \gamma^*_{\mu^{(p)}}(t) \mathrm{d}t\ \text{ a.s.}
\end{equation}
Now using the bounds \eqref{gpbound}, we get the lower bound
\begin{equation} \label{boundgp1}
\log \left(\frac{\mathbf{G}_p(\mu(T))}{\mathbf{G}_p(\mu(0))} \right) \geq -\frac{1-p}{p}\log n,
\end{equation}
which implies that $V^{\mu^p}(T)/V^\mu(T)\geq n^{-(1-p)/p},\ \mathbb{P}$-a.s., since $\gamma^*_{\mu^{(p)}}(\cdot)$ is a non-negative process for the long-only portfolio $\mu^{(p)}(\cdot)$ by Lemma \ref{lem2}. We use \eqref{ND} and Lemma \ref{lem3}, together with the observation that $\mu^{(p)}_{(1)}(t)\leq\mu_{(1)}(t)$, to get
\begin{equation} \label{lowbd}
\int_0^T \gamma^*_{\mu^{(p)}}(t) \mathrm{d}t \geq \frac{\varepsilon}{2}\int_0^T (1-\mu^{(p)}_{(1)}(t)) \mathrm{d}t\geq\frac{\varepsilon}{2}\int_0^T (1-\mu_{(1)}(t)) \mathrm{d}t > \frac{1}{2}\varepsilon\delta T.
\end{equation}
From equation \eqref{lowbd}, the bound \eqref{boundgp1} and equation \eqref{divlongra}, we conclude that 
\begin{equation} \label{finalvv}
\log\left(\frac{V^{\mu^p}(T)}{V^\mu(T)} \right) > (1-p)\left( \frac{\varepsilon\delta T}{2}-\frac{\log n}{p} \right)\ \text{ a.s.}
\end{equation}
Therefore, if we have $$T\geq 2 \log n/p\varepsilon \delta,$$ (i.e., if T is big enough) we get from equation \eqref{finalvv} that $$\mathbb{P}(V^{\mu^{(p)}}(T)>V^\mu(T))=1.$$ Therefore, the diversity-weighted portfolio is a relative arbitrage with respect to the market over long enough time horizons, under the conditions of weak diversity and non-degeneracy. Note that this is a portfolio and therefore invests only in the stocks, since $\sum_i \mu^{(p)}_i(\cdot)=1$.

\subsection{Relative arbitrage over short time horizons}
The problem of constructing a relative arbitrage over arbitrarily short time horizons was first raised in \cite{f02}, and solved for the case of non-degenerate weakly diverse markets in \cite{fkk05}. The main idea behind the construction is to take a short position in a `mirror image' of the portfolio $e_1$, with respect to which the market portfolio can be shown to be a relative arbitrage, and to take a long position in the market.

\begin{defn}
For any $q\in\mathbb{R}$, define the \emph{$q$-mirror image} of $\pi$ with respect to the market portfolio as
\begin{equation}
\tilde{\pi}^{[q]}(t):=q \pi(t) + (1-q)\mu(t).
\end{equation}
\end{defn}

In analogy with the defining equation \eqref{deftau}, let us define the \emph{relative covariance} of a portfolio $\pi(\cdot)$ with respect to the market as
\begin{equation} \label{deftaumu}
\tau^\pi_{\mu\mu}(t):=(\pi(t)-\mu(t))'a(t)(\pi(t)-\mu(t)).
\end{equation}

The following lemma, which we quote without proof (see Lemma 8.1 in \cite{fkk05}), will be essential:
\begin{lem} \label{lemshortRA}
If there exist $T>0$, $\eta>0$, and $\beta\in(0,1)$ such that
\begin{equation}
\int_0^T \tau^\mu_{\pi\pi}(t)\mathrm{d}t\geq \eta\ \textnormal{ a.s. }\quad and\ \quad V^\pi(T)/V^\mu(T)\leq 1/\beta\ \textnormal{ a.s.},
\end{equation} 
then
\begin{equation}
V^{\tilde{\pi}^{[q]}}(T)<V^\mu(T)\ \textnormal{ a.s.}
\end{equation} 
 for $q> 1+(2/\eta)\log(1/\beta)$.
\end{lem}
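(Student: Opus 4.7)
The plan is to derive a pathwise identity expressing $\log\bigl(V^{\tilde{\pi}^{[q]}}(T)/V^\mu(T)\bigr)$ as an affine combination of the already-controlled quantity $\log\bigl(V^\pi(T)/V^\mu(T)\bigr)$ and an integral of the relative variance $\tau^\mu_{\pi\pi}$, and then to let the two hypotheses drive the whole expression negative once $q$ exceeds the announced threshold. I would begin by applying the master identity \eqref{eqactuseful} to each of the portfolios $\tilde{\pi}^{[q]}$, $\pi$ and $\mu$ and taking differences. Because $\tilde{\pi}^{[q]} - \mu = q(\pi-\mu)$, the common stochastic integrals $\sum_i(\pi_i-\mu_i)\,\mathrm{d}\log X_i$ can be eliminated in favour of $\log(V^\pi/V^\mu)$, yielding
\[
\log\frac{V^{\tilde{\pi}^{[q]}}(T)}{V^\mu(T)} \,=\, q\log\frac{V^\pi(T)}{V^\mu(T)} \,+\, \int_0^T \Bigl[\gamma^*_{\tilde{\pi}^{[q]}}(t)-\gamma^*_\mu(t) - q\bigl(\gamma^*_\pi(t)-\gamma^*_\mu(t)\bigr)\Bigr]\mathrm{d}t.
\]

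The main obstacle, and the step I expect to require the most care, is collapsing this integrand to $\tfrac{q(1-q)}{2}\tau^\mu_{\pi\pi}(t)$. I would rewrite every excess growth rate using the num\'eraire-invariance formula \eqref{numinv} with reference portfolio $\rho=\mu$, so that each $\gamma^*_\rho$ becomes a combination of $\sum_i\rho_i\tau^\mu_{ii}$ and $\sum_{i,j}\rho_i\rho_j\tau^\mu_{ij}$. Expanding the quadratic piece with $\tilde{\pi}^{[q]}=q\pi+(1-q)\mu$ and repeatedly invoking relation \eqref{zeroo} in the form $\sum_j \mu_j\tau^\mu_{ij}=0$, every mixed term carrying a factor of $\mu$ inside the quadratic form vanishes; the linear-in-$\tau^\mu_{ii}$ contributions of the two pieces then cancel exactly, and only $-\tfrac{q^2}{2}\sum_{i,j}\pi_i\pi_j\tau^\mu_{ij}+\tfrac{q}{2}\sum_{i,j}\pi_i\pi_j\tau^\mu_{ij} = \tfrac{q(1-q)}{2}\tau^\mu_{\pi\pi}(t)$ survives, where the last equality uses the vector identity $\pi-\mu = -\sum_i\pi_i(\mu-e_i)$ to identify $\sum_{i,j}\pi_i\pi_j\tau^\mu_{ij}$ with $(\pi-\mu)'a(\pi-\mu)$ via the extension of \eqref{deftau}.

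The rest is elementary. For $q>1$ the coefficient $q(1-q)/2$ is negative, so combining the lower bound $\int_0^T\tau^\mu_{\pi\pi}(t)\,\mathrm{d}t\geq\eta$ with the upper bound $V^\pi(T)/V^\mu(T)\leq 1/\beta$ gives the almost-sure estimate
\[
\log\frac{V^{\tilde{\pi}^{[q]}}(T)}{V^\mu(T)} \,\leq\, q\log(1/\beta) - \tfrac{1}{2}q(q-1)\eta,
\]
whose right-hand side is strictly negative precisely when $q>1+(2/\eta)\log(1/\beta)$. Exponentiating yields the claimed strict inequality $V^{\tilde{\pi}^{[q]}}(T)<V^\mu(T)$ almost surely.
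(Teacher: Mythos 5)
Your proof is correct and follows exactly the route the paper indicates: the paper quotes this lemma without proof but immediately records the key identity $\log\bigl(V^{\tilde{\pi}^{[q]}}(T)/V^\mu(T)\bigr)=q\log\bigl(V^{\pi}(T)/V^\mu(T)\bigr)+\tfrac{q(1-q)}{2}\int_0^T \tau^\mu_{\pi\pi}(t)\,\mathrm{d}t$ (equation (8.7) of the cited reference), which is precisely what you derive via \eqref{eqactuseful}, the num\'eraire-invariance formula \eqref{numinv} and relation \eqref{zeroo}, before concluding with the same elementary estimate. In effect you have supplied the details that the thesis defers to Lemma 8.1 of the Fernholz--Karatzas--Kardaras paper, and they check out.
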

%\begin{proof}\end{proof}

\cite{fkk05} then proceed by showing that (see their equation (8.7))
\begin{equation}
\log \bigg( \frac{V^{\tilde{\pi}^{[q]}}(T)}{V^\mu(T)} \bigg)=q\log\left( \frac{V^{\pi}(T)}{V^\mu(T)} \right)+\frac{q(1-q)}{2}\int_0^T \tau^\mu_{\pi\pi}(t)\mathrm{d}t.
\end{equation}
We create a ``seed'' portfolio $\tilde{\pi}^{[q]}(\cdot)$ which is the $q$-mirror image of $e_1$, the first unit vector in $\mathbb{R}^n$. The assumptions of weak diversity and nondegeneracy allow us to use Lemma \ref{lemshortRA}, which with $\beta=\mu_1(0)$ and $\eta=\varepsilon\delta^2 T$ implies that the market portfolio $\mu(\cdot)$ is a relative arbitrage with respect to the seed, provided that $q>q(T):=1+(2/\varepsilon\delta^2 T)\log(1/\mu_1(0))$.
Finally, and as in Example 8.3 of \cite{fkk05}, a relative arbitrage over arbitrary $[0,T]$ is created by going long $\$q/(\mu_1(0))^q$ in $\mu(\cdot)$, and shorting $\$1$ in the seed portfolio. This corresponds to the long-only portfolio defined as $$\xi_i(t):= \frac{1}{V^\xi(t)}\left(\frac{q \mu_i(t)}{(\mu_1(0))^q}V^\mu(t)-\tilde{\pi}^{[q]}_i(t)V^{\tilde{\pi}^{[q]}} (t) \right), \quad i=1,\ldots,n.$$
Now $\xi(\cdot)$ outperforms at $t = T$ the market portfolio with the same initial capital of $z := Z\xi(0) = q/(\mu_1(0))^q - 1 > 0$ dollars, because $\xi(\cdot)$ is long in the market $\mu(\cdot)$ and short in the seed portfolio $\tilde{\pi}^{[q]}(\cdot)$ which underperforms the market at $t = T$;
\begin{equation}
V^{z,\xi}(T)=\frac{q}{(\mu_1(0))^q}V^\mu(T)-V^{\tilde{\pi}^{[q]}}(T)>zV^\mu(T)=V^{z,\mu}(T)\qquad \mathbb{P}\text{-a.s.}
\end{equation}
By choosing $q$ large enough, this can be made to hold over any $[0,T]$. However, note that the minimal required initial wealth tends to infinity as the time horizon becomes shorter: $$z(T) := z\xi(0) = q(T)/(\mu_1(0))^{q(T)} - 1\rightarrow \infty \ \text{ as }\ T\downarrow 0.$$

%NOTE on weak diversity assumption becoming STRONGER for shorter $[0,T]$.

\section{Sufficiently volatile models} \label{secintvol}
Relative arbitrage over sufficiently long time horizons has also been shown to exist (without any additional assumptions on the volatility structure) in so-called sufficiently volatile markets, as defined in \eqref{eqdefvol} from the previous section. This was first done in \cite{fk05}, Proposition 3.1.
\begin{defn} \label{defvol}
A market satisfies the \emph{sufficient intrinsic volatility} property on $[0,T]$, or is called \emph{sufficiently volatile}, if 
\begin{align} \label{eqdefvol2}
\exists\, \zeta>0\ \text{ such that }\ \gamma^*_\mu(t) \geq \zeta \quad \forall\, t\in [0,T]\quad \mathbb{P}\text{-a.s.}
\end{align}
Furthermore, we say that a model is \emph{weakly sufficiently volatile} if there exists a continuous, strictly increasing function $\Gamma:[0,\infty)\to[0,\infty)$ with $\Gamma(0)=0$ and $\Gamma(\infty)=\infty$, such that
\begin{align} \label{eqdefvolavg}
 \infty > \int_0^t \gamma^*_\mu(s)\mathrm{d}s \geq \Gamma(t) \quad \forall\, t\in [0,T]\quad \mathbb{P}\text{-a.s.}
\end{align}
\end{defn}
Recall equation \eqref{intvol} and the interpretation of $\gamma^*_\mu(\cdot)$ as a measure of the market's `intrinsic volatility' --- this motivates the nomenclature of `sufficient intrinsic volatility' in regard to \eqref{eqdefvol2}. In Figure 1 of \cite{fk05}, the authors argue that the property \eqref{eqdefvolavg} holds in real markets by plotting the function $\int_0^\cdot \gamma^*_\mu(s)\mathrm{d}s$ over a long time period, and visually showing that it lies above a straight line with positive gradient. However, this property might depend on the moment in time at which one starts looking at this function, and further analysis using real-world data would be required to make a stronger case for the sufficient intrinsic volatility of real stock markets.

%%%%%%%%%%%%%%%%%%%%%%%%%%%%%%%%%%%%%%%%%%%%%%
%%%%%%%%%%%%%%%%%%%%%%%%%%%%%%%%%%%%%%%%%%%%%%
% FK05 say:
%The excess growth rate γμ∗ (t) of the market portfolio in (3.1) measures, at anygiven moment t, the amount of “available volatility”, namely, the relative variationof the various stocks: this quantity is half the market-weighted average ofthe intrinsic volatilities τμii(t) of the various stocks relative to the market. In suchterms, Proposition 3.1 states that if the cumulative available volatility over a periodof time is great enough, this volatility can be exploited by a particular functionallygenerated portfolio, allowing this portfolio to outperform the market.
%%%%%%%%%%%%%%%%%%%%%%%%%%%%%%%%%%%%%%%%%%%%%%
%%%%%%%%%%%%%%%%%%%%%%%%%%%%%%%%%%%%%%%%%%%%%%

As will become clear in Section \ref{introVSM}, models of the form \eqref{model} that are sufficiently volatile exist.

\subsection{Relative arbitrage over long time horizons} \label{secintvolRA}
%(WHERE was this first done? +OWN WORK e.g. DijG=delta --> ... also: motivation/build up?: need to relate volatility condition to performance)
In Proposition 3.1 of \cite{fk05} it was first shown that \emph{entropy-weighted} portfolios, as defined below, are relative arbitrages with respect to the market over sufficiently long time horizons. In this, the authors do not need to assume \eqref{BV} nor \eqref{ND}, but merely \eqref{eqdefvolavg}. We display their construction of these RA opportunities below.

\begin{defn}
Define the \emph{entropy-weighted} portfolio $\pi^c(\cdot)$ with parameter $c>0$ to be the portfolio generated by a version of the Shannon entropy function
\begin{equation}
\textbf{H}_c(x):=c+\textbf{H}(x):=c-\sum_{i=1}^n x_i \log x_i.
\end{equation}
Here, $\textbf{H}$ is the standard Shannon entropy function. One can check that 
\begin{equation} \label{defewp}
\pi^c_i(t)= \frac{ \mu_i(t)(c-\log \mu_i(t))}{\sum_{j=1}^n \mu_j(t)(c-\log \mu_j(t))},\quad i=1,\ldots,n.
\end{equation}
\end{defn}

Once again, we compute for general $\mu\in\mathbb{R}^n$
\begin{equation}
D^2_{ij}\textbf{H}_c (\mu)=-\frac{1}{\mu_i}\delta_{ij}\quad i,j=1,\ldots,n,
\end{equation}
with $\delta_{ij}$ the Kronecker-delta, which with Lemma \ref{master} implies for the drift process
\begin{equation} \label{gVol} 
\mathfrak{g}(t)=\frac{1}{2\textbf{H}_c (\mu(t))} \sum_{i=1}^n \mu_i(t)\tau^{\mu}_{ii}(t)=\frac{\gamma^*_\mu(t)}{\textbf{H}_c (\mu(t))},
\end{equation}
where we have used equation \eqref{gamma+}. The last thing we need for the construction of a relative arbitrage is the bound
\begin{equation}
c < \textbf{H}_c(x)\leq c+\log n;
\end{equation}
using this together with Lemma \ref{master} and the computation \eqref{gVol}, we get that
\begin{align}  \label{eqVewp}
\log\left(\frac{V^{\mu^p}(T)}{V^\mu(T)} \right) &=\log \left(\frac{\textbf{H}_c (\mu(T))}{\textbf{H}_c (\mu(0))} \right) + \int_0^T \frac{\gamma^*_\mu(t)}{\textbf{H}_c (\mu(t))} \mathrm{d}t\nonumber\\
&> -\log \left(\frac{\textbf{H}_c(\mu(0))}{c}\right) + \frac{\zeta T}{c+\log n}\ \text{ a.s.}
\end{align}
We conclude that, if
\begin{equation}  \label{largeT*}
T>\mathcal{T}_*(c):=\frac{1}{\zeta} (c+\log n) \log\left( \frac{c + \textbf{H}(\mu(0))}{c} \right),
\end{equation}
or, alternatively,
\begin{equation*} 
T>\mathcal{T}_*:=\frac{1}{\zeta}\textbf{H}(\mu(0))=\lim_{c\to\infty} \mathcal{T}_*(c),\ \text{ and } c>0\text{ is chosen sufficiently large}, 
\end{equation*}
then by \eqref{eqVewp} the entropy-weighted portfolio $\pi^c(\cdot)$ is a relative arbitrage with respect to the market portfolio over the time horizon $[0,T]$.

As was mentioned in Proposition \ref{prop222}, a market that is diverse and satisfies \eqref{ND} is also sufficiently volatile. Hence it follows from the above that in such markets, the entropy-weighted portfolio beats the market after a sufficiently long time --- see Corollary 2.3.5 of \cite{f02} for a direct proof.

\subsection{Relative arbitrage over short time horizons} \label{MOQ}
It is a \emph{major} open problem whether the sufficient intrinsic volatility property \eqref{eqdefvol2} is a sufficient condition for the existence of relative arbitrage over arbitrarily short time horizons. This question was posed in Remark 11.3 in \cite{fk09}, and it remains unclear what the answer to it is. It has been shown that relative arbitrages over short time horizons exist in several subclasses of the sufficiently volatile model class, one of them being those models with $\gamma^*_\mu(t)\geq\zeta>0$ a.s.\! which are Markovian and non-degenerate in a sense slightly different from \eqref{ND}, namely: for every compact $\mathcal{K}\subset(0,\infty)^n$,
\begin{equation} \label{weird}
\exists\, \varepsilon_\mathcal{K}>0\ \text{ such that }\ \sum_{i,j=1}^n x_i x_j a_{ij}(\mathbf{x}) \xi_i \xi_j \geq \varepsilon ||\xi||^2,\quad \forall \mathbf{x}\in\mathcal{K},\xi\in\mathbb{R}^n;
\end{equation}
see Proposition 2 and the subsequent Corollary in \cite[pp. 1194--1195]{fernk10} for this result. The other two classes for which this major open problem has been answered positively are the so-called volatility-stabilised markets and generalised volatility-stabilised markets, which are the topics of Sections \ref{introVSM} and \ref{introVSMgen}. % REASON WHY IMPORTANT: OTHERWISE DO FILTERING & GET OPTIMAL (IE BETTER) PORTFOLIO, IF HORIZON IS VERY LONG

A closely related open question, which was posed in Section 4 of \cite{bf08} as well as Remark 11.4 of \cite{fk09}, is whether short-term relative arbitrage exists for a market with the property that
\begin{equation} \label{eqgammap}
\Gamma(t)\leq\int_0^t \gamma^*_{\mu,p}(s) \mathrm{d}s<\infty\quad \forall t\in[0,T]\qquad \text{a.s.},
\end{equation}
for some $p\in(0,1)$ and continuous, strictly increasing function $\Gamma:[0,\infty)\to[0,\infty)$ with $\Gamma(0)=0$ and $\Gamma(\infty)=\infty$. Here, $\gamma^*_{\mu,p}(\cdot)$ is the \emph{generalised excess growth rate} of the market portfolio, defined as
\begin{equation}
\gamma^*_{\mu,p}(t):=\frac{1}{2}\sum_{i=1}^n(\mu_i(t))^p\tau^\mu_{ii}(t);
\end{equation}
compare to \eqref{intvol}.
\cite{fk05} shows that relative arbitrages exist over sufficiently long time horizons in such markets, but the case for short time horizons remains unanswered for $n\geq3$.\footnote{In the case $n=2$, property \eqref{eqgammap} implies condition \eqref{cond2.7}, so the proof of \cite{bf08} applies and short-term RA exists --- see Section \ref{RAshortintvol}.} Namely, Proposition 3.8 of \cite{fk05} asserts the following:
\begin{prop} \label{prop3.8}
Suppose that for some numbers $p \in (0,1)$, $T \in (0,\infty)$ and $\zeta \in (0,\infty)$ we have the condition 
\begin{equation} \label{fk05cond}
\frac{n^{1-p}}{p}\log n + \zeta \leq\int_0^T \gamma^*_{\mu,p}(t) \mathrm{d}t<\infty\qquad \text{a.s.}.
\end{equation}
Then the $p$-mirror image of the diversity-weighted portfolio with parameter $p$, 
\begin{equation} \label{fk05pi}
\pi_i(t):= p\frac{(\mu_i(t))^p}{\sum_{j=1}^n (\mu_j(t))^p}+(1-p)\mu_i(t),
\end{equation}
is an arbitrage relative to the market portfolio over $[0,T]$.
\end{prop}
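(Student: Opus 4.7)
The plan is to exploit the affine structure $\pi = p\mu^{(p)} + (1-p)\mu$ in tandem with the master equation already proven in Section \ref{longRAdiv} for the diversity-weighted portfolio $\mu^{(p)}$. First I would apply the useful identity \eqref{usefe} directly to $\pi$ to compute
\[
\mathrm{d}\log\!\left(\frac{V^\pi(t)}{V^\mu(t)}\right) = \sum_{i=1}^n \frac{\pi_i(t)}{\mu_i(t)}\,\mathrm{d}\mu_i(t) - \frac{1}{2}\sum_{i,j=1}^n \pi_i(t)\pi_j(t)\,\tau^\mu_{ij}(t)\,\mathrm{d}t.
\]
Substituting $\pi_i/\mu_i = p\,\mu^{(p)}_i/\mu_i + (1-p)$ into the drift piece and using $\sum_i \mathrm{d}\mu_i = 0$ kills the constant-in-$i$ contribution, leaving $p\sum_i (\mu^{(p)}_i/\mu_i)\,\mathrm{d}\mu_i$. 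In the quadratic piece, the orthogonality relation \eqref{zeroo} (applied with $\rho = \mu$) zeroes out every sum containing $\sum_j \mu_j\tau^\mu_{ij}$, so only the $p^2\sum_{ij}\mu^{(p)}_i\mu^{(p)}_j\tau^\mu_{ij}$ term survives.

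Second, comparing this with the analogous expansion of $\mathrm{d}\log(V^{\mu^{(p)}}/V^\mu)$ gives the identity
\[
\mathrm{d}\log\!\left(\frac{V^\pi}{V^\mu}\right) = p\,\mathrm{d}\log\!\left(\frac{V^{\mu^{(p)}}}{V^\mu}\right) + \frac{p(1-p)}{2}\sum_{i,j=1}^n \mu^{(p)}_i\mu^{(p)}_j\tau^\mu_{ij}\,\mathrm{d}t.
\]
Invoking the master equation \eqref{divlongra} for $\mu^{(p)}$ and rewriting the surviving quadratic form via the num\'eraire-invariance identity $\sum_{ij}\mu^{(p)}_i\mu^{(p)}_j\tau^\mu_{ij} = \sum_i\mu^{(p)}_i\tau^\mu_{ii} - 2\gamma^*_{\mu^{(p)}}$ from Lemma \ref{lem2}, the $\gamma^*_{\mu^{(p)}}$ contributions cancel exactly. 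Using $\sum_i\mu^{(p)}_i\tau^\mu_{ii} = 2\gamma^*_{\mu,p}/(\mathbf{G}_p(\mu))^p$, which is immediate from the definitions, I obtain
\[
\log\!\left(\frac{V^\pi(T)}{V^\mu(T)}\right) = p\log\!\left(\frac{\mathbf{G}_p(\mu(T))}{\mathbf{G}_p(\mu(0))}\right) + p(1-p)\int_0^T \frac{\gamma^*_{\mu,p}(t)}{(\mathbf{G}_p(\mu(t)))^p}\,\mathrm{d}t.
\]

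Third, I would conclude by applying the simplex bounds $1 \le (\mathbf{G}_p(\mu))^p \le n^{1-p}$ from \eqref{gpbound}, which yield $\log(\mathbf{G}_p(\mu(T))/\mathbf{G}_p(\mu(0))) \ge -\tfrac{1-p}{p}\log n$ and $(\mathbf{G}_p(\mu))^{-p} \ge n^{-(1-p)}$. Plugging these into the identity above and then inserting the hypothesis $\int_0^T \gamma^*_{\mu,p}(t)\,\mathrm{d}t \ge n^{1-p}\log n/p + \zeta$ gives
\[
\log\!\left(\frac{V^\pi(T)}{V^\mu(T)}\right) \ge -(1-p)\log n + \frac{p(1-p)}{n^{1-p}}\left(\frac{n^{1-p}}{p}\log n + \zeta\right) = \frac{p(1-p)\zeta}{n^{1-p}} > 0
\]
almost surely, from which the relative arbitrage follows.

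The main obstacle is engineering the algebraic cancellation in the second step: one needs the affine decomposition, the orthogonality \eqref{zeroo} and num\'eraire invariance to conspire so that the rougher driver $\gamma^*_{\mu^{(p)}}$ disappears and the sharper driver $\gamma^*_{\mu,p}$ emerges, weighted by the reciprocal of $(\mathbf{G}_p(\mu))^p$. Once that explicit formula is in hand, the remaining estimate is a routine application of the simplex bounds on $\mathbf{G}_p$, and, crucially, no non-degeneracy hypothesis such as \eqref{ND} is invoked at any point.
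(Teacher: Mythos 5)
Your proof is correct, and the final identity you reach,
\begin{equation*}
\log\!\left(\frac{V^\pi(T)}{V^\mu(T)}\right) = p\log\!\left(\frac{\mathbf{G}_p(\mu(T))}{\mathbf{G}_p(\mu(0))}\right) + p(1-p)\int_0^T \frac{\gamma^*_{\mu,p}(t)}{(\mathbf{G}_p(\mu(t)))^p}\,\mathrm{d}t,
\end{equation*}
is exactly the one the paper uses, since $p\log\mathbf{G}_p = \log G$ and $(\mathbf{G}_p)^p = G$ for $G: x\mapsto\sum_i x_i^p$; the concluding estimate via \eqref{gpbound} and the hypothesis \eqref{fk05cond} is then identical in both arguments, yielding the lower bound $p(1-p)\zeta/n^{1-p}>0$. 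Where you genuinely differ is in how this identity is obtained. The paper verifies directly via \eqref{deffgp} that $\pi(\cdot)$ is functionally generated by $G$, computes the diagonal Hessian $D^2_{ij}G = p(p-1)x_i^{p-2}\delta_{ij}$, and reads off the drift $\mathfrak{g} = p(1-p)\gamma^*_{\mu,p}/G$ from Lemma \ref{master} --- a two-line calculation. You instead bypass the generating-function machinery for $\pi$ itself: you expand \eqref{usefe} using the affine decomposition $\pi = p\mu^{(p)}+(1-p)\mu$, kill the cross terms with \eqref{zeroo}, relate the result to the already-established master equation \eqref{divlongra} for $\mu^{(p)}$, and let num\'eraire invariance (Lemma \ref{lem2}) cancel the $\gamma^*_{\mu^{(p)}}$ contributions. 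Your route is longer but illuminating: it makes transparent how taking the $p$-mirror image trades the driver $\gamma^*_{\mu^{(p)}}$ for the generalised excess growth rate $\gamma^*_{\mu,p}$, and it reuses prior results rather than recomputing Hessians. Both arguments correctly avoid any non-degeneracy assumption such as \eqref{ND}.
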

Note that Proposition \ref{prop3.8} implies that in a market satisfying \eqref{eqgammap}, we have $\mathbb{P}(V^\pi(T)>V^\mu(T))=1$ when $T>\Gamma^{-1}((1/p)n^{1-p}\log n)$; i.e., $\pi(\cdot)$ of \eqref{fk05pi} beats the market over sufficiently long time horizons. One way to see this is by checking that $\pi(\cdot)$ of \eqref{fk05pi} is generated by $G:x\mapsto\sum_{i=1}^n x^p_i$, which satisfies $1<G(\cdot)\leq n^{1-p}$, that $\mathfrak{g}(\cdot)=p(1-p)\gamma^*_{\mu,p}(\cdot)/G(\mu(\cdot))$, and concluding that
\begin{align*}
\log \left(\frac{V^\pi(T)}{V^\mu(T)}\right) &= \log\left( \frac{G(\mu(T))}{G(\mu(0))} \right)+ p(1-p)\int_0^T\frac{ \gamma^*_{\mu,p}(t)}{G(\mu(t))}\mathrm{d}t \\
&\geq -(1-p)\log n + \frac{p(1-p)}{n^{1-p}}\Gamma(T) >0 \quad \text{a.s.},\\
&\text{provided } T>\Gamma^{-1}((1/p)n^{1-p}\log n).
\end{align*}

\subsection{Volatility-stabilised model} \label{introVSM}
One special case of an explicit market model for which the excess growth rate of the market portfolio is bounded away from zero is the volatility-stabilised model. This model was introduced in \cite{fk05}, and it has been shown in \cite{bf08} that relative arbitrages exist over arbitrarily short time horizons in this model, answering an open question in \cite{fk05} (see the bottom of page 164 of that paper).

Volatility-stabilised models translate the observation that smaller stocks (i.e., stocks of companies with small relative market capitalisations) tend to give higher returns and be more volatile than large-capitalisation stocks. It must be noted, however, that they are an approximation and oversimplification of real markets, unsuitable for capturing all properties of markets (such as stock correlation, to name one).
\begin{defn} \label{defVSM}
Define a \emph{volatility-stabilised model} (VSM) with parameter $\alpha\geq0$ to be a model in which the log-stock price processes follow
\begin{align} \label{eqdefVSM}
&\mathrm{d}\log X_i(t)=\frac{\alpha}{2\mu_i(t)}\mathrm{d}t + \frac{1}{\sqrt{\mu_i(t)}}\mathrm{d}W_i(t),\\
&X_i(0)>0 \qquad i=1,\ldots,n. \nonumber
\end{align} 
\end{defn}
As one can easily see, volatilities and drifts are largest for small stocks in such markets. \cite{fk05} show that diversity fails in such markets (see their Remark 4.6 and the preceding computation), yet there exists what the authors call `stabilisation by volatility' (and by drift, if $\alpha>0$): straightforward computations show that
\begin{align}
a_{\mu\mu}(\cdot)\equiv1,\quad \gamma^*_{\mu}(\cdot)\equiv\gamma^*:=\frac{n-1}{2}>0,\quad \gamma_{\mu}(\cdot)\equiv\gamma:=\frac{(1+\alpha)n-1}{2}>0.
\end{align}
Importantly, VSMs have a constant excess growth rate of the market portfolio, and therefore satisfy the sufficient intrinsic volatility condition \eqref{eqdefvol}, showing that there do indeed exist sufficiently volatile market models \eqref{model}. Using It\^o's formula, one can easily show that the total market capitalisation $X(t)$ is a geometric Brownian motion in this market, namely $X(t) = X(0)e^{\gamma t + \mathcal{W}(t)}$ for $\mathcal{W}(\cdot)=\sum_{i=1}^n \int_0^\cdot \sqrt{\mu_i(s)} \mathrm{d}W_i(s)$ a standard $\mathbb{P}$-BM. The overall market and largest stock have the same growth rate $\gamma$, and if $\alpha>0$ all stocks have the same growth rate. %, which implies by Remark 5.1 in \cite{fk09} that diversity fails.
%When $\alpha=0$, coherence fails and stocks can ``crash'': $\limsup_{t\to\infty}(\log X_i(t))/t=\gamma$, but $\liminf_{t\to\infty}(\log X_i(t))/t=-\infty.$

The properties of VSMs have been studied in depth. Namely, in Section 12.1 of \cite{fk09} the authors study the asymptotic behaviour of the model \eqref{eqdefVSM} using Bessel processes, and show that if $\alpha=0$ then the (strict) local martingale deflator can be expressed as 
\begin{equation} \label{ZVSM}
Z(t)=\frac{\sqrt{X_1(0)\cdots X_n(0)}}{\mathfrak{R}_1(u)\cdots \mathfrak{R}_n(u)} \exp\left\{ \frac{1}{2}\int_0^u \sum_{i=1}^n \mathfrak{R}^{-2}_i(s)\mathrm{d}s \right\} \Bigg|_{u=\Lambda(t)},
\end{equation}
where $\mathfrak{R}_i(\cdot)$ are independent Bessel processes in dimension $2(1+\alpha)$, and $$\Lambda(t):=\frac{1}{4}\int_0^t X(s)\mathrm{d}s$$ is a time change.
The joint density of market weights in VSMs has been computed in \cite{pal11}, answering an open question (Remark 13.4) of \cite{fk09}. Pal shows that the law of the market weights is identical to that of the multi-allele Wright-Fisher diffusion model from population genetics.

Since the VSM is a special case of a sufficiently volatile model, it follows from Section \ref{secintvolRA} that entropy-weighted portfolios are long-term relative arbitrages with respect to the market. Furthermore, one can show that the diversity-weighted portfolio with parameter $p=1/2$ is an arbitrage relative to the market for time horizons $T>\frac{8\log n}{n-1}$ --- see Example 12.1 of \cite{fk09}. And finally, the $\lambda$-mirror image of the equally-weighted portfolio $$\hat{\pi}_i(t):=\lambda\frac{1}{n}+(1-\lambda)\mu_i(t), \qquad \lambda=\frac{n(1+\alpha)}{2},$$ has the num\'{e}raire property: $V^\pi/V^{\hat{\pi}}(\cdot)$ is a supermartingale for all $\pi(\cdot)$ (see Section \ref{numEU}).

\subsubsection*{Relative arbitrage over short time horizons} \label{RAshortintvol}
The question whether there exist short-term relative arbitrage opportunities in VSMs was first raised in \cite{fk05}, and solved in \cite{bf08} where a relative arbitrage was constructed explicitly. The VSM is Markovian and satisfies \eqref{weird}, so one knows \emph{a priori} that relative arbitrage over short horizons exist by Proposition 2 of \cite[pp. 1194--1195]{fernk10}. 

The way in which Banner and Fernholz construct a short-term relative arbitrage in \cite{bf08} is by generating a portfolio using the standard incomplete Gamma function, and following this portfolio up to a certain stopping time, after which the market portfolio is implemented. Explicitly, they generate the portfolio $\pi(\cdot)$ by the function $G(x_1,\ldots,x_n):=\sum_{i=1}^n f(x_i)$, where $f(y)$ is defined for $y\in[0,1]$ as
\begin{equation}
f(y):= \begin{cases}
\Gamma(c+1,-\log y) := \int_{-\log y}^\infty e^{-r}r^c \mathrm{d}r & \text{ if } 0<y\leq1,\\
0 & \text{ if } y=0.
\end{cases}
\end{equation}
Here the constant is chosen as $c=\frac{8n(n-1)}{T}\int_0^{1/n} -\log r\, \mathrm{d}r $. In equation (3.6) of \cite{bf08}, the authors show the lower bound
\begin{align*}
\log\left(\frac{V^\pi(t)}{V^\mu(t)} \right)\geq&\ \overbrace{\log((n-1)f(\mu_{(n)}(t)) +f(1-(n-1)\mu_{(n)}(t))}^{:=S_1(\mu_{(n)}(t))}-\log(n f(1/n)) \nonumber\\
&+\int_0^t \underbrace{\frac{-\mu_{(n)}(s)f''(\mu_{(n)}(s))}{4\left( f(\mu_{(n)}(s))+(n-1)f\big( \frac{1-\mu_{(n)}(s)}{n-1} \big)  \right)}}_{:= \Theta_1(\mu_{(n)}(s))}\mathrm{d}s \qquad \text{ a.s.}
\end{align*}
Then, for $Y(\cdot)$ the inverse of the function $T_1(Y):= T/2 + \int_{1/n}^Y-\frac{S'_1(r)}{\Theta_1(r)}\mathrm{d}r$, the following stopping rule is defined:
$$T_0:= \inf\{ t\geq T/2 : \mu_{(n)}(t)>Y(t) \}.$$
Finally, a relative arbitrage $\tilde{\pi}(\cdot)$ with respect to the market is defined by setting 
\begin{equation}
\tilde{\pi}(t):= \begin{cases}
\pi(t) & \text{ if } t<T_0,\\
\mu(t) & \text{ if } t\geq T_0.
\end{cases}
\end{equation}
It is shown that the lower bound on the amount of arbitrage guaranteed by $\tilde{\pi}(\cdot)$ tends to zero very quickly as the time horizon becomes shorter. Furthermore, this construction works for any market satisfying the condition \begin{equation} \label{cond2.7}
\tau_{m(t)m(t)}(t)\geq \frac{C}{\mu_{m(t)}(t)}\quad \forall t\geq0\qquad \text{a.s.}
\end{equation}
for some constant $C>0$ and $m(t)$ the index of the stock with smallest capitalisation, i.e.~$\mu_{m(\cdot)}(\cdot)=\mu_{(n)}(\cdot)$. Condition \eqref{cond2.7} holds in VSMs with $C=1/2$, as well as in more general versions of VSMs with $\alpha$ replaced by any drift process $\gamma(\cdot)$ so that the $n$-dimensional SDE \eqref{eqdefVSM} still has a solution.

\subsection{Generalised volatility-stabilised model} \label{introVSMgen}
A generalisation of volatility-stabilised models was introduced in \cite{pic13}, and in the same article the author showed that under certain conditions one can construct relative arbitrages over arbitrarily short time horizons in these generalised models.
\begin{defn} \label{defgenVSM}
Define a \emph{generalised volatility-stabilised model} to be a model of the form
\begin{align} \label{eqdefgenVSM}
&\mathrm{d}\log X_i(t)=\frac{\alpha_i}{2(\mu_i(t))^{2\beta}}\big[K(X(t))\big]^2\mathrm{d}t + \frac{\sigma}{(\mu_i(t))^\beta}K(X(t))\mathrm{d}W_i(t),\\
&X_i(0)>0 \qquad i=1,\ldots,n. \nonumber
\end{align} 
Here, $\alpha_i\geq0$, $\sigma>0$ and $\beta>0$ are given constants, and the given function $K(\cdot):(0,\infty)^n\to(0,\infty)$ is measurable and such that \eqref{eqdefgenVSM} has a weak solution which is unique in distribution.
\end{defn}
Note that the case of $K(\cdot)\equiv1$, $\sigma=1$, $\alpha_i=\alpha$ for $i=1,\ldots,n$, and $\beta=1/2$ corresponds to the VSM \eqref{eqdefVSM}.

Pickov\'a shows in \cite{pic13} that, if $K(\cdot)$ is bounded away from zero, the diversity-weighted portfolio $\mu^{(p)}(\cdot)$ outperforms the market over $[0,T]$ for any $p\leq 2\beta$ and for $T$ sufficiently large. If, in addition, $\beta\geq1/2$, then the same approach as in Proposition 2 in Section 5 of \cite{bf08} can be used to construct a relative arbitrage with respect to the market over any horizon $[0,T]$.

Generalised VSMs have not yet been studied or mentioned in the literature outside of \cite{pic13}, but could offer a general way of modelling the stock market that preserves the sufficient intrinsic volatility property, as well as incorporating the observation that smaller capitalisation stocks tend to have higher volatilities and drifts.

\section{Rank-based models and portfolios}  \label{secrank} 
It has been observed that the log-log capitalisation distribution curve, i.e., the mapping $\log k \mapsto \log \mu_{(k)}(t)$, exhibits great stability over time --- see Figure 5.1 in \cite{f02}. The fact that capital seems to be distributed in a time-independent way according to capitalisation rank (despite the occurrence of extreme events such as crashes) has motivated the study of rank-based models, which were first introduced in \cite{f02}, and in which the drift and volatility coefficients of each stock depend explicitly on its \emph{rank} within the market's capitalisation. These models can be constructed so as to have the stability property described above. The most general type of rank-based model that has been studied in detail so far is the hybrid Atlas model (a type of second-order model\footnote{\emph{Second-order} refers to a model where the coefficients depend on name as well as rank, as opposed to \emph{first-order} models, which have coefficients that depend only on rank.}, see also \cite{fik13a}), introduced by Ichiba \emph{et al} \cite{ipbkf11} as follows:
\begin{align} \label{hybridAtlas}
&d Y_i(t) = \Big( \gamma+\gamma_i+\sum_{k=1}^n g_k\mathbbm{1}_{Q^{(i)}_k}(\mathfrak{Y}(t))\Big)\mathrm{d}t + \sum_{k=1}^n \rho_{ik}\mathrm{d}W_k(t)\nonumber\\
&\qquad\qquad + \sum_{k=1}^n \sigma_k\mathbbm{1}_{Q^{(i)}_k}(\mathfrak{Y}(t)) \mathrm{d}W_i(t),\qquad t\geq0 \\
&Y_i(0)=y_i,\qquad\qquad\qquad\qquad\qquad\qquad\quad\ i=1,\ldots,n,\nonumber
\end{align}
where $Y_i(\cdot):=\log X_i(\cdot)$, $\mathfrak{Y}(\cdot):=(Y_1(\cdot),\ldots,Y_n(\cdot))$, and $\{Q^{(i)}_k\}_{1\leq i,k\leq n}$ is a collection of polyhedral domains in $\mathbb{R}^n$, where $(y_1,\ldots,y_n)\in Q^{(i)}_k$ signifies that coordinate $y_i$ is ranked $k$th among $y_1,\ldots,y_n$. %Note the partition $$\bigcup_{l=1}^n Q^{(i)}_l=\mathbb{R}^n=\bigcup_{k=1}^n Q^{(k)}_k\qquad \forall i,j=1,\ldots,n.$$
We can interpret the above as follows: when $Y(\cdot)\in Q^{(i)}_k$, namely, $Y_i(\cdot)$ is ranked $k$th among $Y_1(\cdot),\ldots,Y_n(\cdot)$, it behaves like a geometric Brownian motion with drift $g_k+\gamma_i+\gamma$ and variance $(\sigma_k+\rho_{ii})^2+\sum_{k\neq i}\rho^2_{ik}$. The constants $\gamma$, $\gamma_i$ and $g_k$ represent a common, a name-based and a rank-based drift respectively, whereas the constants $\sigma_k>0$ and $\rho_{ik}$ represent rank-based volatilities and name-based correlations, respectively. Under additional assumptions on these parameters, see Equation (2.3) in \cite{ipbkf11},
%namely that 
%$$\sum_{j=1}^n g_k + \sum_{i=1}^n \gamma_i=0$$
%and
%$$\diag(\sigma_{\mathbf{p}(1)},\ldots,\sigma_{\mathbf{p}(n)})+(\rho_{ij})_{i,j}\quad\text{ are positive definites matrixes}\quad \forall \mathbf{p}\in S_n,$$
the model \eqref{hybridAtlas} admits a unique weak solution.

It is shown in \cite{ipbkf11} (see for instance their Figure 3) that certain models of the form \eqref{hybridAtlas} indeed lead to the empirical capital distribution curve. The authors also make a brief study of Cover and Jamshidian's universal portfolio in these markets, and show that the conditions for this portfolio to perform extremely well in the long run are naturally met in hybrid Atlas models. However, no further study of the performance of these portfolios is performed.

Note that \eqref{hybridAtlas} is a system of interacting Brownian particles --- this is an active area of research in both mathematical finance as well as pure probability theory, and a lot of progress has been made in recent years. For the sake of brevity, we will not discuss these articles here, but mention a subset of them: \cite{shk12}, \cite{fik13b}, \cite{fikp13}, \cite{iks13}, \cite{ips13}, and \cite{jr13}.

%The study of rank-based models has been focused less on finding relative arbitrages and more on the dynamics of stock prices and market weights in such models, which is interesting from a probability theoretic point of view (these are systems of coupled Brownian motions) and is by now a very active field of research. The motivation for looking at these (very general) models comes from the observation of the stability of the capital distribution curve, as illustrated in Figure 5.1 of \cite{f02}. \cite{f01} first introduced a framework for studying portfolio performance in these models, but not much progress has been made in this direction. A very simple case of a rank-based model, the Atlas model, was introduced and studied in \cite{bfk05}, and an extension was proposed in \cite{ipbkf11}. \cite{ips13} studies the market weights dynamics in an Atlas model, and \cite{jr13} does the same for a mean-field version of this rank-based model. An incomplete list of papers that study the dynamics of coupled Brownian motions are \cite{iks13}, ......

\subsection{Atlas model} \label{introAtlas}
One of the simplest and most studied types of rank-based models is the Atlas model, a first-order model that was introduced in \cite{bfk05} which assigns a non-zero growth rate only to the lowest-ranked stock, which has a positive growth rate and thus ``carries the entire market'' (hence the nomenclature). More precisely, we have \eqref{hybridAtlas} with $\gamma_i=\rho_{ik}=0$ for all $i,k=1,\ldots,n$, $\gamma=g>0,\ g_k=-g$ for $k\leq n-1$, and $g_n=(n-1)g$. It is also assumed that $$\sum_k \sigma^2_k>2 \max_k \{\sigma^2_k\},\qquad 0\leq \sigma^2_2-\sigma^2_1\leq \ldots\leq \sigma^2_n-\sigma^2_{n-1}.$$ In \cite{bfk05} it is shown that all stocks have the same asymptotic growth rate $\gamma$ in the Atlas model, i.e.~$$\lim_{T\to\infty}\frac{1}{T}\log X_i(T)=\gamma\quad \text{a.s.},\ \forall\ i=1,\ldots,n,$$ and that every stock spends roughly $(1/n)^\text{th}$ of the time in any rank: $$\lim_{T\to\infty}\frac{1}{T}\int_0^T \mathbbm{1}_{\mathcal{Q}^{(i)}_k}(\mathfrak{Y}(t))\mathrm{d}t=\frac{1}{n}\quad \text{a.s.},\qquad \forall\ 1\leq i,k\leq n.$$

The dynamics of the market weights have been studied in the Atlas model in \cite{ips13}. This article answers an open question in \cite[p. 170]{fk05} for the case of the Atlas model, namely to determine the distributions of $\mu_i(t)$, of $\mu_{(1)}(t)$ and of $\mu_{(n)}(t)$ for fixed $t>0$. Besides, it solves the problem put forward in Remark 5.3.8 in \cite{f02}, that is to find $$\lim_{T\to\infty}\int_0^T \mu_{(k)}(t)\mathrm{d}t\ \text{ for }\ k=1,\ldots,n.$$ The joint distribution of the long-term relative market weights is studied in a mean-field version of the Atlas model in \cite{jr13}, answering in part the open question in Remark 13.4 in \cite{fk09}. Finally, \cite{shk12} looks at the large-market limit of rank-based models, answering problem 5.3.10 in \cite{f02}.

Portfolio performance remains a practically unstudied topic in rank-based models; some first steps have been taken in \cite{ipbkf11}, but no long-term investment opportunities, let alone relative arbitrages (over finite horizons), have been shown to exist yet. It would be interesting to investigate whether one is able to construct portfolios in say the Atlas model with one of these properties.

\subsection{Rank-based functionally generated portfolios} \label{sec:rankport}
In \cite{f01}, Fernholz generalised the class of functionally generated portfolios of Section \ref{fgportfolios} to allow for functions that do not distinguish between market weights by name, but by rank. Namely, placing ourselves in a general It\^o model \eqref{model}, and applying Ito's rule for convex functions of semimartingales to the ranked market weights, it is shown in \cite[pp. 76--79]{f02} that
\begin{align} \label{semimIto}
\frac{d\mu_{(k)}(t)}{\mu_{(k)}(t)}&= \Big( \gamma_{p_t(k)}(t)-\gamma^\mu(t)+\frac{1}{2} \tau^\mu_{p_t(k)p_t(k)}(t) \Big)\mathrm{d}t + \frac{1}{2}\Big( \mathrm{d}\mathfrak{L}^{k,k+1}(t)-\mathrm{d}\mathfrak{L}^{k-1,k}(t) \Big)\nonumber\\ 
& \qquad +\sum_{\nu=1}^d(\sigma_{p_t(k) \nu}(t)-\sigma_{\mu\nu}(t))\mathrm{d}W_\nu(t),\qquad k=1,\ldots,n.
\end{align}
Here, $p_t(k)$ is the index of the stock that is ranked $k$th at time $t$, so that $\mu_{p_t(k)}(t)=\mu_{(k)}(t)$, and $\mathfrak{L}^{k,k+1}(t)\equiv \Lambda_{\Xi_k}(t)$ is the semimartingale local time at the origin accumulated by the nonnegative process 
\begin{equation} \label{firstdeflocalt}
\Xi_k(t):=\log\big(\mu_{(k)}(t)/\mu_{(k+1)}(t)\big),\quad t\geq0.
\end{equation}
Each $\mathfrak{L}^{k,k+1}(t)$ measures the cumulative effect of the changes that have occurred during the time interval $[0,t]$ between ranks $k$ and $k+1$; we set $\mathfrak{L}^{0,1}(\cdot)\equiv0\equiv \mathfrak{L}^{n,n+1}(\cdot)$.

In this setting, the master equation \eqref{eqmaster} can be generalised as follows: let the generating function $\mathbf{G}:U\to(0,\infty)$ be written as
$$\mathbf{G}(x_1,\ldots,x_n)=\mathfrak{G}(x_{(1)},\ldots,x_{(n)}),\quad x\in U,$$
for some function $\mathfrak{G}\in C^2(U)$ for $U\subset\Delta^n_+$ open, $\Delta^n_+$ as in \eqref{delta+}. Then Theorem 3.1 of \cite{f01} asserts the following counterpart to Lemma \ref{master}:
\begin{lem} \label{genmaster}
Let $\pi(\cdot)$ be the portfolio
\begin{equation} \label{underport}
\pi_{p_t(k)}(t)=\Big( D_k\log \mathfrak{G}(\mu_{(\cdot)}(t)) +1 - \sum_{{l}=1}^n \mu_{({l})}(t)D_{l}\log \mathfrak{G}(\mu_{(\cdot)}(t)) \Big)\cdot \mu_{(k)}(t)
\end{equation}
for $k=1,\ldots,n$, with $\mathfrak{G}\in C^2(U)$. The performance of $\pi(\cdot)$ relative to the market is
\begin{equation} \label{wow}
\log \left( \frac{V^{\pi} (T)}{V^\mu(T)}\right)  = \log \bigg( \frac{ \mathfrak{G}\big(\mu_{(\cdot)}(T)\big)}{ \mathfrak{G}\big(\mu_{(\cdot)}(0)\big)} \bigg)+ \Gamma(T),
\end{equation}
where 
\begin{align} \label{waaa}
\Gamma(T)&:= -\int_0^T \frac{1}{2\mathfrak{G}\big(\mu_{(\cdot)}(t)\big)}\sum_{i,j}^n D^2_{ij} \mathfrak{G}\big(\mu_{(\cdot)}(t)\big)\mu_{(i)}(t)\mu_{(j)}(t)\tau^\mu_{p_t(i)p_t(j)}(t)\mathrm{d}t\nonumber\\
&\qquad + \frac{1}{2}\sum_{k=1}^{n-1}\big(\pi_{p_t(k+1)}(t)-\pi_{p_t(k)}(t)\big)\mathrm{d}\mathfrak{L}^{k,k+1}(t).
\end{align}
We have used the notation $\mu_{(\cdot)}(t):=\big(\mu_{(1)}(t),\ldots,\mu_{(n)}(t)\big)$ here.
\end{lem}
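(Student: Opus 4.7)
The plan is to mimic the three-step structure of the proof of Lemma \ref{master}, but with $\mu_i(\cdot)$ replaced by $\mu_{(k)}(\cdot)$ in the application of It\^o's formula to $\log\mathfrak{G}$, and with careful bookkeeping of the semimartingale local times that appear in the dynamics \eqref{semimIto}.

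First I would observe that the useful expression \eqref{usefe} derived in Step I of Lemma \ref{master} is entirely general, so we still have
\begin{equation*}
\mathrm{d}\log\bigg(\frac{V^\pi(t)}{V^\mu(t)}\bigg) \;=\; \sum_{i=1}^n \frac{\pi_i(t)}{\mu_i(t)}\,\mathrm{d}\mu_i(t) \;-\; \frac{1}{2}\sum_{i,j=1}^n \pi_i(t)\pi_j(t)\,\tau^\mu_{ij}(t)\,\mathrm{d}t.
\end{equation*}
Writing $g_k(t):=D_k\log\mathfrak{G}(\mu_{(\cdot)}(t))$ and $N(t):=1-\sum_l \mu_{(l)}(t)g_l(t)$, the defining relation \eqref{underport} says $\pi_{p_t(k)}(t)=(g_k(t)+N(t))\mu_{(k)}(t)$. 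Re-indexing the first sum by rank and using that $\sum_i\mu_i(t)\equiv 1$ so $N(t)\,\mathrm{d}\sum_i\mu_i(t)=0$, this first sum collapses to $\sum_{k=1}^n g_k(t)\,\mathrm{d}\mu_{p_t(k)}(t)$; similarly the quadratic sum becomes $\sum_{k,l}g_k(t)g_l(t)\mu_{(k)}(t)\mu_{(l)}(t)\tau^\mu_{p_t(k)p_t(l)}(t)\mathrm{d}t$ after invoking the rank-version of \eqref{zeroo}.

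Next I would apply It\^o's formula directly to $\log\mathfrak{G}(\mu_{(\cdot)}(t))$ using the semimartingale decomposition \eqref{semimIto}. Writing $\mathrm{d}\mu_{(k)}(t)=\mathrm{d}\mu_{p_t(k)}(t) + \tfrac{1}{2}\mu_{(k)}(t)\bigl(\mathrm{d}\mathfrak{L}^{k,k+1}(t)-\mathrm{d}\mathfrak{L}^{k-1,k}(t)\bigr)$ (which follows from comparing \eqref{semimIto} with \eqref{dynmu} at $i=p_t(k)$), and using the identity $D^2_{kl}\log\mathfrak{G}=D^2_{kl}\mathfrak{G}/\mathfrak{G}-g_kg_l$ together with $\mathrm{d}\langle\mu_{(k)},\mu_{(l)}\rangle(t)=\mu_{(k)}(t)\mu_{(l)}(t)\tau^\mu_{p_t(k)p_t(l)}(t)\mathrm{d}t$ (since local times are of finite variation and contribute nothing to quadratic covariation), I get
\begin{align*}
\mathrm{d}\log\mathfrak{G}(\mu_{(\cdot)}(t)) =\;& \sum_{k=1}^n g_k(t)\,\mathrm{d}\mu_{p_t(k)}(t) + \tfrac{1}{2}\sum_{k=1}^n g_k(t)\mu_{(k)}(t)\bigl(\mathrm{d}\mathfrak{L}^{k,k+1}(t)-\mathrm{d}\mathfrak{L}^{k-1,k}(t)\bigr)\\
&+ \tfrac{1}{2\mathfrak{G}(\mu_{(\cdot)}(t))}\sum_{k,l=1}^n D^2_{kl}\mathfrak{G}(\mu_{(\cdot)}(t))\,\mu_{(k)}(t)\mu_{(l)}(t)\tau^\mu_{p_t(k)p_t(l)}(t)\,\mathrm{d}t\\
&- \tfrac{1}{2}\sum_{k,l=1}^n g_k(t)g_l(t)\mu_{(k)}(t)\mu_{(l)}(t)\tau^\mu_{p_t(k)p_t(l)}(t)\,\mathrm{d}t.
\end{align*}
Subtracting, the $g_kg_l$ quadratic term cancels exactly against $\tfrac{1}{2}\sum_{i,j}\pi_i\pi_j\tau^\mu_{ij}\mathrm{d}t$ from the first display, and the drift $\sum_k g_k\,\mathrm{d}\mu_{p_t(k)}$ cancels too, leaving only the $D^2\mathfrak{G}$ drift and the local-time terms.

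Finally I would re-index the local-time sum: the coefficient of $\mathrm{d}\mathfrak{L}^{k,k+1}$ in $\tfrac{1}{2}\sum_k g_k\mu_{(k)}(\mathrm{d}\mathfrak{L}^{k,k+1}-\mathrm{d}\mathfrak{L}^{k-1,k})$ is $\tfrac{1}{2}\bigl(g_k(t)\mu_{(k)}(t)-g_{k+1}(t)\mu_{(k+1)}(t)\bigr)$. The central trick — and the main technical step to check — is that $\mathfrak{L}^{k,k+1}$ is supported on $\{\mu_{(k)}=\mu_{(k+1)}\}$, so on this support $N(t)\mu_{(k)}(t)=N(t)\mu_{(k+1)}(t)$ and therefore $g_k(t)\mu_{(k)}(t)-g_{k+1}(t)\mu_{(k+1)}(t)=\pi_{p_t(k)}(t)-\pi_{p_t(k+1)}(t)$. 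Substituting and moving the whole local-time contribution across the equality produces exactly $+\tfrac{1}{2}\sum_{k=1}^{n-1}(\pi_{p_t(k+1)}-\pi_{p_t(k)})\mathrm{d}\mathfrak{L}^{k,k+1}$ inside $\Gamma(T)$, and integrating over $[0,T]$ yields \eqref{wow}--\eqref{waaa}. The main obstacle is the careful justification of the support property of the local times and of the substitution $\mathrm{d}\mu_{(k)}=\mathrm{d}\mu_{p_t(k)}+\text{local-time correction}$; everything else is a direct algebraic rearrangement that parallels Lemma \ref{master}.
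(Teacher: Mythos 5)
The paper does not actually prove this lemma --- it is quoted directly as Theorem 3.1 of \cite{f01} --- so there is no in-text proof to compare against. Your argument is nonetheless correct and is exactly the natural rank-based adaptation of Steps I--III of the proof of Lemma \ref{master}: \eqref{usefe} and the cancellation via the rank-reindexed version of \eqref{zeroo} go through unchanged, the substitution $\mathrm{d}\mu_{(k)}=\mathrm{d}\mu_{p_t(k)}+\tfrac{1}{2}\mu_{(k)}\bigl(\mathrm{d}\mathfrak{L}^{k,k+1}-\mathrm{d}\mathfrak{L}^{k-1,k}\bigr)$ does follow from comparing \eqref{semimIto} with \eqref{dynmu}, and the re-indexing of the local-time sum (using $\mathfrak{L}^{0,1}=\mathfrak{L}^{n,n+1}=0$) gives coefficient $\tfrac{1}{2}(g_{k+1}\mu_{(k+1)}-g_k\mu_{(k)})$ on $\mathrm{d}\mathfrak{L}^{k,k+1}$, which equals $\tfrac{1}{2}(\pi_{p_t(k+1)}-\pi_{p_t(k)})$ precisely because $\mathrm{d}\mathfrak{L}^{k,k+1}$ is carried by $\{\mu_{(k)}=\mu_{(k+1)}\}$, so the $N(t)(\mu_{(k+1)}-\mu_{(k)})$ discrepancy vanishes there. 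You have correctly identified the only two points needing justification (the support property of semimartingale local time at the origin, and the tacit assumption underlying \eqref{semimIto} that ties among ranked weights occur on a Lebesgue-null set of times so that the reindexing $i\leftrightarrow p_t(k)$ is unambiguous $\mathrm{d}t$-a.e.); both are standard and are assumed at the same level of rigour as in the quoted source.
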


%RA over long time horizons: open?
%see JR13 --- best achievable rate of return, EWP in comparison to that

%In \cite{bfk05} (pp. 20) it is shown that the long-term asymptotic (i.e.~the number of equities tends to infinity) growth rate of the EWP is larger than that of the market portfolio, by an amount $\sigma^2/2$. 

Fernholz applies his generalised master equation \eqref{wow} in two settings in \cite{f01}: first to theoretically explain the `size effect', and second to study `leakage' --- see Sections \ref{secsize} and \ref{secleak} below. The above results have not yet been used to construct relative arbitrages --- we make some first steps towards this in Section \ref{secrankedDWP}.

\subsubsection{The size effect} \label{secsize}
This empirically observed effect refers to the tendency of small stocks to have higher long-term returns relative to large-capitalisation stocks. Equation \eqref{wow} can be used to explain this in the following way:
Let $m\in\{2,\ldots,n-1\}$ and let $\mathcal{G}_L(x)=x_{(1)}+\ldots+x_{(m)}$ and $\mathcal{G}_S(x)=x_{(m+1)}+\ldots+x_{(n)}$. These functions generate the large-stock portfolio 
\begin{equation}
\zeta_{p_t(k)}(\cdot) =
\begin{dcases} \frac{\mu_{(k)}(\cdot)}{\mathcal{G}_L(\mu(\cdot))}, &k=1,\ldots,m\\
0, &k=m+1,\ldots,n,
\end{dcases}
\end{equation}
and the small-stock portfolio 
\begin{equation}
\eta_{p_t(k)}(\cdot) =
\begin{dcases} 0, &k=1,\ldots,m\\
\frac{\mu_{(k)}(\cdot)}{\mathcal{G}_S(\mu(\cdot))}, &k=m+1,\ldots,n,
\end{dcases}
\end{equation}
respectively. Lemma \ref{genmaster} implies that the market-relative performances of these portfolios are
\begin{align} \label{biggie}
\log  \left(\frac{V^{\zeta} (T)}{V^\mu(T)}\right)  &= \log  \bigg(\frac{ \mathcal{G}_L\big(\mu(T)\big)}{ \mathcal{G}_L\big(\mu(0)\big)}\bigg) -\frac{1}{2}\int_0^T \zeta_{(m)}(t)\mathrm{d}\mathfrak{L}^{m,m+1}(t),\\
\log \left( \frac{V^{\eta} (T)}{V^\mu(T)}\right)  &= \log  \bigg(\frac{ \mathcal{G}_S\big(\mu(T)\big)}{ \mathcal{G}_S\big(\mu(0)\big)} \bigg)+ \frac{1}{2}\int_0^T \eta_{(m)}(t)\mathrm{d}\mathfrak{L}^{m,m+1}(t).
\end{align}
Combining these gives the performance of one relative to the other as
\begin{equation}\label{blablaa}
\log  \left(\frac{V^{\eta} (T)}{V^\zeta(T)}\right)  = \log  \bigg(\frac{ \mathcal{G}_S\big(\mu(T)\big) \mathcal{G}_L\big(\mu(0)\big)}{\mathcal{G}_L\big(\mu(T)\big) \mathcal{G}_S\big(\mu(0)\big) }\bigg) +\int_0^T \frac{\zeta_{(m)}(t)+\eta_{(m)}(t)}{2}\mathrm{d}\mathfrak{L}^{m,m+1}(t).
\end{equation}
Hence, as Fernholz remarks, if the market exhibits ``stability'', in the sense that the ratio of relative capitalization of small to large stocks remains stable over time, then the first term in \eqref{blablaa} will remain approximately constant in time. The integral with respect to the local time process, however, is increasing; hence $\log \big(V^{\eta} (T)/V^\zeta(T)\big)$ will be positive.

The above is an illustration of how the generalised master equation \eqref{wow} can be applied to make comparisons between portfolios generated by functions of ranked market weights; perhaps this can be used to make almost sure comparisons as well, and thus to construct relative arbitrages. To the author's knowledge, this has not been tried out yet, and is of great interest.

We inform the reader that in \cite{fk09}, the somewhat surprising observation is made that one can empirically estimate the local times used above, using the generating function for the large-cap portfolio, as follows:
\begin{equation} \label{localt}
\mathfrak{L}^{m,m+1}(\cdot)=\int_0^\cdot \frac{2}{\zeta_{(m)}(t)}\mathrm{d}\log\left(\frac{\mathcal{G}_L(\mu(t))V^\mu(t)}{\mathcal{G}_L(\mu(0))V^\zeta(t)}  \right),\quad m=1,\ldots,n-1.
\end{equation}

\subsubsection{Leakage} \label{secleak}
Another phenomenon that the above Lemma \ref{genmaster} allows us to study explicitly is `leakage', being the loss of value incurred by stocks exiting a portfolio contained in a larger market. Namely, consider, as in Example 4.2 in \cite{f01}, the diversity-weighted index of large stocks with parameter $r\in(0,1)$:
\begin{equation} \label{largeDWP}
\mu^\#_{p_t(k)}(t)=\begin{dcases} \frac{\big(\mu_{(k)}(t)\big)^r}{\sum_{l=1}^m \big(\mu_{(l)}(t)\big)^r}, &k=1,\ldots,m\\
0, &k=m+1,\ldots,n,
\end{dcases}
\end{equation}
which is generated by the function $\mathcal{G}_r(x)=\Big(\sum_{l=1}^m\big(x_{(l)}\big)^r \Big)^{1/r}$, analogously to \eqref{genfctdiv}. Now, \eqref{wow} implies that 
\begin{equation} \label{largeDWPrr}
\log \bigg(\frac{V^{\mu^\#}\!(T)}{V^\mu(T)}\bigg) = \log \bigg(\frac{\mathcal{G}_r(\mu(T))}{\mathcal{G}_r(\mu(0))}\bigg) + (1-r)\int_0^T\gamma^*_{\mu^\#}(t)\mathrm{d}t - \int_0^T \frac{\mu^\#_{(m)}}{2}\mathrm{d}\mathfrak{L}^{m,m+1}(t).
\end{equation}
It is then shown in Example 4.3.5 in \cite{f02} that, using \eqref{biggie}, we get
\begin{align} \label{leakleak}
\log \bigg(\frac{V^{\mu^\#}\!(T)}{V^\zeta(T)}\bigg) &= \log\bigg( \frac{\mathcal{G}_r \big(\zeta_{(1)}(T),\ldots,\zeta_{(m)}(T)\big)}{\mathcal{G}_r \big(\zeta_{(1)}(0),\ldots,\zeta_{(m)}(0)\big)}\bigg) + (1-r)\int_0^T\gamma^*_{\mu^\#}(t)\mathrm{d}t\nonumber\\
&\qquad - \int_0^T \frac{\mu^\#_{(m)}-\zeta_{(m)}(t)}{2}\mathrm{d}\mathfrak{L}^{m,m+1}(t);
\end{align}
this follows from the scale-invariance property 
$$\frac{\mathcal{G}_r(x_1,\ldots,x_n)}{x_1+\ldots+x_n}=\mathcal{G}_r \left( \frac{x_1}{x_1+\ldots+x_n},\ldots,\frac{x_1}{x_1+\ldots+x_n}\right).$$
Since for $r\in(0,1)$ and the diversity-weighted portfolio $\mu^{(r)}_i(\cdot)$ we have
\begin{equation}
\min_i \mu^{(r)}_i(t)=\frac{\big(\min_i \mu_i(t)\big)^r}{\sum_j \big(\mu_j(t)\big)^r}\geq\min_i \mu_i(t),
\end{equation}
we get that $\mu^\#_{(m)}(\cdot)\geq\zeta_{(m)}(\cdot)$; hence the last integral in \eqref{leakleak} is monotonically increasing in $T$. Fernholz typifies it as measuring the ``leakage" that occurs when a cap-weighted portfolio is contained inside a larger market of $n$ stocks, and stocks ``leak" from the cap-weighted to the market portfolio.

%%%%%%%%%%%%%%%%%%%%%%%%%%%%%%%%%%%%%%%%%%%%%%%%%%%%%%%

\section{Portfolio optimisation}\label{secoptarb} 
In the past, the main forms of portfolio optimisation in financial mathematics have been expected utility maximisation, mean-variance optimisation and Kelly's criterion. What these approaches have in common is that they require one to take the expectation of (functions of) the future wealth, and since one does not know the future behaviour of stocks with certainty, model assumptions and parameter estimations need to be made. Although the volatility process can be estimated to a fair degree of certainty with sufficient data, the drift process $b(\cdot)$ of a stock is notoriously difficult to estimate accurately (see for instance \cite{lcgr01} and \cite{mmima07}). Since all of the resulting optimal strategies in the above approaches depend explicitly on the drift processes of the stocks that they invest in, their applicability in real life is strongly limited by the inherent uncertainty of parameter estimations, and their real-world performance is not as optimal as predicted in theory (see \cite{uppal09}). One of the largest advantages of the SPT approach is that it overcomes this problem by considering only portfolios (namely FGPs, see Section \ref{fgportfolios}) which do not depend on any unobservable quantities, but only the current market configuration, which is observable. Although the performance of these portfolios is not `optimal' in any sense, some of them can be shown to outperform the market index in a pathwise sense, which together with their implementability makes them a very attractive class of investment opportunities. Recently the question has been raised of finding the `best' strategy with these properties, i.e.~the `optimal relative arbitrage' --- this question has been partly answered in \cite{fernk10}, see Section \ref{optRA} below. 

Another portfolio selection criterion which was proposed in \cite{fernk10}, and solved in the complete Markovian case in \cite{bhs12}, is to characterise, in terms of the market configuration and time to maturity, the minimum amount of initial capital with which one can beat the market portfolio with a certain probability. We will not discuss this criterion in detail, and refer the reader to \cite{bhs12}.% This value function is the smallest non-negative viscosity supersolution of a nonlinear PDE.

\subsection{Num\'eraire portfolio \& expected utility maximisation}   \label{numEU}
Although it is not the portfolio selection criterion of interest, it has been shown that expected utility maximisation can be performed in very general market models. Namely, in \cite{kk07} it is shown that, in a general semimartingale setting, the expected utility maximisation problem to find $\hat{h}(\cdot)\in\mathcal{A}_w$ such that
\begin{equation} \label{EUmax}
\mathbb{E}[U(V^{w,\hat{h}}(T))]=\sup_{h(\cdot)\in\mathcal{A}_w}\mathbb{E}[U(V^{w,h}(T))],
\end{equation}
with $U:(0,\infty)\to\mathbb{R}$ satisfying the Inada conditions, can be solved if and only if NUPBR holds, to wit, if and only if an LMD exists (see Definition \ref{defZ}). Hence an ELMM is not required for solving this problem, even though historically it has been studied mainly in markets where arbitrages do not exist.

The main results of \cite{kk07} relate to a very special portfolio called the \emph{num\'{e}raire portfolio}, which is a portfolio $\rho(\cdot)$ with the property that 
\begin{equation} \label{eqdefnumport}
V^\pi(\cdot)/V^\rho(\cdot) \text{ is a supermartingale for all portfolios } \pi(\cdot).
\end{equation}
Theorem 4.12 in \cite{kk07} shows that this portfolio exists if and only if NUPBR holds. Furthermore, the num\'eraire portfolio is shown to be characterised by the following properties:
\begin{itemize}
\item it maximises the growth rate (i.e., the drift rate of $\log V^\pi(\cdot)$) over all portfolios $\pi(\cdot)$;
\item it maximises the asymptotic growth rate over all portfolios: for any increasing process $H(\cdot)$ with $\lim_{t\to\infty}H(t)=\infty$, $$\limsup_{t\to\infty} \frac{1}{H(t)}\log\bigg(\frac{V^\pi(t)}{V^\rho(t)}\bigg) \leq0\qquad \forall \pi(\cdot);$$
\item it solves the relative log-utility maximisation problem: $$\mathbb{E}\Big[ \log \bigg( \frac{V^\pi(T)}{V^\rho(T)}\bigg) \Big]\leq0\qquad \forall \pi(\cdot).$$
\end{itemize}
We also mention that no relative arbitrage can be constructed with respect to the num\'eraire portfolio --- see, for instance, Remark 6.5 in \cite{fk09}. In VSMs, the num\'eraire portfolio can be computed explicitly. In general, however, the num\'eraire portfolio depends on the drift coefficients as well as the volatility coefficients of the market model, and can therefore not be computed explicitly if we do not assume any expert knowledge on the exact dynamics of these processes, which is what is usually done in SPT.

\subsection{Optimal relative arbitrage} \label{optRA}
The question that naturally arises when constructing relative arbitrages is the following: is there a `best' such investment strategy? This open question was posed in Remark 11.5 in the survey paper \cite{fk09}, and was answered for the case of a complete Markovian NUPBR market in \cite{fernk10}.

The way in which Daniel Fernholz and Ioannis Karatzas interpreted the question of optimal relative arbitrage in \cite{fernk10} is as follows: on a given time horizon $[0,T]$, what is the smallest amount of initial capital from which one can match or exceed at time $t=T$ the market capitalisation $X(T)$? In equation form, find
\begin{equation} \label{fernkoptRA}
\mathfrak{u}(T):=\inf\{w>0\ |\ \exists h(\cdot)\in \mathcal{A}_{wX(0)}\text{ s.t. } V^{wX(0),h}(T)\geq X(T)\quad \mathbb{P}\text{-a.s.}  \}.
\end{equation}
The corresponding strategy $\hat{h}(\cdot)$ which achieves $V^{\mathfrak{u}(T)X(0),\hat{h}}(T)\geq X(T)\ \ \mathbb{P}$-a.s.\! is then the optimal arbitrage.
The authors make many technical assumptions on \eqref{model}, the most important of which are that $\mathbb{F}=\mathbb{F}^W$, $d=n$, that $\sigma(\cdot)$ is invertible, that there exists a square-integrable MPR and that the market is Markovian (i.e., $b(\cdot)$ and $\sigma(\cdot)$ are deterministic functions of time and $\mathfrak{X}(\cdot):= (X_1(\cdot),\ldots,X_n(\cdot))$, the vector of capitalisation processes), under which it holds (by results on hedging in complete markets, e.g.\! in \cite{fk09}, or see Section \ref{hedgeEU}) that $1/\mathfrak{u}(T)$ gives the highest return on investment that one can achieve relative to the market over $[0,T]$;
\begin{equation} \label{fernkrr}
1/\mathfrak{u}(T)=\sup\{q\geq1\ |\ \exists h(\cdot)\in \mathcal{A}_{1}\text{ s.t. } V^h(T)\geq q V^\mu(T)\quad \mathbb{P}\text{-a.s.}  \}.
\end{equation}

With the help of the F\"ollmer measure, see Section \ref{strictRN}, the authors of \cite{fernk10} are able to derive that the optimal arbitrage strategy has the form
\begin{equation} \label{fernkans1}
\frac{\hat{h}_i(t)}{V^{\hat{h}}(t)}=X_i(t) D_i\log U(T-t,\mathfrak{X}(t))+\mu_i(t),\qquad i=1,\ldots,n,\ t\in[0,T],
\end{equation}
with $U:[0,\infty)\times(0,\infty)^n\to(0,1]$ the smallest non-negative solution of the linear parabolic partial differential inequality 
\begin{align}\label{fernkans2}
&\frac{\partial U}{\partial\tau}(\tau,\mathbf{x})\geq \hat{\mathcal{L}}U(\tau,\mathbf{x}),\qquad (\tau,\mathbf{x})\in(0,\infty)\times(0,\infty)^n,\\
&U(0,\cdot)\equiv1,\nonumber
\end{align}
for the linear operator
\begin{equation}\label{fernkans3}
\hat{\mathcal{L}}f:=\frac{1}{2}\sum_{i,j=1}^n x_ix_ja_{ij}(\mathbf{x})D^2_{ij}f+\sum_{i=1}^nx_i\left(\sum_{j=1}^n \frac{x_ja_{ij}(\mathbf{x})}{x_1+\ldots+x_n}  \right)D_if.
\end{equation}
Note that, by the above, $\hat{h}(\cdot)$ depends only on the covariance structure of the market, \emph{not} the rates of return. The authors go on to interpret their results in a probabilistic way, by considering the $([0,\infty)^n\setminus\{\mathbf{0}\})$-valued diffusion process $\mathfrak{Y}(\cdot)$ with infinitesimal generator $\hat{\mathcal{L}}$ and $\mathfrak{Y}(0)=\mathfrak{X}(0)$; for instance, $U(T,\mathfrak{X}(0))$ is identified as the probability that $\mathfrak{Y}(\cdot)$ never hits the boundary of $[0,\infty)^n$ by time $t=T$. Also, in this auxiliary market, the auxiliary market weights $Y_i(\cdot)/(Y_1(\cdot)+\ldots+Y_n(\cdot))$, $i=1,\ldots,n$, have the num\'eraire property, as defined in \eqref{eqdefnumport} in the previous Section \ref{numEU}.

Fernholz and Karatzas extend their characterisation of the highest achievable market-relative return to markets with ``Knightian" uncertainty in \cite{fk11}. They present several different forms of this characterisation, one of which is in terms of the smallest positive supersolution to a non-linear Hamilton-Jacobi-Bellman PDE.

It would be interesting to extend the above results to non-Markovian and, even more so, incomplete markets. Furthermore, the optimal arbitrage \eqref{fernkans1} is characterised in a very indirect way, as the smallest non-negative solution to a complicated equation, to which the explicit solution can only be computed in fairly trivial toy examples. However, it might be possible to use numerical methods to compute optimal portfolios using \eqref{fernkans1}--\eqref{fernkans3}. Still, the strength of these results will always be limited by the quality of covariance estimations.

\subsubsection*{Two-dimensional optimisation} \label{2Dopt}
In \cite{palw13} some initial attempts are made at deriving more concrete results for optimal relative arbitrage, considering only (general) two-dimensional markets and portfolios of the form $\pi(\cdot)=(q(Y(\cdot)),1-q(Y(\cdot)))$, with $q$ a deterministic function and $Y(\cdot):=\log \big(X_2(\cdot)/ X_1(\cdot)\big)$ the log-price ratio. In the special case that one is restricted to constant-weighted portfolios $\pi(\cdot)=(Q,1-Q)$, $Q\in[0,1]$, Pal and Wong show in their Proposition 4.2 that for $[0,\tau]$ an excursion of $Y(\cdot)$, i.e.~an interval such that $\tau=\inf\{t>0\ |\ Y(t)=Y(0)\}$, one has 
\begin{equation}\label{palwone}
\log \bigg(\frac{V^\pi(\tau)}{V^\mu(\tau)}\bigg)-\log \bigg(\frac{V^\pi(0)}{V^\mu(0)}\bigg) =\frac{1}{2}\int_0^\tau q(1-q)\mathrm{d}\left<Y\right>\!(t)=\frac{1}{2}q(1-q)\left<Y\right>\!(\tau).
\end{equation}
Hence, this quantity is maximised when $q=\frac{1}{2}$, so when $\pi(\cdot)$ is the equally-weighted portfolio. For more general functions $q\in C(\mathbb{R},[0,1])$ of finite variation, they show that for $F$ an antiderivative of $q$, and for all $t\geq0$,
\begin{align}\label{palwtwo}
\log\bigg( \frac{V^\pi(t)}{V^\mu(t)}\bigg)-\log \bigg(\frac{V^\pi(0)}{V^\mu(0)}\bigg)&= F(Y(t))-F(Y(0))\\
&\ +\frac{1}{2}\int_{-\infty}^\infty L^y(t)\Big( q(y)(1-q(y))\mathrm{d}y+\mathrm{d}(-q(y)) \Big),\nonumber
\end{align}
with $L^y(t)$ the local time of $Y$ at a point $y\in\mathbb{R}$ over $[0,t]$. This expression is used to explicitly compute the optimal function $q$ in two examples, namely when $Y(\cdot)$ is a Bang-bang process and when it is an Ornstein-Uhlenbeck process. The objective is to maximise \eqref{palwtwo}, and the method used is to assume that one is given a `weight function' $w$ which represents the expected local time at each value of $y$, thus translating beliefs and statistical information about the dynamics of $Y(\cdot)$. 

These preliminary results look very promising, and it would be of great value to extend them to higher-dimensional markets. Pal and Wong make some first attempts in \cite{palw13} at optimising within a class of FGPs, incorporating (statistical) information in this optimisation and applying information theory to SPT. These are all very undiscovered areas, and are of great interest to the author; see Sections \ref{futinfo} and \ref{futinc}.

\section{Hedging in SPT framework} \label{sechedge} 
The theory of hedging in markets without an ELMM has only been developed in recent years. Some initial progress was made in \cite{ch05} and \cite{fk09}, but only later were hedging strategies computed in the European case (and a Markovian market, see \cite{ruf13a}) and optimal stopping times characterised in the American case (see \cite{bkx07}), the latter solving an open problem in \cite{fk09}. Furthermore, \cite{kard12val} derives valuation formulas for both European and American exchange options in general semimartingale models with the possibility of default, showing how traditional parity formulas are altered when NFLVR fails.

\subsection{Hedging European claims} \label{hedgeEU}
The topic of hedging European claims in markets without an ELMM was first explored by Fernholz and Karatzas in Section 10 of their survey \cite{fk09}. They considered those claims given by an $\mathcal{F}(T)$-measurable random variable $Y$ with 
\begin{equation} \label{fkdefY}
0<y:=\mathbb{E}[Z(T)Y]<\infty,
\end{equation}
and as usual defined the upper hedging price of $Y$ as
\begin{equation} \label{fkhedge}
\mathcal{U}^Y(T):=\inf\{w>0\ |\ \exists h(\cdot)\in \mathcal{A}_{w}\text{ s.t. } V^{w,h}(T)\geq Y\quad \mathbb{P}\text{-a.s.}  \}.
\end{equation}
In other words, $\mathcal{U}^Y(T)$ is the minimal amount required at time $t=0$ to be able to super-replicate the claim $Y$ by terminal time $t=T$.\footnote{If $Y=X(T)$, the total market value at time $t=T$, this becomes the problem of finding the optimal relative arbitrage --- see Section \ref{optRA}.} Since an ELMM may not exist, one can no longer rely on the well-known hedging result from ``classical'' mathematical finance (see \cite{ds95c}), namely that 
\begin{equation} \label{classichedge}
\mathcal{U}^Y(T)=\sup_{\mathbb{Q}\in\mathcal{M}}\mathbb{E}^\mathbb{Q}[Y],
\end{equation}
with $\mathcal{M}$ the set of ELMMs. However, as long as the infimum in \eqref{fkhedge} is finite, the fact that $Z(\cdot)V^{w,h}(\cdot)$ is a positive local martingale (and thus supermartingale) for any $w>0$ and $h(\cdot)\in\mathcal{A}_w$ implies that
\begin{equation} \label{fksuperhedge}
w=Z(0)V^{w,h}(0)\geq\mathbb{E}[Z(T)V^{w,h}(T)]\geq \mathbb{E}[Z(T)Y]=y.
\end{equation}
Since this holds for all $w>0$, Fernholz and Karatzas conclude that $\mathcal{U}^Y(T)\geq y$ (which is also trivially true if the infimum in \eqref{fkhedge} is infinite). The authors go on to note that if $\mathbb{F}=\mathbb{F}^W$ and $n=d$ in \eqref{model}, i.e.~the number of driving BMs is equal to the number of stocks, an application of the martingale representation theorem to the process $\mathbb{E}[Z(T)Y\, |\, \mathcal{F}_t]$ implies that any claim $Y$ can be replicated, since this construction only requires the existence of an LMD, not an ELMM. Hence, one can have completeness --- that is, $\mathcal{U}^Y(T)= y$ for every claim $Y$ as in \eqref{fkdefY} --- in a market where NFLVR fails (for instance, a market satisfying \eqref{eqdefdiv}).

It was in \cite{ruf13a} that progress was made in characterising hedging strategies for European claims. Namely, Ruf shows that in a Markovian incomplete setting (i.e.~\eqref{model} with $b(\cdot)$ and $\sigma(\cdot)$ deterministic functions of time and the current market configuration), the optimal hedging strategy, in the sense of smallest initial wealth required, for a European claim is a delta hedge, which is well-known in NFLVR markets. Besides the Markovian assumption, Ruf assumes the existence of a square-integrable MPR, which implies NUPBR (see Section \ref{secarb}); however, an FLVR may exist. It is observed that these assumptions imply the existence of a Markovian MPR $\theta(\cdot)$ in $L^2$. Under the assumption that the claim $Y$ is measurable with respect to $\mathcal{F}^\mathfrak{X}(T)\subset\mathcal{F}(T)$, with $\mathfrak{X}(\cdot)$ the vector of capitalisation processes, Proposition 3.1 in \cite{ruf13a} asserts that for any square-integrable MPR $\nu(\cdot)$, one has the surprising property 
\begin{equation} \label{rufmpr}
\mathbb{E}\left[\frac{Z^\nu(T)}{Z^\nu(t)}Y\ \Big|\ \mathcal{F}(t)\right]\leq \mathbb{E}\left[\frac{Z^\theta(T)}{Z^\theta(t)}Y\ \Big|\ \mathcal{F}(t)\right],
\end{equation}
using the notation \eqref{defItoLMD}. This result, which is interesting on its own, implies that the hedging price $h^p$ of a European claim $Y=p(S(T))$ does not depend on the choice of MPR. It allows the author to prove, in Theorem 4.1 of \cite{ruf13a}, that the optimal hedging strategy is given by $\eta^p(t,s) := \nabla h^p(t,s)$, for 
\begin{equation} \label{rufdelta}
h^p(t,s):=\mathbb{E}\left[\frac{Z^\theta(T)}{Z^\theta(t)}p(S(T))\ \Big|\ S(t)=s\right],
\end{equation}
and requires initial wealth $\nu^p:=h^p(0,S(0))$.\footnote{This result holds under the assumption that $h^p\in C^{1,2}(\mathcal{U}_{t,s})$ for some neighbourhood $\mathcal{U}_{t,s}$ of $(t,s)$, for which the author gives sufficient conditions on the market model and payoff function $p$.} Namely, then $V^{\nu^p,\eta^p}(t)=h^p(t,S(t))$ $\forall t \in [0,T]$, and for any $\tilde{\nu}>0$ and admissible super-replicating strategy $\tilde{\eta}\in\mathcal{A}_{\tilde{\nu}}$ we have $\tilde{\nu}\geq\nu^p$. This is remarkable, since it implies that any claim $Y$ as above is replicable, despite the market not being complete ($d>n$ in general). The hedging price function $h^p$ is also shown to solve a PDE which depends only on the covariance structure of the model. It is then noted that if $h^p$ is sufficiently smooth, it can be characterised as the smallest nonnegative solution to that PDE, with terminal condition $h^p(T,s)=p(s)$. Ruf then derives a modified put-call parity, and investigates a change of measure technique to simplify the computation of $h^p$ --- see Section \ref{strictRN} below for a discussion.

Of course, it would be interesting to see how the above results could be extended to non-Markovian market models and more complex claims. However, as Ruf points out in Remark 4.2 of \cite{ruf13a}, Theorem 4.1 can no longer be expected to hold in a model where the mean rates of return and volatilities have additional stochasticity. Furthermore, by definition one will not be able to hedge all claims in an incomplete market; hence any more general results will have to be weaker and of a different form.

\subsection{Hedging American claims} \label{hedgeUS}
In Remark 10.4 of their survey \cite{fk09}, Fernholz and Karatzas posed the open question of developing a theory for pricing American claims in NUPBR markets, in particular characterising the optimal exercise time. This problem was solved by Bayraktar \emph{et al} \cite{bkx07}, where the authors characterise the hedging strategy and optimal stopping time for American options; in particular, they solve the optimisation problem
\begin{equation} \label{OS} \tag{OS}
\left.\begin{aligned}
&\text{compute }\ v:=\sup_{\tau\in\mathfrak{T}}\mathbb{E}[Z(\tau)g(X_1(\tau))]\quad\\
&\text{find }\ \hat{\tau}\ \text{ s.t. }\ \mathbb{E}[Z(\hat{\tau})g(X_1(\hat{\tau}))]=v\quad
\end{aligned}\right\}
\end{equation}
where $\mathfrak{T}$ is the set of all stopping times, $g:\mathbb{R}_+\to\mathbb{R}_+$ is a convex payoff function, $X_1(\cdot)$ is the one-dimensional price process (a positive semimartingale) and $Z(\cdot)$ is an LMD, so that $Z(\cdot)X_1(\cdot)$ is a local martingale.

The solution to \eqref{OS} is given in Theorem 2.4 of \cite{bkx07} as follows:
\begin{itemize}
\item The value of the problem is $v=\mathbb{E}[Z(\tau^*)g(X_1(\tau^*))]+\delta(\tau^*)$.
\item A stopping time $\hat{\tau}\in\mathfrak{T}$ is optimal if and only if $\tau^*\leq\hat{\tau}$ and $\delta(\hat{\tau})=0$.
\item Optimal stopping times exist if and only if $\delta(\tau^*)=0$, in which case $\tau^*$ is the smallest optimal stopping time.
\end{itemize}
We explain the notation; write $L(\cdot):=Z(\cdot)X_1(\cdot)$ and $Y(\cdot):=Z(\cdot)g(X_1(\cdot))$. The function $\delta:\mathfrak{T}\to\mathbb{R}_+$ is defined in Lemma 2.1 of \cite{bkx07} as\footnote{It is assumed that $X(\infty):=\lim_{t\to\infty}X(t)$ exists.}
\begin{equation}
\delta(\tau):= \uparrow\lim_{n\to\infty}\mathbb{E}[Y(\tau\wedge\sigma^n)]-\mathbb{E}[Y(\tau)],
\end{equation}%replace X by Z g(X_1)
where $(\sigma^n)_{n\in\mathbb{N}}$ is a localising sequence for $L(\cdot)$, meaning that $L(\sigma^n\wedge \cdot)$ is a uniformly integrable martingale for all $n\in\mathbb{N}$, and $\sigma^n\to\infty$ as $n\to\infty$. It is shown that $\delta$ is nonnegative and independent of the choice of $(\sigma^n)_{n\in\mathbb{N}}$. Furthermore, $\tau^*$ is a certain stopping time --- see equation (2.3) in \cite{bkx07}. It is remarked that if the problem has a finite horizon $T\in\mathfrak{T}$, the value of \eqref{OS} is $v=\mathbb{E}[Y(T)]+\delta(T)$, which had been shown before in Theorem A.1 in \cite{ch05}.

If the condition $\delta(\tau^*)=0$ does not hold, Bayraktar \emph{et al} show that there does not exist an optimal exercise time. Also, they show explicitly in Example 3.2 that Merton's `no early exercise theorem' may fail in markets without NFLVR, i.e., that it may be optimal to exercise before terminal time $T$. It should be mentioned that \cite{bkx07} proves the above results in a model where there is also a positive, non-increasing discounting process $\beta(\cdot)$, which may become zero at some point. Here, we have ignored this added complication of the model for the sake of presentation.

Despite it being very theoretical, Bayraktar \emph{et al} are able to apply their Theorem 2.4 in some toy examples, showing the optimality of candidate stopping times. However, their paper does not offer a way of searching for such candidates, and thus the applicability of their results in real-world models remains very restricted for now.

\section{Non-equivalent measure changes} \label{secmeasure}
Despite the fact that ELMMs typically do not exist in the general type of It\^o model \eqref{model} that is considered in SPT, as NFLVR is allowed to and usually does fail, one can still make meaningful but possibly non-equivalent changes of measure. In SPT, such measure changes have thus far been made for two purposes: to create a measure that takes the r\^{o}le of an ELMM (see \cite{fernk10} and \cite{ruf13a}), or to construct markets in which NFLVR fails (see \cite{or06}, \cite{rr13} and \cite{ct13}).

\subsection{Strict local martingale Radon-Nikodym derivatives} \label{strictRN}
Building on F\"{o}llmer's work which introduced the ``exit measure'' of a supermartingale, some articles related to SPT have made a change of measure that generalises changing to a riskless measure: \cite{fernk10} does this in order to characterise the optimal arbitrage relative to the market, and \cite{ruf13a} to simplify the computation of hedging strategies of European claims in Markovian markets. For this, the necessary technical assumption is made that the probability space $\Omega$ is the space of right-continuous paths $\omega:[0,T)\to \mathbb{R}^n\cup\{\Delta\}$, $0<T\leq\infty$, where $\Delta$ is the so-called ``absorbing point'': it has the property that 
$$\omega(t)=\Delta,\quad t\in[0,T]\qquad \Rightarrow\qquad \omega(u)=\Delta, \quad \forall u\in[t,T], \forall \omega\in\Omega.$$
Under this condition, the ``exit measure'' (or ``F\"{o}llmer measure'') of a $\mathbb{P}$-local martingale $Y(\cdot)$ is defined in \cite{follmer72} as the measure $\mathfrak{P}$ on the predictable $\sigma$-algebra of $[0,\infty]\times\Omega$ given by
\begin{equation} \label{follmer}
\mathfrak{P}((T,\infty]\times A):=\frac{\mathbb{E}[Y(T)\mathbf{1}_A]}{\mathbb{E}[Y(0)]},\qquad A\in\mathcal{F}(T), T\in[0,\infty).
\end{equation}
As is explained in Section 7 of \cite{fernk10}, Theorem 4 in \cite{ds95} and Theorem 1 and Lemma 4 of \cite{pp07} give that for each positive and continuous $\mathbb{P}$-supermartingale $\Lambda(\cdot)$, a probability measure $\mathbb{Q}$ exists such that
\begin{itemize}
\item $\mathbb{P}\ll \mathbb{Q}$
\item $\Lambda(\cdot)$ is a continuous $\mathbb{Q}$-martingale
\item $\mathrm{d}\mathbb{P}/\mathrm{d}\mathbb{Q}=\Lambda(T)$
\end{itemize}
The case $\Lambda(\cdot):=X(0)/Z^\theta(\cdot)X(\cdot)$, with $X(\cdot)$ the total market capitalisation as defined in \eqref{defXX} and $Z^\theta(\cdot)$ the LMD as defined in \eqref{defItoLMD}, is studied and used in \cite{fernk10}; the case $\Lambda(\cdot):=1/Z^\theta(\tau^\theta\wedge\cdot)$, with $\tau^\theta$ the first hitting time of zero by the process $1/Z^\theta(\cdot)$, is used in Section 5 of \cite{ruf13a}. In both cases, the absorbing point $\Delta$ represents an explosion of $Z^\theta(\cdot)$, which does not occur under $\mathbb{P}$ by assumption but is possible under the probability measure $\mathbb{Q}$ as defined above. That is, let $\tau^\theta_n=\inf\{t\in[0,T]: Z^\theta(t)\geq n \}$ and define $\tau^\theta=\lim_{n\to\infty}\tau^\theta_n $; then in Proposition 1 of \cite{fernk10} it is proved that
\begin{equation} \label{fkfollmer}
\mathfrak{P}((T,\infty]\times \Omega)=\mathbb{Q}(\tau^\theta>T)=\mathfrak{u}(T),
\end{equation}
a representation of the F\"{o}llmer measure where $\mathfrak{u}(T)$ is the smallest wealth required for creating an arbitrage relative to the market, as in \eqref{fernkoptRA} (i.e., the wealth necessary for creating an optimal arbitrage; see Section \ref{optRA}).

Ruf, on the other hand, proves a generalised Bayes' rule for the measure $\mathbb{Q}$ defined above (see Theorem 5.1 in \cite{ruf13a}), which shows that $\mathbb{Q}$ behaves as an ELMM up to the explosion time $\tau^\theta$. He uses it to derive the dynamics up to $\tau^\theta$ of the price processes and of $1/Z^\theta(\cdot)$ under $\mathbb{Q}$. Because all of these are $\mathbb{Q}$-local martingales, the computation of hedging strategies can be done more easily under this measure, and can then be translated back to $\mathbb{P}$ by a known change of measure. An example of this is given when the price process is a three-dimensional Bessel process with drift, and the claim is a European call option --- see Example \ref{rufex} in our paper.

The aforementioned articles are thus far the only works related to SPT which make use of the F\"{o}llmer measure, and there may be more possibilities for applying this technique elsewhere. Other articles which study supermartingales as Radon-Nikodym densities are \cite{imp11} and \cite{pr13}.

\subsection{Constructing markets with arbitrage}
Changing measure to construct a market that allows for arbitrage is not entirely new, see for instance the construction of arbitrage possibilities in Bessel processes in \cite{ds95}. However, this technique has only recently been applied to construct markets with properties relevant to SPT. Other ways of constructing these NUPBR markets in which NFLVR fails have also been suggested, namely through filtration shrinkage \cite{protter13} and enlargement \cite{fontetal13}. 

%OR06
Osterrieder and Rheinl\"{a}nder were the first to relate techniques of non-equivalent measure changes to SPT, in their article \cite{or06}. They present a way of constructing diverse markets (in the sense of \eqref{eqdefdiv}) which is more generic than the explicit construction in \cite{fkk05}, and show that arbitrages exist in such markets by design.
More precisely, the authors define their diverse market models by making a non-equivalent measure change from a pre-model. Under this pre-model, the time-evolution of the vector of capitalisation processes $\mathfrak{X}(\cdot)$ is governed by a probability measure $\mathbb{P}^0$, and is given explicitly as the stochastic exponential $\mathcal{E}(M(\cdot))$ of an unspecified $n$-dimensional continuous $\mathbb{P}^0$-local martingale $M(\cdot)$. Hence, $\mathbb{P}^0\in\mathcal{M}^e(\mathfrak{X})$, where $\mathcal{M}^e(\mathfrak{X})$ denotes the set of ELMMs for $\mathfrak{X}(\cdot)$, and NFLVR holds in the pre-model. One crucial assumption, baptised the ``non-degeneracy'' assumption, is made on the pre-model. It rules out diversity, and is shown to hold in It\^o models of the form \eqref{model} with assumptions \eqref{BV} and \eqref{ND} on the covariance structure; it is
\begin{equation}\label{orND}
\exists\, T>0,\ \delta\in(0,1)\ \text{ s.t. } \left\{\begin{aligned}
&\inf_{\mathbb{Q}\in\mathcal{M}^e(\mathfrak{X})} \mathbb{Q}\Big( \sup_{0\leq t\leq T} \mu_{(1)}(t)\geq 1-\delta \Big) >0\\
& \mathbb{P}^0\Big( \sup_{0\leq t\leq T} \mu_{(1)}(t)\geq 1-\delta \Big) <1.
\end{aligned}\right.
\end{equation}
Osterrieder and Rheinl\"{a}nder then make the following straightforward measure change:
\begin{equation} \label{ordQdP}
\frac{\mathrm{d}\mathbb{P}}{\mathrm{d}\mathbb{P}^0}=\begin{dcases}
0 & \text{if } \sup_{0\leq t\leq T} \mu_{(1)}(t)\geq 1-\delta \\
c & \text{otherwise},
\end{dcases}
\end{equation}
with $c\in\mathbb{R}$ a normalising constant, which exists by assumption \eqref{orND}. It is clear that diversity holds under $\mathbb{P}$ by construction, and that $\mathbb{P}\ll \mathbb{P}^0$ but $\mathbb{P}\not\sim \mathbb{P}^0$. Finally, using an optional decomposition theorem from \cite{follmk97}, it is shown that the $\mathbb{P}^0$-super-replicating strategy of the claim $$\mathbf{1}_{\big\{\frac{\mathrm{d}\mathbb{P}}{\mathrm{d}\mathbb{P}^0}>0  \big\}}$$ is an arbitrage under $\mathbb{P}$ over the horizon $[0,T]$, in the sense of the first point of Definition \ref{defarb} (i.e., as in \cite{ds94}). Hence NFLVR fails under $\mathbb{P}$.

We note that the results from \cite{or06} are mainly theoretical, in that they apply an absolutely continuous measure change to show \emph{existence} of diverse markets where arbitrage opportunities exist. However, this construction is not suitable for creating \emph{explicit} models, as the computations become undoable. Namely, \cite{or06} derives the $\mathbb{P}$-dynamics of $X(\cdot)$, but the drift process in this expression is given as a function of the density process $Z(\cdot)$ in Lenglart's extension of Girsanov's theorem, and a projection of $Z(\cdot)$ using the Galtchouk-Kunita-Watanabe decomposition --- hence it is implicit. 

%RR13
A more generic result is proven in \cite{rr13}, where the authors demonstrate a systematic way of constructing markets in which NFLVR fails, but the weaker NUPBR holds. Similarly to \cite{or06}, Ruf and Runggaldier take a pre-model (governed by $\mathbb{P}^0$) in which the capitalisation processes $X_i(\cdot)$, $i=1,\ldots, n$ are non-negative local martingales. They then assume the existence of a $\mathbb{P}^0$-martingale $Y(\cdot)$, the candidate density process, with $Y(0)=1$ and which satisfies assumptions \eqref{rr1} and \eqref{rr2} below. Let $\tau$ denote the first hitting time of $0$ by $Y(\cdot)$; the first assumption is that
\begin{equation} \label{rr1}
\mathbb{P}^0(\tau\leq T)>0\qquad \text{and} \qquad \mathbb{P}^0(\{Y(\tau-)>0\}\cap\{\tau\leq T\})=0.
\end{equation}
The second assumption is as follows:\footnote{We use the notation $\mathcal{A}$ of Definition \ref{tradingstrat} for the set of admissible trading strategies.}
\begin{equation} \label{rr2}
\exists\, x\in(0,1),\, h(\cdot)\in\mathcal{A}\ \text{ s.t. }\ V^{x,h}(T)\geq \mathbf{1}_{\{Y(T)>0\}}\quad \mathbb{P}^0\text{-a.s.}
\end{equation}
Defining a non-equivalent probability measure $\mathbb{P}$ by
\begin{equation} \label{rrdQdP}
\frac{\mathrm{d}\mathbb{P}}{\mathrm{d}\mathbb{P}^0}=Y(T),
\end{equation}
Theorem 1 of \cite{rr13} then shows that under $\mathbb{P}$, the market satisfies NUPBR but not NFLVR; namely, any predictable $h(\cdot)$ as in \eqref{rr2} is an arbitrage under $\mathbb{P}$.
This ostensibly overly artificial construction is systematic in the following way: the authors note that in \emph{any} NUPBR-market with measure $\mathbb{P}$ in which NFLVR fails, one has the existence of a probability measure $\mathbb{P}^0$ and a $\mathbb{P}^0$-martingale $Y(\cdot)$ satisfying \eqref{rr1} such that \eqref{rrdQdP} holds. Thus the above assumptions are not only sufficient, but also necessary for the type of market considered.
Ruf and Runggaldier finish their short paper with several examples of explicit applications of their construction, demonstrating how markets with NUPBR but without NFLVR can be created in some simple cases. %This computability makes their article suitable for finding explicit market models with the desired property that only NUPBR but not NFLVR holds, which have several applications in other studies.

%CT13
In their article \cite{ct13}, Chau and Tankov make an identical measure change to \cite{rr13}, but focus more on characterising the \emph{optimal} arbitrage, in the sense of largest guaranteed riskless profit, and consider Radon-Nikodyn derivatives which translate an investor's beliefs. Namely, they assume a pre-model with measure $\mathbb{P}^0$ and assumptions identical to Ruf and Runggaldier, as well as the existence of a $\mathbb{P}^0$-martingale $Y(\cdot)$ satisfying \eqref{rr1} and \eqref{rr2}.\footnote{The authors of \cite{ct13} equivalently formulate \eqref{rr2} as $\sup_{\mathbb{Q}\in\mathcal{M}^e(X)} \mathbb{E}^\mathbb{Q}[\mathbf{1}_{\{Y(T)>0\}}]<1$, to wit, the minimal super-replication price of the claim $\mathbf{1}_{\{Y(T)>0\}}$ is assumed less than 1.} Their contribution is noting that the $\mathbb{P}^0$-superhedging strategy $\hat{h}(\cdot)$ of the claim $\mathbf{1}_{\{Y(T)>0\}}$ is the optimal arbitrage in the $\mathbb{P}$-market; in other words, it guarantees the highest possible lower bound on riskless profit:\footnote{Note the similarity to \eqref{fernkrr}, where the optimal \emph{relative} arbitrage guarantees the highest return \emph{relative} to the market.}
\begin{equation} \label{ctdef}
V^{\hat{h}}(T)\geq \sup\Big\{ c>0 : \exists h(\cdot)\in\mathcal{A}_1 \text{ s.t. } V^h(T)\geq c\quad \mathbb{P}\text{-a.s.}  \Big\} \quad \mathbb{P}\text{-a.s.}
\end{equation}
Finally, the authors propose and study an explicit density 
\begin{equation}\label{ctdens}
\frac{\mathrm{d}\mathbb{P}}{\mathrm{d}\mathbb{P}^0}\bigg|_{\mathcal{F}_t}=Y(t):= \frac{\mathbb{P}^0(\sigma>T|\mathcal{F}_t)}{\mathbb{P}^0(\sigma>T)},
\end{equation}
for $\sigma$ some stopping time, which has the economic interpretation of translating an investor's belief that the event described by $\sigma$ will not happen before time $T$; definition \eqref{ctdens} implies that $\mathbb{P}(\sigma>T)=1$. Chau and Tankov then compute what these measure changes would look like in many explicit examples, both with continuous and discontinuous processes. In a toy example, where the pre-model dynamics are given by $X_1(t)=1+W^{\mathbb{P}^0}(t)$, with $W^{\mathbb{P}^0}(t)$ a one-dimensional $\mathbb{P}^0$-Brownian motion, and the investor believes $\sigma=\inf\{t>0:X_1(t)\leq0\}$, they are able to derive the resulting price dynamics under $\mathbb{P}$ as well as the optimal arbitrage (which is shown to be \emph{fragile} in the sense of \cite{gr11}). However, this toy example is extremely simple, and such explicit results cannot be expected in more complicated models.
Concluding, the results in \cite{ct13} seem too specific to be applicable in SPT. On an unrelated note, it might be interesting to investigate whether one can say anything sensible about the economic equilibrium which arises, in the sense of a dynamic stochastic general equilibrium (DSGE), when agents have diverse beliefs of the form \eqref{ctdens} --- see also \cite{br12}.

A more direct way of constructing markets with desired properties is taken in \cite{sar14}, where the author introduces and studies an explicit diverse market model, with the property that smaller stocks have larger drifts than large-cap stocks, as is the case in VSMs.

\section{Own research so far} \label{secme}
\subsection{Data study} \label{data}
In order to get an idea of how well the portfolios proposed by SPT perform in real markets, we have implemented them using historical market data.
For this empirical study we use 1761 trading days of daily market capitalisation data (computed using end-of-day stock prices and numbers of outstanding shares), starting on 1 January 2007 and ending on 31 December 2013, tracking a subset of 390 of the firms that were in the S\&P 500 index \emph{on the starting date of the data set}. We obtained this data from the CRSP data set\footnote{The CRSP data set is obtainable from \url{http://wrds-web.wharton.upenn.edu/wrds/}}; we track companies from an initial date so as to get a realistic and unbiased investment simulation. For instance, three firms (namely LEH, DYN and CIT) go bankrupt, and several firms merge or are acquired. For the sake of simplicity, we treat M\&A's by setting the discontinued stock's value constant and equal to its last recorded value. We are yet to incorporate dividends, which we expect to only slightly influence our results below (to our disadvantage, since relatively to the market our portfolios invest less in large firms, which tend to pay more dividends).

We use \verb;R; to programme our simulations --- the code is available from the author upon request. We do not display all that we investigated here, but the main directions of study in this discrete-time setting were:
\begin{itemize}
\item implementing and comparing the diversity-weighted, entropy-weighted, equally-weigh-ted and market portfolios, with different parameters;
\item studying the impact of proportional transaction costs on the relative performance of portfolios that rebalance;
\item experimenting with different rebalancing rules (such as total variance or relative entropy with respect to the target portfolio) so as to minimise transaction costs;
\item visualising the market-relative performance of portfolios \eqref{eqmaster};
\item computing the quantities in \eqref{heypal}, thus better understanding the information theoretic approach developed in \cite{palw13}.
\end{itemize}

An example of the first is visualised in Figure \ref{fig1}; the latter has provided us with a new insight into the reason behind the empirically observed good performance of the EWP --- see the end of Section \ref{futinfo}.

\begin{figure} [htp]
\centering
\includegraphics[width=.9\textwidth]{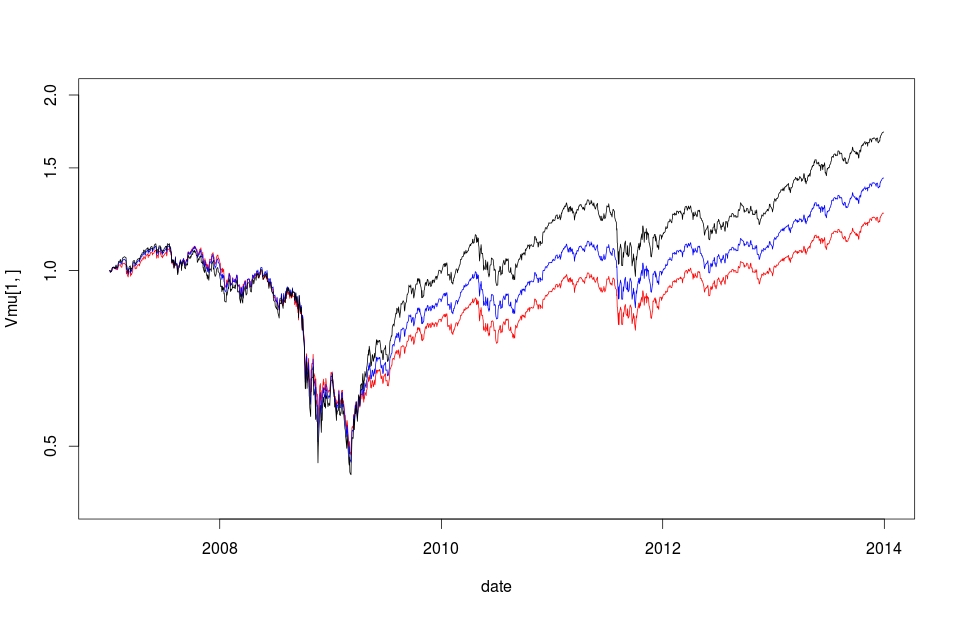}
\caption{Frictionless evolution of wealth processes corresponding to $\pi(\cdot)$ with \textcolor{red}{$p=1$} (market portfolio), \textcolor{blue}{$p=0.5$}, and $p=0$ (EWP).}  \label{fig1}
\end{figure}

\subsection{Diversity-weighted portfolio with negative $p$} \label{DWPneg}
Recall the diversity-weighted portfolio (DWP) $\pi(\cdot)$, defined in \cite{fkk05} as
\begin{equation}
\pi_i(t):= \frac{(\mu_i(t))^p}{\sum_j (\mu_j(t))^p}\qquad i=1,\ldots,n.
\end{equation}
This long-only portfolio is generated by $\mathbf{G}_p:x\mapsto \left(\sum_{i=1}^n x_i^p\right)^{1/p}$. Usually, $p\in(0,1)$, and one can show that in a diverse nondegenerate market, $\pi(\cdot)$ beats the market over sufficiently long time horizons. The portfolio $\pi(\cdot)$ with parameter $p=0$ corresponds to the equally-weighted portfolio (EWP), $\pi_i(\cdot)=1/n$.

\subsubsection{Observations from data} \label{secobs}
One can observe from data that the equally-weighted portfolio outperforms diversity-weigh-ted portfolios with $p\in(0,1)$, which in turn outperform the market --- see Figure \ref{fig1}. More precisely, the performance of $\pi(\cdot)$ seems to increase monotonically as $p$ decreases (and as $\pi(\cdot)$ moves away from the market portfolio, increasing the weights put on the lowest ranked stocks). This motivated us to consider diversity-weighted portfolios with negative parameter $p$; it turns out that such portfolios, for $p$ not too negative, would have performed extremely well over the past ten years, and that this remains true if one includes realistic proportional transaction costs --- see Figure \ref{fig3}.

\begin{figure} [htp]
\centering
\includegraphics[width=.9\textwidth]{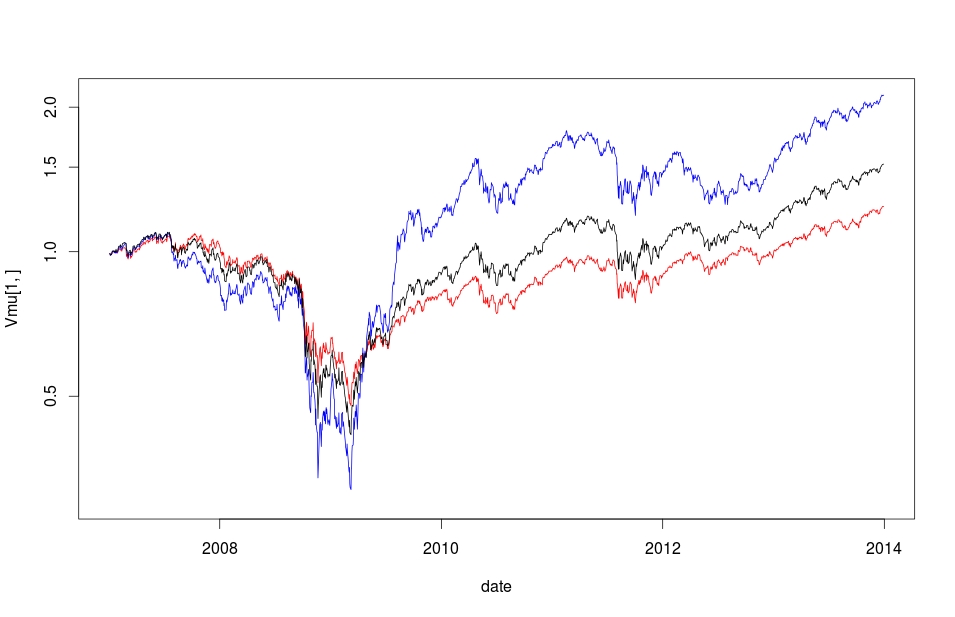}
\caption{Evolution of wealth processes corresponding to $\pi(\cdot)$ with \textcolor{red}{$p=1$} (market portfolio), \textcolor{blue}{$p=-1$}, and $p=0$ (EWP), all with the adjustment that $\pi_i(t)=0$ whenever $\mu_i(t)=0$. We assume proportional transaction costs of 1\% of the value traded, and rebalance once every 20 trading days.} \label{fig3} 
\end{figure}

For our empirical studies we use 1761 trading days of daily market capitalisation data, starting on 1 January 2007 and ending on 31 December 2013, tracking a subset of 390 of the firms that were in the S\&P 500 index \emph{on the starting date of the data set}. We obtained this data from the CRSP data set; we track companies from an initial date so as to get a realistic and unbiased investment simulation. For instance, three firms (namely LEH, DYN and CIT) go bankrupt, and several firms merge or are acquired.
See the Appendix for an alternative visualisation using this data.

%\begin{figure} [htp]
%\centering
%\includegraphics[width=\textwidth]{fig2.jpeg}
 %\caption{Frictionless evolution of wealth processes corresponding to $\pi(\cdot)$ with \textcolor{red}{$p=1$} (market portfolio), \textcolor{blue}{$p=-2$}, and $p=0$ (EWP).} \label{fig2}
%\end{figure}

\subsubsection{Theoretical motivation} \label{hello}
We ask ourselves: why does $\pi(\cdot)$ with $p<0$ perform so well? One can check that diversity no longer implies that this portfolio is a RA wrt $\mu(\cdot)$, as it would if $p\in(0,1)$. However, let us assume non-degeneracy \eqref{ND} %(i.e., all eigenvalues of the covariance matrix $\sigma(t)\sigma^T(t)$ are bounded away from zero by some $\varepsilon>0$)
as well as the stronger ``non-failure assumption'':\footnote{We assume, WLOG, that $\delta<1/n.$}
\begin{equation} \label{nofail}
\exists\, {\delta}>0\ \text{ s.t. }\ \mu_{(n)}(t)\geq {\delta} \quad \forall t\geq0\quad \mathbb{P}\text{-a.s.}
\end{equation}
This is assumption (21) in Theorem 2.11 of \cite{palw13}, and implies diversity with parameter $(n-1)\delta$. We will restrict the values that $p$ can take later on.

Note that, for $p<0$, we have the bounds \eqref{gpbound}:
\begin{equation} \label{gpboundd}
n^{1-p}=\sum_{i=1}^n \Big(\frac{1}{n}\Big)^p \leq \sum_{i=1}^n \mu^p_i(t)=(\mathbf{G}_p(\mu(t))^p\leq (n-1)\delta^p + (1-(n-1)\delta)^p < n\delta^p; 
\end{equation}
hence we have the bound
\begin{equation} \label{boubound}
\log \bigg(\frac{\mathbf{G}_p(\mu(T))}{\mathbf{G}_p(\mu(0))}\bigg)> \log (n\delta),
\end{equation}
a negative number.\footnote{Note that this lower bound is necessarily negative for any non-constant generating function.}

When $p\in(0,1)$, one has that $$\pi_{(1)}(t)=\frac{\mu^p_{(1)}(t)}{\sum_i\mu^p_i(t)}\leq \mu_1(t),$$ since in that case $x\mapsto x^p$ is an \emph{increasing} function. However, for $p<0$, $x\mapsto x^p$ is a \emph{decreasing} function, and the above inequality no longer holds (hence the usual proof fails). However, under assumption \eqref{nofail} and using \eqref{gpboundd}, we get that 
\begin{equation} \label{ineq}
\pi_{(1)}(t)=\frac{\mu^p_{(n)}(t)}{\sum_i\mu^p_i(t)}\leq \frac{\delta^p}{n^{1-p}}=\frac{(n\delta)^p}{n}<1
\end{equation}
if we make the additional assumption that $p\in(\log n/\log (n\delta),0)$.
Non-degeneracy, by Lemma 3.4 of \cite{fk09} (originally proved in the Appendix of \cite{fkk05} --- see Lemma \ref{lem3}), together with \eqref{ineq}, implies that
\begin{equation} \label{lowerbd}
\int_0^T \gamma^*_\pi(t)\mathrm{d}t\geq \frac{\varepsilon}{2}\int_0^T(1-\pi_{(1)}(t))\mathrm{d}t\geq \frac{\varepsilon}{2}T(1-(n\delta)^p/n).
\end{equation}

Hence, using Fernholz's master equation \eqref{eqmaster}, we conclude that 
\begin{align} \label{resultone}
\log\bigg( \frac{V^\pi(T)}{V^\mu(T)}\bigg)&= \log \bigg(\frac{\mathbf{G}_p(\mu(T))}{\mathbf{G}_p(\mu(0))}\bigg) + (1-p)\int_0^T \gamma^*_\pi(t)\mathrm{d}t\nonumber\\
&>\log (n\delta) + (1-p)\frac{\varepsilon}{2}T(1-(n\delta)^p/n) \geq 0\quad \text{ a.s.},\nonumber\\ & \text{ provided }\ T\geq T^*_\delta:=\frac{-2n \log (n\delta)}{\varepsilon (1-p)(n-(n\delta)^p)};
\end{align}
i.e., the portfolio $\pi(\cdot)$ beats the market over sufficiently long time horizons. \qed

\subsubsection{Outperforming `normal' DWP} \label{outp}
Under one additional assumption besides \eqref{nofail} and non-degeneracy, namely that of bounded variance \eqref{BV}, one can show that DWPs with negative parameter $p$ can be relative arbitrages with respect to DWPs with $p>0$. %that the eigenvalues of the covariance matrix $\sigma(t)\sigma^T(t)$ are also bounded from above by some $K>0$.

Take $p^-\in(\log n/\log n\delta,0)$ and $p^+\in(0,1)$, and let $\pi^-(\cdot)$ and $\pi^+(\cdot)$ be the diversity-weighted portfolios with parameters $p^-$ and $p^+$, respectively. Note that 
\begin{equation} \label{onex}
1\leq \big(\mathbf{G}_{p^+}(\mu(t))\big)^{p^+} \leq n^{1-p^+}.
\end{equation}
By Lemma 3.4 of \cite{fk09} (an analog of Lemma \ref{lem3}), bounded variance implies the a.s.\! inequality
\begin{equation} \label{twox}
\gamma^*_{\pi^+}(t)\leq 2K(1-\pi^+_{(1)}(t))\quad \forall t\geq0.
\end{equation}
Non-degeneracy implies \eqref{lowerbd}, which, together with \eqref{eqmaster}, \eqref{gpboundd}, \eqref{onex}, \eqref{twox}, and the observation that $\pi^+_{(1)}(t)\geq 1/n$, gives
\begin{align} \label{long}
\log \bigg(\frac{V^{\pi^-}(T)}{V^{\pi^+}(T)}\bigg)&= \log \bigg(\frac{V^{\pi^-}(T)}{V^{\mu}(T)}\bigg)-\log \bigg(\frac{V^{\pi^+}(T)}{V^{\mu}(T)}\bigg)\nonumber\\
&= \log\bigg(\frac{\mathbf{G}_{p^-}(\mu(T))\mathbf{G}_{p^+}(\mu(0))}{\mathbf{G}_{p^-}(\mu(0))\mathbf{G}_{p^+}(\mu(T))}\bigg)\nonumber\\&\quad + \int_0^T \Big((1-p^-)\gamma^*_{\pi^-}(t)-(1-p^+)\gamma^*_{\pi^+}(t)  \Big)\mathrm{d}t\nonumber\\
&> \log \bigg(\frac{n^{1/p^-}\delta\cdot 1}{n^{(1-p^-)/p^-}\cdot n^{(1-p^+)/p^+}} \bigg) \nonumber\\ 
&\quad + \Big( \frac{\varepsilon}{2}(1-(n\delta)^{p^-}/n)(1-p^-)-2K\frac{n-1}{n}(1-p^+) \Big)T\nonumber\\
&=\log \delta n^{2-1/p^+} + C\cdot T >0\quad \text{ a.s.,}
\end{align}
provided that
\begin{equation} \label{eqCT}
CT> -\log \Big(\delta n^{2-1/p^+}\Big),
\end{equation}
a positive number (since $\delta<1/n$). Here, we have used the notation
\begin{equation*}
C:=C(p^-,p^+,\varepsilon,K,\delta,n):=\frac{\varepsilon}{2}(1-(n\delta)^{p^-}/n)(1-p^-)-2K\frac{n-1}{n}(1-p^+).
\end{equation*}

Now in the case that $C\leq0$, \eqref{long} is never positive. However if $C>0$, \eqref{eqCT} gives the condition 
\begin{equation}
T>\frac{-\log \Big(\delta n^{2-1/p^+}\Big)}{C};
\end{equation}
hence in this case $\pi^-(\cdot)$ almost surely outperforms $\pi^+(\cdot)$ over sufficiently long time horizons. Now $C>0$ is possible if $p^+$ is `not too small', i.e.~if $\pi^+$ is close to the market portfolio. Namely, $C>0$ if and only if
$$p^+>1-\frac{\varepsilon (n-(n\delta)^{p^-})(1-p^-)}{4K(n-1)},$$
which is a number less than 1. 

%WLOG, assume that $\delta<n^{-1/p^+}$. We can distinguish two cases:
%\begin{itemize}
%\item[(1)] $C\geq0$: since $-\log \delta n^{1/p^+}\!>0$, in this case we have that \eqref{long} holds for all $T>0$, i.e., $\pi^-(\cdot)$ outperforms $\pi^+(\cdot)$ over any time horizon;
%\item[(2)] $C<0$: in this case, \eqref{eqCT} implies $T<(\log \delta n^{1/p^+})/C$, so $\pi^-(\cdot)$ is an arbitrage \emph{only} on the short-term (so for $T$ \emph{small} enough) relative to $\pi^+(\cdot)$.\qed
%\end{itemize}

%interesting, RA only for small T - discuss

\subsubsection{Under-performing `normal' DWP?}
One might wonder whether the construction in section \ref{outp} could work the other way around, i.e.~to show that $\pi^+(\cdot)$ outperforms $\pi^-(\cdot)$ over certain time horizons. We clearly do not want this to be true, as we would arrive at a contradiction, and we show this explicitly for the sake of completeness.

Note that, under the non-degeneracy and bounded variance assumptions, \eqref{lowerbd} and \eqref{twox} hold for any long-only portfolio. Applied to $\pi^+(\cdot)$, \eqref{lowerbd}, together with the earlier observation that \eqref{nofail} implies diversity with parameter $(n-1)\delta$ and the fact that $\pi^+_{(1)}(t)\leq \mu_{(1)}(t)$, leads to 
\begin{equation} \label{diversek}
\int_0^T \gamma^*_{\pi^+}(t)\mathrm{d}t\geq \frac{\varepsilon}{2}\int_0^T(1-\pi^+_{(1)}(t))\mathrm{d}t\geq \frac{\varepsilon}{2}T(n-1)\delta.
\end{equation}
Hence we get that $\exists T>0$ s.t.
\begin{align}
\log \bigg(\frac{V^{\pi^+}(T)}{V^{\pi^-}(T)}\bigg)&= \log\bigg(\frac{\mathbf{G}_{p^+}(\mu(T))\mathbf{G}_{p^-}(\mu(0))}{\mathbf{G}_{p^+}(\mu(0))\mathbf{G}_{p^-}(\mu(T))}\bigg)\nonumber\\&\quad + \int_0^T \Big((1-p^+)\gamma^*_{\pi^+}(t)-(1-p^-)\gamma^*_{\pi^-}(t)  \Big)\mathrm{d}t\nonumber\\
&> \log \Big(\delta n^{2-1/p^+}\Big) + \Big( \frac{\varepsilon}{2}(n-1)\delta(1-p^+)-2K\frac{n-1}{n}(1-p^-)\Big)\cdot T\nonumber\\
&>0\quad \text{ a.s.}
\end{align}
if and only if (since $\log \delta n^{2-1/p^+}\!<0$)
\begin{equation*}
\left.\begin{aligned}
&\frac{\varepsilon}{2}(n-1)\delta(1-p^+)-2K\frac{n-1}{n}(1-p^-)>0\ \\
&\iff\ K<\varepsilon\frac{(1-p^+)\delta}{4(1-p^-)/n}\\
&\text{but: }\ 1-p^+<1-p^-\ \ \text{ and }\ \ \delta<1/n 
\end{aligned} \right\}
 \quad \Rightarrow\ K<\varepsilon.%\quad \text{\blitzd}
\end{equation*}
We have arrived at a contradiction, showing that the above lower bound for\\ $\log V^{\pi^+}(T)/V^{\pi^-}(T)$ is never positive.\qed

\subsubsection{Weakening of non-failure assumption} \label{remove}
We can make a slight adjustment to the DWP with negative $p$, so as to make it a relative arbitrage even in markets where the non-failure condition \eqref{nofail} only holds over the time horizon $[0,T^*_\delta]$, with $T^*_\delta$ the minimal horizon required for this portfolio to beat the market as defined in \eqref{resultone}. Namely, let $\pi(\cdot)$ be the diversity-weighted portfolio with $p\in(\log n/\log n\delta,0)$, fix a threshold value $\delta>0$ and for $t\geq0$ define the portfolio
\begin{equation}\label{pihat}
\hat{\pi}(t):= \begin{cases}
\pi(t) & t<\tau_\delta\\
\mu(t) & t \geq \tau_\delta,
\end{cases}
\end{equation}
where the stopping time $\tau_\delta$ is defined as the first time that condition \eqref{nofail} fails:
\begin{equation}
\tau_\delta:=\inf\{ t>0\ |\ \mu_{(n)}(t)= \delta\}.
\end{equation}
It is easy to see that $\hat{\pi}(\cdot)$ is predictable, since both $\pi(\cdot)$ and $\mu(\cdot)$ are predictable, as well as the event $\{ t \geq \tau_\delta \}$; the latter follows from the fact that we assume the price processes to be adapted and continuous. Note that $\tau_\delta>T^*_\delta$ by assumption.

Now for all $T^*_\delta\leq t<\tau_\delta$, our result \eqref{resultone} that $\pi(\cdot)$ beats the market holds, namely by a linearly increasing log-factor of 
\begin{equation}
f_\delta(t):= -\log (n\delta) + (1-p)\frac{\varepsilon t}{2}\Big(1-\frac{(n\delta)^p}{n}\Big)>0;
\end{equation}
i.e.
\begin{equation}
V^\pi(t)>\exp(f_\delta(t))\cdot V^\mu(t)\quad \forall t\in[T^*_\delta,\tau_\delta)\quad \mathbb{P}\text{-a.s.}
\end{equation}
We therefore have that 
\begin{equation*}
\left.\begin{aligned}
&V^{\hat{\pi}}(t)=V^\pi(t)>V^\mu(t) \quad \forall t\in[T^*_\delta,\tau_\delta) \\
&V^{\hat{\pi}}(\tau_\delta)=V^\pi(\tau_\delta)>V^\mu(\tau_\delta) \\
&V^{\hat{\pi}}(t)=\frac{V^\pi(\tau_\delta)}{V^\mu(\tau_\delta)}\cdot V^\mu(t) > V^\mu(t) \quad \forall t>\tau_\delta\
\end{aligned} \right\}
\ V^{\hat{\pi}}(t) > V^\mu(t)\quad  \forall t\geq T^*_\delta\quad \mathbb{P}\text{-a.s.}
\end{equation*}
and hence $\hat{\pi}(\cdot)$ as defined in \eqref{pihat} is a relative arbitrage with respect to the market portfolio $\mu(\cdot)$ over all time horizons $[0,T]$ with $T\geq T^*_\delta$.

Note also that the above portfolio is still (stochastically) functionally generated, in the generalised sense of \cite{strong13}, by the function $$\mathbf{G}_p(x) \mathbbm{1}_{\{ \tau_\delta > t \}} + \mathbbm{1}_{\{ \tau_\delta \leq  t \}}.$$ A similar construction was used in \cite{bf08} to construct a short-term relative arbitrage in a model class including VSMs. 
%However, unlike Banner and Fernholz, we only need to assume the non-degeneracy condition to construct a portfolio that beats $\mu(\cdot)$ over arbitrarily short time horizons. Furthermore, since non-degeneracy holds in VSMs, these are an example of markets in which portfolio \eqref{pihat} is a short-term relative arbitrage.

An investor who does not want to return to the low-risk market portfolio after stopping time $\tau_\delta$ could do either of the following to hopefully increase her returns: first, she could sell all of her holdings in the lowest-ranked stock when its market weight falls to $\mu_{(n)}(\tau_\delta)=\delta$, investing only in the $n-1$ remaining stocks after time $\tau_\delta$, and so on once another market weight falls to the investor's threshold value $\delta$. Another possibility is to adjust $\hat{\pi}(\cdot)$ as follows:
\begin{equation}\label{pihat2}
\tilde{\pi}(t):= \begin{dcases}
\pi(t) & t<\tau_\delta\\
\frac{V^\mu(\tau_\delta)}{V^\pi(\tau_\delta)} \mu(t) + \frac{V^\pi(\tau_\delta)-V^\mu(\tau_\delta)}{V^\pi(\tau_\delta)} \pi(t)& t \geq \tau_\delta,
\end{dcases}
\end{equation}
This way, we still have $V^{\tilde{\pi}}(t)>V^\mu(t)\ \forall t\in[T^*_\delta,\tau_\delta]\ \mathbb{P}$-a.s., and also 
\begin{align*}
&V^{\tilde{\pi}}(t)=V^\mu(t)+\frac{V^\pi(\tau_\delta)-V^\mu(\tau_\delta)}{V^\pi(\tau_\delta)}V^\pi(t)\geq V^\mu(t) \quad \forall t>\tau_\delta\quad\mathbb{P}\text{-a.s.},\\
&\mathbb{P}(V^{\tilde{\pi}}(t) > V^\mu(t))>0.
\end{align*}
That is, the portfolio \eqref{pihat2} puts only just enough funds aside to ensure beating the market after time $\tau_\delta$, and continues to invest the remaining gains due to investment in $\pi(\cdot)$ up to time $\tau_\delta$ in the DWP with $p<0$, giving potential extra gains if the market does not crash after time $\tau_\delta$ (which for $\tilde{\pi}(\cdot)$ to be a relative arbitrage needs to be assumed to have non-zero probability). If the market does crash, $V^\pi(\cdot)$ will go to zero (see the Discussion in Section \ref{disco!}), and we are left with $V^\mu(\cdot)$, thus matching the market.

\subsubsection{Attempts at removing the non-failure assumption} \label{remove?}
It would be nice to further weaken or even remove entirely the assumption of non-failure, as defined in \eqref{nofail}. Re-examining our first proof in Section \ref{hello}, we see that it works thanks to the non-failure assumption in two ways:
\begin{enumerate}
\item Non-failure gives a lower bound to $\mathbf{G}_p(\mu(\cdot))$, namely the second inequality in \eqref{gpboundd} (the first inequality always holds):
\begin{equation} \label{secondbound}
n^{1-p} \leq (\mathbf{G}_p(\mu(t))^p\ \textcolor{red}{<}\ n\delta^p.
\end{equation}
Without \eqref{nofail}, the function $\mathbf{G}_p(\mu(\cdot))$ is only bounded below by $0$.
\item Condition \eqref{nofail} ensures that $\mu^p_{(n)}(t)\leq \delta^p$, and hence that
\begin{equation} \label{secondpi}
\pi_{(1)}(t)\leq\frac{(n\delta)^p}{n}<1\quad \forall t\geq0.
\end{equation}
\end{enumerate}
Both of these are crucial for the proof; 1.\! for ensuring a lower bound on the log-ratio of $\mathbf{G}_p(\mu(\cdot))$ at different times (i.e.~\eqref{boubound}), and 2.\! for proving a lower bound on the finite-variation term in the master equation (i.e.~\eqref{lowerbd}): namely, non-degeneracy implies, by Lemma 3.4 of \cite{fk09}, that $\gamma^*_\pi(t)\geq\frac{\varepsilon}{2}(1-\pi_{(1)})$, which with \eqref{secondpi} (or the weaker assumption of diversity) gives a positive lower bound on $\gamma^*_\pi(t)$.

Keeping the above in mind, it is tempting to `mix' the diversity-weighted portfolio $\pi(\cdot)$ with another portfolio (i.e.~take certain proportions in $\pi(\cdot)$ and, for instance, the market) so as to ensure \eqref{secondbound} without \eqref{nofail}. One wishes to do this in such a way as to keep the lower bound on the finite-variation term in the master equation (denoted in \cite{fk09} by $\mathfrak{g}(\cdot)$); see Table \ref{table1} for some suggestions for mixes, one should interpret it as $\mathbf{G}^{[1]}$ being $\mathbf{G}_p$, and the other $\mathbf{G}^{[i]}$ representing the generating functions of for instance the entropy-weighted, market and equally-weighted portfolios.

\begin{table}[htp] 
\centering
\begin{tabular}{| c | c | c |} 
\hline
$\hat{\mathbf{G}}= $ & $\hat{\pi}(\cdot)=$ & $\hat{\mathfrak{g}}(\cdot)=$\\
\hline \hline
$a+b\mathbf{G}^{[1]}$ & $\frac{\mathbf{G}^{[1]}(\cdot)\pi^{[1]}(\cdot) + a\mu(\cdot)}{\mathbf{G}^{[1]}(\cdot)+a}$ & $\frac{\mathbf{G}^{[1]}(\cdot)}{\mathbf{G}^{[1]}(\cdot)+a}\mathfrak{g}^{[1]}(\cdot)$\\
\hline
$\big(\mathbf{G}^{[1]}\big)^q$ & $q\pi^{[1]}(\cdot) + (1-q)\mu(\cdot)$ & $\frac{q(q-1)}{\mathbf{G}^{[1]}(\cdot)}\mathfrak{g}^{[1]}(\cdot)$  \\
\hline
$\exp(\mathbf{G}^{[1]})$ & $\mathbf{G}^{[1]}(\cdot)(\pi^{[1]}(\cdot) + (1-\mathbf{G}^{[1]}(\cdot))\mu(\cdot)$ & $\mathbf{G}^{[1]}(\cdot) \mathfrak{g}^{[1]}(\cdot)$  \\
\hline
$\prod_i \mathbf{G}^{[i]}$ & $\sum_i \pi^{[i]}(\cdot)-(n-1)\mu(\cdot)$ & \ldots \\
\hline
$\sum_i \mathbf{G}^{[i]}$ & $\frac{\sum_i \mathbf{G}^{[i]}(\cdot)\pi^{[i]}(\cdot)}{\sum_i \mathbf{G}^{[i]}(\cdot)}$ &  $\frac{\sum_i \mathbf{G}^{[i]}(\cdot)\mathfrak{g}^{[i]}(\cdot)}{\sum_i \mathbf{G}^{[i]}(\cdot)}$\\
\hline
\end{tabular}
\caption{Let $\pi^{[i]}(\cdot)$ be generated by $\mathbf{G}^{[i]}$, $\ i=1,\ldots,n$, and let $\hat{\pi}(\cdot)$ be generated by $\hat{\mathbf{G}}$. The above shows the relation between these portfolios for different relations between the generating functions; we write $\mathbf{G}^{[i]}(\cdot):=\mathbf{G}^{[i]}(\mu(\cdot))$.} \label{table1}
\end{table}

Our progress thus far with finding `good' mixes is limited to the following: let $\hat{\mathbf{G}}:=\mathbf{G}_{p^+}+\mathbf{G}_{p^-}$, with $p^+\in(0,1)$ and $p^-<0$ as before in Section \ref{outp} (although $p^-$ may now take any negative value). Then by the above, $\hat{\mathbf{G}}$ generates the portfolio
\begin{equation} \label{crazypi}
 \hat{\pi}(t):=\mathfrak{p}(t) \pi^+(t) +  (1-\mathfrak{p}(t)) \pi^-(t),
\end{equation}
where the time-dependent proportion is given as a deterministic function of the market weights:\footnote{We define $\mathbf{G}_{p^-}(\bar{x}):=\lim_{x\to\bar{x}}\mathbf{G}_{p^-}(x)=0$ for any $\bar{x}$ in the simplex with $\prod_i \bar{x}_i=0$.}
\begin{equation}
\mathfrak{p}(t):=\frac{\mathbf{G}_{p^+}(\mu(t))}{\mathbf{G}_{p^+}(\mu(t))+\mathbf{G}_{p^-}(\mu(t))}\in(0,1].
\end{equation}
Also by the above table, we have that the drift process is
\begin{align}
 \hat{\mathfrak{g}}(t)&=\mathfrak{p}(t) \mathfrak{g}^+(t) +  (1-\mathfrak{p}(t)) \mathfrak{g}^-(t)\nonumber\\
&= (1-p^+) \mathfrak{p}(t) \gamma^*_{\pi^+}(t)+ (1-p^-)(1- \mathfrak{p}(t)) \gamma^*_{\pi^-}(t).
\end{align}
If we now assume non-degeneracy and diversity over the horizon $[0,T]$ with parameter $\delta$, then we have as in \eqref{diversek} that 
$$\gamma^*_{\pi^+}(t)\geq \frac{\varepsilon}{2}(1-\pi^+_{(1)}(t))\geq \frac{\varepsilon}{2}(1-\mu_{(1)}(t))\geq \frac{\varepsilon}{2} \delta.$$
Since also $\gamma^*_{\pi^-}(t)\geq0$, as for any portfolio, noting that 
$$\mathfrak{p}(t)\geq \frac{1}{1+n^{1/p^--1}}>0,$$
and using the bounds \eqref{gpboundd} and \eqref{onex}, we conclude by the master equation that
\begin{align}
\log \bigg(\frac{V^{\hat{\pi}}(T)}{V^\mu(T)}\bigg)&= \log\left( \frac{\mathbf{G}_{p^+}(\mu(T))+\mathbf{G}_{p^-}(\mu(T))}{\mathbf{G}_{p^+}(\mu(0))+\mathbf{G}_{p^-}(\mu(0))} \right) + (1-p^+)\int_0^T \mathfrak{p}(t) \gamma^*_{\pi^+}(t)\mathrm{d}t\nonumber\\
&\qquad +  (1-p^-)\int_0^T(1- \mathfrak{p}(t)) \gamma^*_{\pi^-}(t)\mathrm{d}t\nonumber\\
&\geq - \log\left( n^{1/p^+-1}+n^{1/p^--1} \right) + \frac{\varepsilon}{2}T(1-p^+)\frac{1-\delta}{1+n^{1/p^--1}}>0\quad \text{ a.s.},\nonumber\\ 
& \text{ provided }\ T > \frac{2(1+n^{1/p^--1}) \log\left( n^{1/p^+-1}+n^{1/p^--1} \right)}{\varepsilon (1-p^+)(1-\delta)}.
\end{align}
Hence the portfolio defined in \eqref{crazypi} beats the market over sufficiently long time horizons under the assumptions of non-degeneracy and (weak) diversity, a property that the portfolio $\pi^+(\cdot)$ also has on its own.

At the end of the next section we make another attempt at removing the non-failure assumption.

\subsubsection{Rank-based diversity-weighted portfolios} \label{secrankedDWP}
\emph{This section contains errors, apologies. The assumed bounds on semimartingale local time do not hold.}\\
In Example 4.2 in \cite{f01} (see also Remark 11.9 in \cite{fk09}), Fernholz considers a variation of the diversity-weighted portfolio with parameter $r$ which only invests in the $m<n$ highest-ranked stocks (by capitalisation), namely:
\begin{equation} \label{largeDWPp}
\mu^\#_{p_t(k)}(t)=\begin{dcases} \frac{\big(\mu_{(k)}(t)\big)^r}{\sum_{l=1}^m \big(\mu_{(l)}(t)\big)^r}, &k=1,\ldots,m\\
0, &k=m+1,\ldots,n,
\end{dcases}
\end{equation}
with $p_t(k)$ the index of the stock that is ranked $k$th at time $t$, so that $\mu_{p_t(k)}(t)=\mu_{(k)}(t)$. Portfolio \eqref{largeDWPp} is generated by the function $\mathcal{G}_r(x)=\big(\sum_{l=1}^m (x_{(l)})^r \big)^{1/r}$, analogous to $\mathbf{G}_p$ above --- see also Section \ref{sec:rankport}. The master equation \eqref{wow} applied to \eqref{largeDWPp} gives \eqref{largeDWPrr}:
\begin{equation} \label{prettyimportant}
\log \bigg(\frac{V^{\mu^\#}\!(T)}{V^\mu(T)}\bigg) = \log \bigg(\frac{\mathcal{G}_r(\mu(T))}{\mathcal{G}_r(\mu(0))} \bigg) + (1-r)\int_0^T\gamma^*_{\mu^\#}(t)\mathrm{d}t - \int_0^T \frac{\mu^\#_{(m)}(t)}{2}\mathrm{d}\mathfrak{L}^{m,m+1}(t).
\end{equation}
Here, $\mathfrak{L}^{k,k+1}(t)\equiv \Lambda_{\Xi_k}(t)$ is, as defined in equation \eqref{firstdeflocalt}, the semimartingale local time at the origin accumulated by the nonnegative process 
\begin{equation}
\Xi_k(t):=\log\big(\mu_{(k)}(t)/\mu_{(k+1)}(t)\big),\quad t\geq0.
\end{equation}

\subsubsection*{Case 1: $\mathbf{r\in(0,1)}$}
Assume non-degeneracy and diversity with parameter $\delta>0$, and let $r\in(0,1)$. Then straightforward modifications of \eqref{onex} and \eqref{diversek}, together with the observations that\\ $\mu^\#_{(m)}(t)\leq 1/m$ and $\mathfrak{L}^{m,m+1}(T)\leq T$, imply by \eqref{prettyimportant} that 
\begin{align}
\log \bigg(\frac{V^{\mu^\#}\!(T)}{V^\mu(T)}\bigg) &\geq -\frac{1-r}{r}\log n  + \frac{\varepsilon}{2}\delta (1-r)T - \int_0^T \frac{\mu^\#_{(m)}(t)}{2}\mathrm{d}\mathfrak{L}^{m,m+1}(t)\nonumber\\
&\geq -\frac{1-r}{r}\log n  + \Big[\frac{\varepsilon}{2}\delta (1-r) - \frac{1}{2m}\Big]T>0,
\end{align}
provided that $m$ can be chosen big enough (i.e.~we require a large market) such that $\frac{\varepsilon}{2}\delta (1-r) - \frac{1}{2m}>0$, and that $T$ is sufficiently large. Hence, under these conditions the large-stock DWP with $r\in(0,1)$ beats the market over long time horizons.

A similar result can be obtained for the \emph{small-stock} diversity-weighted portfolio with $r\in(0,1)$, namely the portfolio defined as
\begin{equation} \label{smallDWP}
\mu^\$_{p_t(k)}(t)=\begin{dcases} 0, &k=1,\ldots,m-1\\
\frac{\big(\mu_{(k)}(t)\big)^r}{\sum_{l=m}^n \big(\mu_{(l)}(t)\big)^r}, &k=m,\ldots,n.
\end{dcases}
\end{equation}
Again assume non-degeneracy, as well as
\begin{equation}\label{kappass}
\exists\, \kappa>0\ \text{ s.t. }\ \mu_{(m)}(t)\geq\kappa,\quad  \forall t\in[0,T],\quad \mathbb{P}\text{-a.s. }
\end{equation}
which says that no more than $n-m$ companies will go bankrupt by time $T$.\footnote{Note that $\kappa\in(0,1/m)$ and that \eqref{kappass} implies diversity with parameter $(m-1)\kappa$.} Now \eqref{prettyimportant} becomes 
\begin{equation} \label{prettyimportant2}
\log \bigg(\frac{V^{\mu^\$}\!(T)}{V^\mu(T)}\bigg) = \log \bigg(\frac{\mathcal{G}_r(\mu(T))}{\mathcal{G}_r(\mu(0))} \bigg)+ (1-r)\int_0^T\gamma^*_{\mu^\$}(t)\mathrm{d}t + \int_0^T \frac{\mu^\$_{(m)}(t)}{2}\mathrm{d}\mathfrak{L}^{m-1,m}(t).
\end{equation}
The equivalent of \eqref{onex} is
\begin{equation*} 
\kappa^r\leq \big(\mathcal{G}_r(\mu(t))\big)^r= \sum_{l=m}^n \big( \mu_{(l)} (t)\big)^r \leq (n-m+1)n^-r < n^{1-r},
\end{equation*}
which together with the fact that $\mathfrak{L}^{m-1,m}(T)\geq 0$ implies the following:
\begin{equation} \label{inspire}
\log \bigg(\frac{V^{\mu^\$}\!(T)}{V^\mu(T)}\bigg) > \frac{1}{r}\log \left(\frac{\kappa^r}{n^{1-r}}  \right) + \frac{\varepsilon}{2}(m-1)\kappa (1-r)T >0
\end{equation}
for $T$ sufficiently large; i.e., $\mu^\$(\cdot)$ is a relative arbitrage with respect to $\mu(\cdot)$ over long time horizons.

\subsubsection*{Case 2: $\mathbf{r<0}$}
When $r<0$, equations \eqref{prettyimportant} and \eqref{prettyimportant2} still hold for the large-stock and small-stock DWPs, respectively. Let us consider the large-stock portfolio \eqref{largeDWPp} with $r<0$ first: assume non-degeneracy \eqref{ND} and \eqref{kappass}, so that the following bounds hold:
\begin{equation} \label{theother}
m^{1-r} \leq \big(\mathcal{G}_r(\mu(t))\big)^r = \sum_{l=1}^m \big( \mu_{(l)} (t)\big)^r < m\kappa^r,
\end{equation}
and
\begin{equation} \label{theother2}
\gamma^*_{\mu^\#}(t)\geq\frac{\varepsilon}{2}\big(1-\mu^\#_{(1)}(t)\big)
\end{equation}
by non-degeneracy, as usual. Because of assumption \eqref{kappass}, we have
\begin{equation}
\mu^\#_{(1)}(t)=\frac{\mu^r_{(m)}(t)}{\sum_{l=1}^m \mu^r_{(l)}(t)}\leq \frac{\kappa^r}{m^{1-r}}<1,
\end{equation}
since necessarily $\kappa<1/m$, which together with \eqref{prettyimportant}, \eqref{theother} and \eqref{theother2} implies
\begin{align}
\log \bigg(\frac{V^{\mu^\#}\!(T)}{V^\mu(T)}\bigg)&> -\log (m\kappa)  + \Big[\frac{\varepsilon}{2}\Big(1-\frac{\kappa^r}{m^{1-r}}\Big) (1-r)-\frac{1}{2m}\Big] T \nonumber\\
&>-\log (m\kappa)  + \frac{1}{m}\Big[\frac{\varepsilon}{2}(m-1) (1-r)-\frac{1}{2}\Big] T>0,
\end{align}
provided that $m$ can be chosen big enough and that $T$ is sufficiently large, as in Case 1. Hence in this case the large-stock diversity-weighted portfolio with $r<0$ beats the market over sufficiently long time horizons. The assumption \eqref{kappass} that no more than $n-m$ companies will crash can be made reasonable by choosing large $n$, and $m$ not too close to $n$; it can be empirically observed that the number of companies that crash in a given time interval is typically quite small. Hence we expect the portfolio $\mu^\#(\cdot)$ with $r<0$ to perform well in real markets.

For the small-stock DWP \eqref{smallDWP} with $r<0$ one can see that a stronger assumption than \eqref{kappass} is required (besides non-degeneracy), namely our original non-failure assumption \eqref{nofail}. Under these assumptions, it can once again be shown, using \eqref{prettyimportant2}, that $\mu^\$(\cdot)$ with $r<0$ beats the market over sufficiently long time horizons (without any restrictions on $m$).

\begin{figure} [htp]
\centering
\includegraphics[width=.9\textwidth]{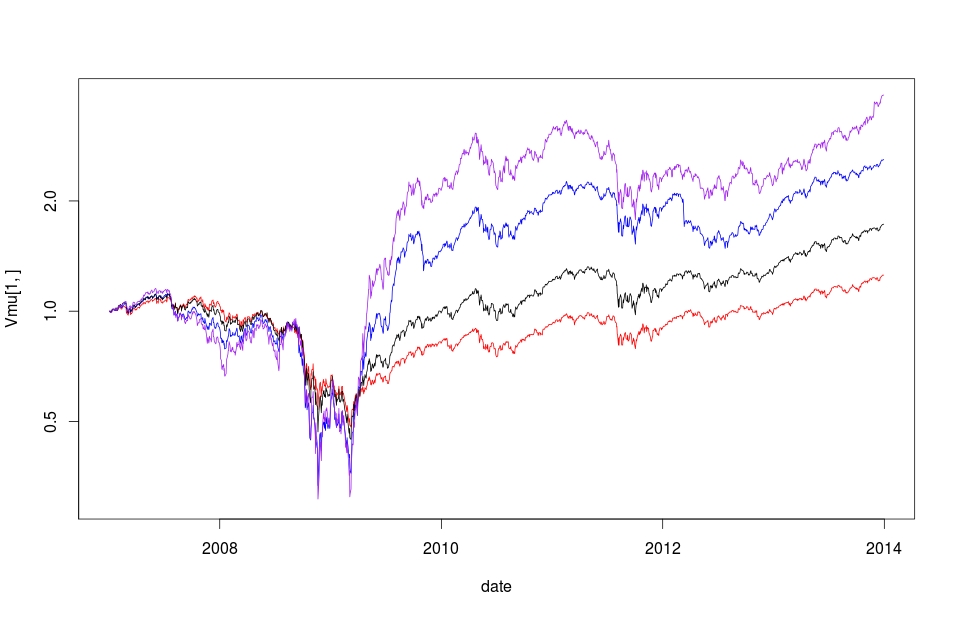}
\caption{Frictionless evolution of wealth processes corresponding to the large-stock portfolio \textcolor{Purple}{$\mu^\#(\cdot)$} with $r=-4$ and $m=n-10$, $\pi(\cdot)$ with \textcolor{red}{$p=1$} (market portfolio), $p=0$ (EWP), and \textcolor{blue}{$p=-1$}, the latter two with the adjustment that $\pi_i(t)=0$ whenever $\mu_i(t)=0$.} \label{largepic} 
\end{figure}

\paragraph{Implementation}
Using the same CRSP data set as described in Section \ref{secobs}, we implement the large-stock diversity-weighted portfolio with $r<0$ as defined in \eqref{largeDWPp} to compare its performance to the market and positive-parameter diversity-weighted portfolios. The resulting wealth process is visualised in Figure \ref{largepic}.

\subsubsection{Discussion} \label{disco!}
The assumption \eqref{nofail} says that no capitalisation can go to zero, i.e.~no company can fail. This is obviously not true in the real world, and we see that this is where the danger of diversity-weighted portfolios with $p<0$ lies: if $\exists k\in\{1,\ldots,n\}, t_k$ such that $\mu_k(t)\to0$ as $t\to t_k$, then $\lim_{t\to t_k}\pi_k(t)=1$. In other words, this portfolio will put all of the investor's wealth in the crashing stock, causing her to go bankrupt. %We think that this would have happened to an investor naively implementing $\pi(\cdot)$ with $p<0$, and the reason why this does not occur in Figure \ref{fig3} is that in those analyses, only stocks which are \emph{currently} in the S\&P 500 are used, i.e., we are selecting those companies which have survived the crash with hindsight.

Several ways in which this could be avoided in practical applications have been demonstrated in Sections \ref{remove} -- \ref{secrankedDWP}. Other possibilities include capping the maximal proportion invested in any single stock, so that only this proportion is lost at bankruptcy, or using a portfolio of the form $\pi_i(t) \propto \mu_i(t)^{k-1}e^{-\mu_i(t)/\theta}$ (see Figure \ref{gammapic}) or $\pi_i(t) \propto \mu_i(t)^\alpha (1-\mu_i(t))^\beta$, which liquidate positions in crashing stocks but have similar behaviour to the DWP with $p<0$ in the upper range of market weights.

\begin{figure} [htp]
\centering
\includegraphics[width=.9\textwidth]{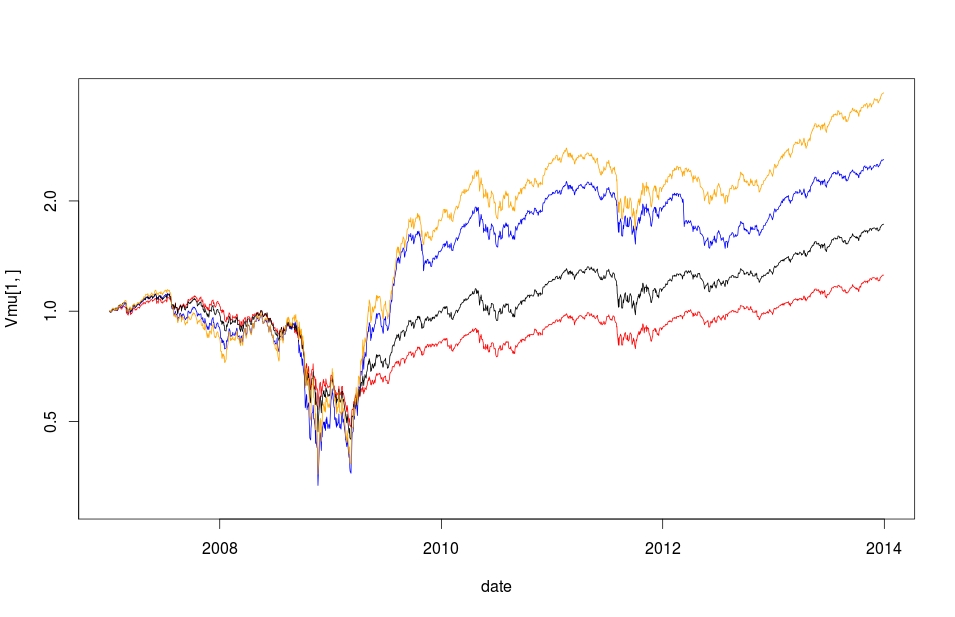}
\caption{Frictionless evolution of wealth processes corresponding to the `gamma-distributed' portfolio \textcolor{YellowOrange}{$\pi_i(t) \propto \mu_i(t)^{k-1}e^{-\mu_i(t)/\theta}$} with $k=1.5$ and $\theta=0.0001$, $\pi(\cdot)$ with \textcolor{red}{$p=1$} (market portfolio), \textcolor{blue}{$p=-1$}, and $p=0$ (EWP), all with the adjustment that $\pi_i(t)=0$ whenever $\mu_i(t)=0$.} \label{gammapic} 
\end{figure}

One might be tempted to apply the technique used in Section \ref{remove} using other stopping times, each representing the first time that a certain condition fails (these could be non-degeneracy, bounded variance, or even the `sufficient intrinsic volatility' condition $\gamma^*_\mu(t)\geq0\ \forall t\geq0\ \mathbb{P}$-a.s.). This would allow one to prove that certain portfolios which are relative arbitrages under these conditions also beat the market without these conditions, if one follows them up to the corresponding stopping time of the condition failing. However, one must keep in mind that this may only be done using stopping times \emph{with respect to the filtration} $\mathbb{F}^\mathfrak{X}$, with $\mathfrak{X}(\cdot)=(X_1(\cdot),\ldots,X_n(\cdot))$ the vector of capitalisations; we assume the investor does not have any additional information besides the observed stock prices. Thus a portfolio of the form 
\begin{equation}\label{notpihat}
\bar{\pi}(t):= \begin{cases}
\pi(t) & t<\tau_\delta\wedge \tilde{\tau}_\epsilon\\
\mu(t) & t \geq \tau_\delta\wedge \tilde{\tau}_\epsilon,
\end{cases}
\end{equation}
with
\begin{equation}
\tilde{\tau}_\epsilon := \inf\{ t>0\ |\ \text{non-degeneracy with parameter } \varepsilon \text{ fails }\},
\end{equation}
is \emph{not} a predictable portfolio, and is therefore not allowed, since one cannot determine whether $\tilde{\tau}_\epsilon$ has occurred or not by only observing stock prices.

The above insight does give the following clue: in the quest for a relative arbitrage over arbitrarily short time horizons in sufficiently volatile markets, a major open problem, one could attempt to find a portfolio that has this desirable property under an additional observable assumption on the behaviour of $X_i(\cdot),i=1,\ldots,n$, and then use the construction \eqref{pihat} to make a relative arbitrage in the more general market where this assumption does not hold.

For this portfolio, we plan on researching the following:
\begin{itemize}
\item attempting to construct a short-term relative arbitrage using the technique of mirror portfolios as developed in \cite{fkk05}, or as in \cite{bf08},
\item extending our results to more general semimartingale market models, in the spirit of Section 7 of \cite{kard08},
\item investigating the validity and effect of applying the continuous-time theory in discrete time,
\item mixing the DWP with the EWP or market portfolio, in order to construct a relative arbitrage that holds under milder conditions than \eqref{nofail}.
\end{itemize}
For the latter we plan on using the results in Table \ref{table1}.

\section{Future research} \label{secdirections}
We present some ideas that we think would be worthwhile looking at in detail. Some of these were mentioned above, at the end of Section \ref{secme}.

\subsection{Optimal relative arbitrage and incorporation of information}\label{futinc}
We wish to attempt generalising the characterisation of optimal relative arbitrage in \cite{fernk10} to incomplete markets with possibly $\mathbb{F}\neq\mathbb{F}^W$. Michael Monoyios has suggested doing this through a `fictitious completion' of the market as first proposed in \cite{klsx91}, i.e.~ hedging the claim $X(T)$ by embedding the incomplete model into an unconstrained market. It would also be of interest to look for a characterisation of the optimal (long-only) portfolio or FGP. One possible way of conducting this study is by using the Monge-Kantorovich optimal transport briefly described in Section 4 of \cite{palw14}.

The large advantage of SPT is clearly its robustness, in the sense that it does not require parameter estimation. All of the portfolios proposed by the theory are implementable using only the current capitalisations of companies in the market. However, an investor might want to include certain beliefs, for instance on the terminal values and drifts of certain stock prices or because she has insider information, in selecting investment strategies --- this is not yet possible within the framework of SPT, and would be an interesting research direction. \cite{strong13} has made a first step in the direction of incorporating additional information, by defining an extension of FGPs which allows for the possibility of having the generating function depend on additional finite-variation arguments. Examples of information to be included in such arguments are  fundamental economic data and information extracted from Twitter feeds.
Strong does not show how the inclusion of such information would influence the resulting portfolios, or how one might go about optimising a relative arbitrage given certain information. In \cite{ct13} one possible approach to doing this optimisation is described, which is to change measure using a density which translates an investor's belief that a certain event (i.e.~stopping time) will not occur before some horizon $T$, thus changing the dynamics of the stock price. However, these measure changes will be impossible to do explicitly in general models, and only the one-dimensional case is treated in \cite{ct13}.

A more promising approach has been taken in \cite{palw13}, where the authors are able to find the optimal portfolio given a weight function translating beliefs or statistical information in some two-stock markets --- see Section \ref{2Dopt}. They only do this in two toy examples of market models, but it would be interesting to attempt to generalise this approach to higher-dimensional and less explicit market models. Given a weight function $w$ as in Section \ref{2Dopt}, one could for instance try to find the optimal strategy within a class of FGP portfolios, e.g.\! diversity-weighted portfolios. One could then investigate whether this bears any relation to the optimal relative arbitrage characterisation in \cite{fernk10}, see Section \ref{optRA}.

\subsection{Information theoretic approach}\label{futinfo}
Recently, Pal and Wong have developed an alternative approach to studying portfolio performance in \cite{palw13}, which is inspired by information theory and completely model-independent (see, for instance, \cite{coverbook12}). They derive `master equations' for general portfolios in both discrete and continuous time. In discrete time, the market-relative performance of a portfolio $\pi(\cdot)$ is given as
\begin{align} \label{heypal}
\log \bigg(\frac{V^\pi(T)}{V^\mu(T)}\bigg) &= \sum_{t=0}^{T-1} \gamma^*_\pi(t) + H(\pi(0)|\mu(0)) - H(\pi(T)|\mu(T))\nonumber\\ &\quad + \sum_{t=0}^{T-1}\Big( H(\pi(t+1)|\mu(t+1))-H(\pi(t)|\mu(t+1)) \Big),
\end{align}
where $H:\Delta^n_+\times \Delta^n_+ \to \mathbb{R}$ is the entropy of a probability distribution $\pi$ with respect to a distribution $\mu$, defined as $$H(\pi|\mu)=\sum_{i=1}^n \pi_i \log \frac{\pi_i}{\mu_i},$$
and $\gamma^*_\pi(\cdot)$ is a discrete analogue of the excess growth rate, referred to by the authors as the `free energy'. 
That is, Pal and Wong interpret portfolios as discrete probability distributions over $n$ atoms, with $n$ the number of stocks in the market.\footnote{This interpretation might be extended when $n\to\infty$ in the large-market limit, giving an interpretation of large-market portfolios.}
The authors take this further in \cite{palw14}, where they show how the problem of finding RAs given some property of the market can be approached as a Monge-Kantorovich optimal transport problem. It is of great interest to further develop this approach. %However, the problem with this approach is that it requires the law of the market weights to be given, which in general is not true --- one has much weaker information on the properties of the market weights, such as diversity. 

Note that for the EWP $\pi(\cdot)\equiv 1/n$, the last term in \eqref{heypal} is zero, and thus the discrete-time performance of the EWP with respect to the market can be decomposed into its cumulative free energy, or excess growth rate, which is monotonically increasing, and the negative change in entropy:
\begin{equation}
\log \bigg(\frac{V^\pi(T)}{V^\mu(T)}\bigg)=\sum_{t=0}^{T-1}\underbrace{\gamma^*_\pi(t)}_{\geq0} + \frac{1}{n}\sum_{i=1}^n \log \frac{\mu_i(T)}{\mu_i(0)}\geq D(\mu(T))- D(\mu(0)).
\end{equation}
Here, $D:\Delta^n_+\to \mathbb{R}$ defined by $x\mapsto 1/n \sum_{i=1}^n \log x_i$ is a measure of diversity, as defined in Definition 3.4.1 of \cite{f02}; i.e., it is $C^2$, symmetric and concave.\footnote{Note that $D$ as defined above does not match the definition in \cite{f02} exactly, as it is not positive; however, this is not important since we are only considering the \emph{change} in the value of $D$, i.e.~the decrease or increase in diversity.}
Since $D$ is also increasing, this proves the following interesting implication:
$$\text{market diversity increases}\ \Rightarrow\ \text{EWP beats the market}$$
Figure \ref{ewpfig} shows that despite a \emph{decrease} in market diversity over the observed period, the EWP still beat the market thanks to the cumulative free energy term, which is due to rebalancing. Here, $D$ has been extended to $\Delta^n$ by setting $D(x)=0\ \forall x\in\mathbb{R}^n$ s.t. $\prod_i x_i=0$.

\begin{figure} [htp]
\centering
\includegraphics[width=.9\textwidth]{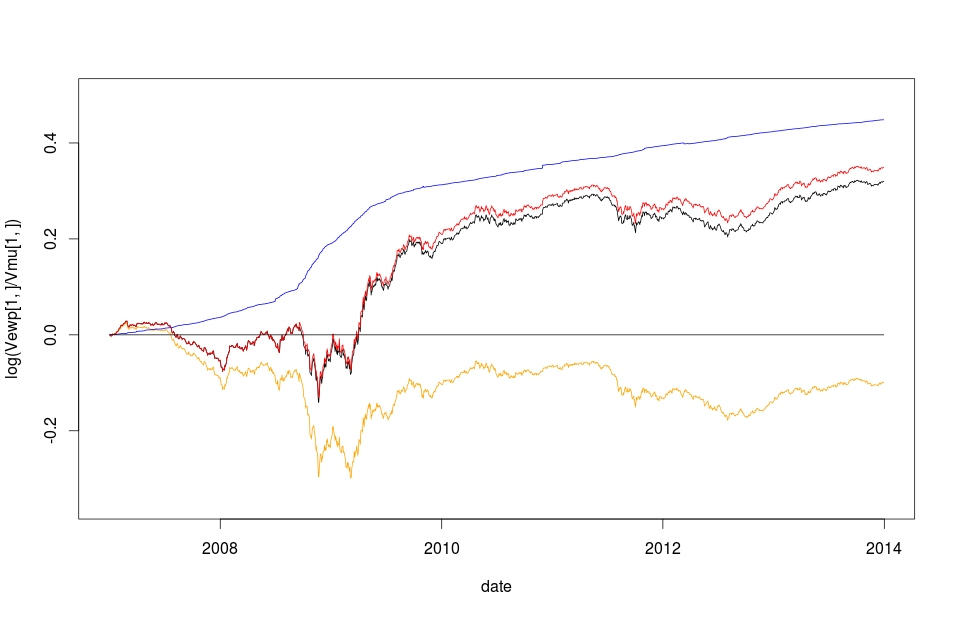}
\caption{Decomposition of frictionless log-relative performance of EWP with respect to the market (in black), into \textcolor{blue}{cumulative excess growth rate} and \textcolor{YellowOrange}{change in diversity}. The sum of these terms is displayed in \textcolor{red}{red}, and is different from actual wealth due to stocks crashing.}  \label{ewpfig}
\end{figure}

The information-theoretic approach to portfolio analysis in \cite{palw13} is entirely new, although it shares some features with Cover's construction of his universal portfolio --- see \cite{cover91} and \cite{jam92}. It would be interesting to see how this approach is related to Fernholz's approach in SPT, and whether this might be used to find any relation between the universal portfolio and FGPs, thus resolving the open question posed in Remark 11.7 in \cite{fk09}. However, since Cover's universal portfolio depends on the entire stock returns history, and FGPs only depend on the current market configuration, it remains unclear to the author what such a relation might look like. Perhaps, however, the idea of `selecting the best portfolio with hindsight' could be applied to FGPs, thus modifying Cover's algorithm by taking a performance-weighted average not over constant-proportion portfolios but over FGPs.
%Furthermore, an interesting idea is to research how the relative entropy function $H$ can be used as a way of measuring the `distance' between portfolios, and how this distance might be applied in a rebalancing criterion.

\subsection{Implementation and performance in real markets} \label{realimpl}
Some imperfections have been neglected in SPT thus far, the most notable of which is the presence of transaction costs, which largely limits the implementability of FGPs in a continuous setting since these are typically of infinite variation. The only existing theoretical work regarding this is in Section 6.3 in \cite{f02}, where Fernholz makes a rough approximation of the turnover of a diversity-weighted portfolio $\pi(\cdot)$ with parameter $p\in(0,1)$ when it is rebalanced every time the portfolio weights differ from the desired weights by a fixed multiple $\delta$ of the target weights. He finds that the total turnover up to time $t$ is
\begin{equation}
\frac{2}{\delta}(1-p)^2\gamma^*_\pi(t).
\end{equation}
This approximation is only a first attempt at quantifying the total amount of trading due to rebalancing, and many assumptions are made in order to make it. It would be interesting to try and improve this approximation, by making fewer assumptions, and by doing it for more general portfolios and rebalancing criteria. Ideally, we would like to develop a theory of transaction costs (and possibly optimisation) in the context of SPT.

Hardly any of the theoretical results in SPT have been tested using real market data. In Chapter 6 of \cite{f02}, the author uses historical stock price data over an 80-year window to compute the wealth processes of an investor who would have implemented the diversity- and entropy-weighted portfolios, showing that they would outperform the market significantly. However, this simple calculation ignores crucial imperfections such as transaction costs, the indivisibility of shares and market regulations --- with the introduction of proportional transaction costs, an investor naively following an SPT strategy would go bankrupt due to continuous rebalancing. In order to be able to study how the computed strategies could be implemented in real markets, by for instance a hedge fund, and how they compare to Cover's universal portfolio\footnote{This topic has only slightly been touched upon, in \cite{ipbkf11}.}, the portfolios suggested in \cite{palw13}, the equally-weighted portfolio, and the num\'eraire portfolio (see Section \ref{secoptarb}), it would be interesting to use real-world data and incorporate realistic market frictions --- we have made a first attempt at this in Section \ref{data}. Using data from real markets, for instance from the CRSP data set, one could also test the efficiency, in terms of trade-off between transaction costs and return, of different rebalancing frequencies or criteria (which comes down to choosing between different measures of distance between the traded and target portfolios).

Moreover, Michael Monoyios suggests investigating whether relative arbitrages are good from a utility point of view: even though they are not optimal in that they do not maximise expected utility, they do not require any drift or volatility estimation, unlike utility-optimal portfolios which \emph{do} depend explicitly on the drift and volatility of the stocks they invest in. One could try and quantify the amount of uncertainty required in a market for FGPs to do better than for instance the num\'eraire portfolio. Since it is typically very difficult to explicitly compute expected utility in the models considered in SPT (such as VSM),\footnote{Note that it is possible, by \cite{klsx91} and \cite{kk07}, to perform expected utility maximisation in a NUPBR market; an ELMM is not required.} one might have to do simulations or use real-world data in order to do this computation.

\subsection{Large markets}
In \cite{shk12} and \cite{shk13}, Shkolnikov studies the limiting behaviour of rank-based models and VSMs, respectively, when the number of stocks goes to infinity --- this problem was put forward in Remark 11.6 in \cite{fk09}. Although he is more interested in the resulting dynamics of the stocks in the large-market limit, we would like to see how these dynamics influence portfolio behaviour in such markets, and what a portfolio even means in this case. For instance, can one still construct relative arbitrages or long-term growth opportunities in large markets? Can these asymptotics give us any idea about how to invest in large markets, such as the American stock market? To answer these ideas, we would like to make use of the progress made by Hambly, Reisinger and others (see \cite{hamreis11} and \cite{pr13}) in studying large markets. An alternative approach might be the one in \cite{platrend12}, where Platen and Rendek directly apply the Law of Large Numbers and Central Limit Theorem to show that the equally-weighted portfolio converges to the num\'eraire portfolio when the number of market constituents tends to infinity.

\subsection{Others}
Other, more specific, research questions we would like to tackle include the following:
\begin{itemize}
\item Although of low priority, it would be nice to generalise the framework and results of SPT to semimartingale market models where there is also a jump component. One would have to think of analogues of the diversity and sufficient intrinsic volatility conditions, derive the theory of FGPs in this general setting, and see whether for instance the diversity- and entropy-weighted portfolios are still relative arbitrages and over what time horizons. Constantinos Kardaras has mentioned to the author that this can be done.
\item In a personal communication, Samuel Cohen has made an argument against the modelling approach of SPT, saying he finds it undesirable to make almost sure assumptions (such as diversity, or sufficient intrinsic volatility) when modelling stock prices, especially on events of very small probability, which he says can lead to perverse models. He suggests weakening such assumptions by having them hold with \emph{high} probability instead, or by making different but similar assumptions. This would probably no longer lead to almost sure comparisons as with relative arbitrages, but to statistical arbitrages. Michael Monoyios says this would relate to \cite{palshk10}'s study of the concentration of measure. We note that Section 5 of \cite{palw14} makes some remarks on statistical arbitrage, attempting to explain observations in \cite{fm07}.
\item Does relative arbitrage exist over finite (or even arbitrarily short) time horizons in (certain) rank-based models (e.g.~Atlas)? One would have to derive the dynamics of portfolios in such models first.
\item \cite{fk09} suggest computing the maximal relative return and shortest time to beat the market by a certain factor when one is only allowed to use long-only portfolios instead of general trading strategies.
\item \cite{fk05} suggest computing the shortest time to beat the market by a certain factor in VSMs with zero growth rates ($\alpha=0$).
\item We have looked at a few approaches to the major open question of constructing a short-term relative arbitrage in sufficiently volatile markets (see Section \ref{MOQ}). We plan to write up our ideas and decide whether this is a feasible project.
\item Idem for the condition that the \emph{generalised} growth rate $\gamma^*_{\pi,p}(\cdot)$ is bounded from below, as posed in \cite[p. 452]{bf08}.
\item Could one construct short-term RAs using the mirror-image method, but constructing a `seed' using the diversity-weighted portfolio with $p>1$ instead of $e_1$?
\item How valid is it to apply the continuous-time theory of SPT in discrete time?
\item Study the capital-distribution curve at a shorter time-scale; does stability prevail? Also during crashes?
\end{itemize}

\section*{Acknowledgements}
\addcontentsline{toc}{section}{Acknowledgements}
The author would first of all like to thank his supervisor, Michael Monoyios, for proofreading this work and providing feedback. He would also like to express his gratitude to Johannes Ruf and Christoph Reisinger for several fruitful discussions and helpful comments, and to Ioannis Karatzas for his involved correspondence and suggestions regarding the diversity-weighted portfolio with negative parameter. Finally, Thibaut Lienart and Samuel Cohen have made useful contributions to the empirical study.

\bibliographystyle{alpha}
\bibliography{sptrefs}

\end{document}